\lstdefinestyle{customJ}{
  belowcaptionskip=1\baselineskip,
  breaklines=true,
  frame=L,
  xleftmargin=\parindent,
  language=Java,
  showstringspaces=false,
  basicstyle=\footnotesize\ttfamily,
  keywordstyle=\bfseries\color{green!40!black},
  commentstyle=\itshape\color{purple!40!black},
  identifierstyle=\color{blue},
  stringstyle=\color{orange},
}
\tikzstyle{state}+=[minimum size = 6mm, inner sep=0,outer sep=1]
\tikzset{->,>=stealth'}
\definecolor{limeade}{RGB}{102, 153, 0}
\definecolor{crimson}{RGB}{227,23,10}
\definecolor{seafoamgreen}{RGB}{169,229,187}
\definecolor{bananamania}{RGB}{252,246,177}
\definecolor{saffron}{RGB}{247,179,43}
\definecolor{indianred}{RGB}{201,93,99}
\newcommand{\myblue}{blue!80!white!50}
\newcommand{\myred}{red!20!white!90}
\renewcommand{\circ}{\bigcirc}
\renewcommand{\Box}{\square}
\newcommand{\para}[1]{
	\textbf{#1}}
\newcommand{\todogeneric}[1]{{\color{red} #1}}
\newcommand{\todo}[1]{\todogeneric{\textbf{TODO:} #1}}
\newcolumntype{L}{X}
\newcolumntype{R}{>{\raggedleft\arraybackslash}X}
\newcolumntype{C}{>{\centering\arraybackslash}X}
\DeclarePairedDelimiter{\delimabs}{\lvert}{\rvert}
\DeclarePairedDelimiter{\delimnorm}{\lVert}{\rVert}
\DeclarePairedDelimiter{\delimpospart}{\lgroup}{\rgroup^+}
\DeclarePairedDelimiterX{\deliminner}[2]{\lange}{\rangle}{#1, #2}
\DeclarePairedDelimiter{\delimcardinality}{\lvert}{\rvert}
\DeclarePairedDelimiter{\delimset}{\lbrace}{\rbrace}
\DeclarePairedDelimiter{\delimtuple}{(}{)}
\DeclarePairedDelimiter{\delimlistt}{[}{]}
\DeclarePairedDelimiter{\delimfun}{(}{)}
\NewDocumentCommand{\abs}{sm}{\IfBooleanTF{#1}{\delimabs{#2}}{\delimabs*{#2}}}
\NewDocumentCommand{\norm}{sm}{\IfBooleanTF{#1}{\delimnorm{#2}}{\delimnorm*{#2}}}
\NewDocumentCommand{\pospart}{sm}{\IfBooleanTF{#1}{\delimpospart{#2}}{\delimpospart*{#2}}}
\NewDocumentCommand{\negpart}{sm}{\IfBooleanTF{#1}{\delimnetpart{#2}}{\delimnetpart*{#2}}}
\NewDocumentCommand{\inner}{sm}{\IfBooleanTF{#1}{\deliminner{#2}}{\deliminner*{#2}}}
\NewDocumentCommand{\cardinality}{sm}{\IfBooleanTF{#1}{\delimcardinality{#2}}{\delimcardinality*{#2}}}
\NewDocumentCommand{\set}{sm}{\IfBooleanTF{#1}{\delimset*{#2}}{\delimset{#2}}}
\NewDocumentCommand{\tuple}{sm}{\IfBooleanTF{#1}{\delimtuple{#2}}{\delimtuple*{#2}}}
\NewDocumentCommand{\closure}{sm}{\IfBooleanTF{#1}{\delimclosure{#2}}{\delimclosure*{#2}}}
\NewDocumentCommand{\listt}{sm}{\IfBooleanTF{#1}{\delimlistt{#2}}{\delimlistt*{#2}}}
\NewDocumentCommand{\fun}{smm}{\IfBooleanTF{#1}{{#2}\delimfun{#3}}{{#2}\delimfun*{#3}}}
\NewDocumentCommand{\funMacro}{smm}{\IfNoValueTF{#3}{#1}{\fun{#2}{#3}}}
\DeclareMathOperator{\ExistsOp}{\exists}
\DeclareMathOperator{\ForallOp}{\forall}
\NewDocumentCommand{\Exists}{gg}{\IfNoValueTF{#1}{\ExistsOp}{\ExistsOp #1. \, #2}}
\NewDocumentCommand{\Forall}{gg}{\IfNoValueTF{#1}{\ForallOp}{\ForallOp #1. \, #2}}
\newcommand{\intersectionSym}{\cap}
\newcommand{\intersectionBin}{\mathbin{\intersectionSym}}
\newcommand{\UnionSym}{\bigcup}
\newcommand{\intersection}{\intersectionBin}
\newcommand{\Union}{\UnionSym}
\newcommand{\Naturals}{\mathbb{N}}
\newcommand{\Reals}{\mathbb{R}}
\newcommand{\Distributions}{\mathcal{D}}
\NewDocumentCommand{\convto}{G{}}{\xrightarrow{#1}}
\NewDocumentCommand{\weakto}{G{}}{\xrightharpoonup{#1}}
\NewDocumentCommand{\weakstarto}{G{}}{\xrightharpoonup[*]{#1}}
 \DeclareDocumentCommand{\diff}{D<>{} O{}  D(){}}{\Delta_{#1}^{#2}\ifthenelse{\isempty{#3}}{}{(#3)}}
\DeclareMathOperator*{\argmin}{arg\, min}
\DeclareDocumentCommand{\post}{D<>{} O{} D(){}}{\mathsf{Post}_{#1}^{#2}\ifthenelse{\isempty{#3}}{}{(#3)}}
\DeclareMathOperator{\leaves}{\mathbin{\mathop{\sf exits}}}
\newcommand{\eqdef}{\vcentcolon=}
\newcommand{\reach}{\Diamond}
\NewDocumentCommand{\distributions}{d()}{\funMacro{\mathcal{D}}{#1}}
\newcommand{\bellman}{\mathfrak B}
\newcommand{\MECs}{\mathsf{MEC}}
\DeclareDocumentCommand{\val}{D<>{} O{}  D(){} t'}{\achievable_{#1}^{\IfBooleanTF{#4}{\prime #2}{#2}}\ifthenelse{\isempty{#3}}{}{(#3)}}
\DeclareDocumentCommand{\ub}{D<>{} O{}  D(){} t'}{\mathsf{U}_{#1}^{\IfBooleanTF{#4}{\prime #2}{#2}}\ifthenelse{\isempty{#3}}{}{(#3)}}
\DeclareDocumentCommand{\gub}{D<>{} O{}  D(){} t'}{\mathsf{G}_{#1}^{\IfBooleanTF{#4}{\prime #2}{#2}}\ifthenelse{\isempty{#3}}{}{(#3)}}
\DeclareDocumentCommand{\lb}{D<>{} O{}  D(){} t'}{\mathsf{L}_{#1}^{\IfBooleanTF{#4}{\prime #2}{#2}}\ifthenelse{\isempty{#3}}{}{(#3)}}
\DeclareDocumentCommand{\game}{D<>{} O{} D(){} t'}{\mathsf{G}_{#1}^{\IfBooleanTF{#4}{\prime}{}#2}\ifthenelse{\isempty{#3}}{}{(#3)}}
\DeclareDocumentCommand{\transition}{D<>{} O{} D(){}}{\rightarrow_{#1}^{#2}\ifthenelse{\isempty{#3}}{}{(#3)}}
\newcommand{\SG}{\textrm{SG}}
\newcommand{\SGs}{\textrm{SGs}}
\DeclareDocumentCommand{\G}{D<>{} O{} t' D(){}}{\mathsf{G}_{#1}^{\IfBooleanTF{#3}{\prime}{}#2}\ifthenelse{\isempty{#4}}{}{(#4)}}
\DeclareDocumentCommand{\exGame}{D<>{} O{} t'  D(){}}{\mathsf{G}_{#1}^{\IfBooleanTF{#3}{\prime}{}#2}\ifthenelse{\isempty{#4}}{}{(#4)}=(\states<#1>[\IfBooleanTF{#3}{\prime}{}#2],\states<\Box\ifthenelse{\isempty{#1}}{}{,#1}>[\IfBooleanTF{#3}{\prime}{}#2],\states<\circ\ifthenelse{\isempty{#1}}{}{,#1}>[\IfBooleanTF{#3}{\prime}{}#2],\istate<#1>[\IfBooleanTF{#3}{\prime}{}#2],\actions<#1>[\IfBooleanTF{#3}{\prime}{}#2],\Av<#1>[\IfBooleanTF{#3}{\prime}{}#2],\trans<#1>[\IfBooleanTF{#3}{\prime}{}#2])}
\DeclareDocumentCommand{\states}{D<>{} O{}  t'}{\mathit{S}_{#1}^{\IfBooleanTF{#3}{\prime~#2}{#2}}}
\DeclareDocumentCommand{\state}{D<>{} O{}  t'}{\mathsf{s}_{#1}^{\IfBooleanTF{#3}{\prime #2}{#2}}}
\DeclareDocumentCommand{\istate}{D<>{} O{} t'}{\mathsf{s}_{0\ifthenelse{\isempty{#1}}{}{,#1}}^{\IfBooleanTF{#3}{\prime~#2}{#2}}}
\newcommand{\initstate}{\state<0>}
\DeclareDocumentCommand{\trans}{D<>{} O{} t' D(){} D(){}}{\delta_{#1}^{\IfBooleanTF{#3}{\prime}{}#2}\ifthenelse{\isempty{#4}}{}{(#4)}\ifthenelse{\isempty{#5}}{}{(#5)}}
\DeclareDocumentCommand{\Av}{D<>{} O{} t' D(){}}{\mathsf{Av}_{#1}^{\IfBooleanTF{#3}{\prime}{}#2}\ifthenelse{\isempty{#4}}{}{(#4)}}
\DeclareDocumentCommand{\F}{D<>{} O{} t' D(){}}{\mathsf{F}_{#1}^{\IfBooleanTF{#3}{\prime}{}#2}\ifthenelse{\isempty{#4}}{}{(#4)}}
\newcommand{\Path}{\rho}
\DeclareDocumentCommand{\Path}{D<>{} O{} t' D(){}}{\path<#1>[#2]\IfBooleanTF{#3}{'}{}(#4)}
\DeclareDocumentCommand{\path}{D<>{} O{} t' D(){}}{\rho_{#1}^{\IfBooleanTF{#3}{\prime}{}#2}\ifthenelse{\isempty{#4}}{}{(#4)}}
\newcommand{\fpath}{\mathsf{w}}
\DeclareDocumentCommand{\Paths}{D<>{} O{} t' D(){}}{\Omega_{#1}^{\IfBooleanTF{#3}{\prime}{}#2}\ifthenelse{\isempty{#4}}{}{(#4)}}
\newcommand{\straa}{\sigma}
\newcommand{\strab}{\tau}
\DeclareDocumentCommand{\strategy}{D<>{} O{} D(){}
  t*}{{\IfBooleanTF{#4}{\tau}{\sigma}}_{#1}^{#2}\ifthenelse{\isempty{#3}}{}{(#3)}}
\DeclareDocumentCommand{\actions}{D<>{} O{} t' d()}{{\IfNoValueTF{#4}{\mathsf{A}}{\fun{\mathsf{A}}{#4}}}_{#1}^{\IfBooleanTF{#3}{\prime~#2}{#2}}}
\DeclareDocumentCommand{\action}{D<>{} O{} t'}{\mathsf{a}_{#1}^{\IfBooleanTF{#3}{\prime#2}{#2}}}
\newcommand{\mec}{\mathsf{MEC}}
\newcommand{\scc}{\mathsf{SCC}}
\newcommand{\EC} {\mathsf{EC}}
\newcommand{\bscc}{\mathsf{BSCC}}
\newcommand{\pr}{\mathbb P}
\DeclareDocumentCommand{\target}{D<>{} O{}
  t'}{\mathfrak{1}_{#1}^{\IfBooleanTF{#3}{\prime}{}#2}}
\newcommand{\targetset}{T}
\newcommand{\targetsets}{\mathcal T}
\newcommand{\pareto}{\mathfrak P}
\newcommand{\achievable}{\mathfrak A}
\DeclareDocumentCommand{\fail}{D<>{} O{} t'}{\mathsf{\bot}_{#1}^{\IfBooleanTF{#3}{\prime}{}#2}}
\newcommand{\UPDATE}{\mathsf{UPDATE}}
\DeclareDocumentCommand{\CEC}{D<>{} O{} D(){}
  t*}{\IfBooleanTF{#4}{\simple~}{}\ifthenelse{\isempty{#3}}{}{#3\textit{-}}\textrm{BEC}_{#1}^{#2}}
\newcommand{\simple}{simple}
\NewDocumentCommand{\exit}{D<>{}D[]{}D(){}}{\mathsf{BE}_{#1}^{#2}\ifthenelse{\isempty{#3}}{}{(#3)}}
\newcommand{\MSEC}{MSEC}
\newcommand{\SEC}{SEC}
\newcommand{\cl}{\mathsf{clos}}
\newcommand{\drawcirc}{\node[draw,circle,minimum size=.7cm, outer sep=1pt]}
\newcommand{\drawbox}{\node[draw,rectangle,minimum size=.7cm, outer sep=1pt]}
\newcommand{\drawdummy}{\node[minimum size=0,inner sep=0]}
\newcommand{\drawnum}{\node[minimum size=.4cm,inner sep=1pt]}
\newcommand{\sink}{\mathfrak 0}
\newcommand{\directions}{\mathsf D}
\newcommand{\conv}{conv}
\newcommand{\dwc}{\mathit{dwc}}
\newcommand{\deflate}{\mathsf{DEFLATE}}
\newcommand{\DEFLATE}{\deflate}
\newcommand{\dif}{c}
\newcommand{\region}{region}
\newcommand{\regions}{\mathcal{R}}
\newcommand{\rSEC}{\SEC}
\newcommand{\dir}{\mathbf d}
\newcommand{\exits}{\mathsf{Exits}}
\newcommand{\doSECs}{\mathsf{DEFLATE\_\SEC{}s}}
\newcommand{\FINDSECS}{\mathsf{FIND\_\SEC{}s}}
\newcommand{\GETREGIONS}{\mathsf{GET\_REGIONS}}
\newcommand{\mop}{\overline{\bellman}}
\newcommand{\consistent}{consistent}
\newcommand{\ifarxivelse}[2]{\iftoggle{arxiv}{#1}{#2}}
\begin{document}

\title{Approximating Values of Gen\-er\-al\-ized-Reachability Stochastic Games}

\author{Pranav Ashok}
\affiliation{          
	\institution{Technical University of Munich}            
	\country{Germany}                    
}
\email{ashok@in.tum.de}

\author{Krishnendu Chatterjee}
\affiliation{           
	\institution{IST Austria}            
	\country{Austria}                    
}
\email{Krishnendu.Chatterjee@ist.ac.at}

\author{Jan K\v{r}et{\'i}nsk{\'y}}
\affiliation{           
	\institution{Technical University of Munich}            
	\country{Germany}                    
}
\email{jan.kretinsky@in.tum.de}

\author{Maximilian Weininger}
\affiliation{             
	\institution{Technical University of Munich}            
	\country{Germany}                   
}
\email{maxi.weininger@tum.de}

\author{Tobias Winkler}
\affiliation{             
	\institution{RWTH Aachen University}            
	\country{Germany}                   
}
\email{tobias.winkler@cs.rwth-aachen.de}

\renewcommand{\shortauthors}{P. Ashok, K. Chatterjee, J. K\v{r}et{\'i}nsk{\'y}, M. Weininger, T. Winkler} 



\begin{abstract}
  Simple stochastic games are turn-based 2\textonehalf-player games with a reachability objective. The basic question asks whether one player can ensure reaching a given target with at least a given probability. A natural extension is games with a conjunction of such conditions as objective. Despite a plethora of recent results on the analysis of systems with multiple objectives, the decidability of this basic problem remains open. In this paper, we present an algorithm approximating the Pareto frontier of the achievable values to a given precision. Moreover, it is an anytime algorithm, meaning it can be stopped at any time returning the current approximation and its error bound.
\end{abstract} 

\begin{CCSXML}
	<ccs2012>
	<concept>
	<concept_id>10003752.10010070.10010099.10010100</concept_id>
	<concept_desc>Theory of computation~Algorithmic game theory</concept_desc>
	<concept_significance>300</concept_significance>
	</concept>
	<concept>
	<concept_id>10003752.10003790.10011192</concept_id>
	<concept_desc>Theory of computation~Verification by model checking</concept_desc>
	<concept_significance>500</concept_significance>
	</concept>
	<concept>
	<concept_id>10002950.10003648</concept_id>
	<concept_desc>Mathematics of computing~Probability and statistics</concept_desc>
	<concept_significance>300</concept_significance>
	</concept>
	</ccs2012>
\end{CCSXML}

\ccsdesc[300]{Theory of computation~Algorithmic game theory}
\ccsdesc[500]{Theory of computation~Verification by model checking}
\ccsdesc[300]{Mathematics of computing~Probability and statistics}

\keywords{Stochastic games; Multiple Reachability Objectives; Pareto frontier; Anytime algorithm}

\maketitle

\section{Introduction} \label{sec:intro}

\para{Simple stochastic games} \cite{condonComplexity} are zero-sum turn-based stochastic games (SG) with two players, which we call Maximizer and Minimizer. The objective of player Maximizer is to maximize the probability of reaching a given target set of states, while player Minimizer aims at the opposite.
The basic decision problem is to determine whether there is a strategy for Maximizer achieving at least a given probability threshold.
These games are interesting theoretically: the problem is known to be in NP$\,\cap\,$co-NP, but whether it belongs to P is a major and long-standing problem. 
Moreover, several other important game problems such as parity games reduce to it \cite{DBLP:journals/corr/abs-1106-1232}.
Besides, they are also practically relevant: they can serve as a tool for synthesis with safety/co-safety objectives in environments with stochastic uncertainty.

\para{Multi-objective stochastic systems} have attracted a lot of attention recently, both SG and the special case with only one player (Markov decision processes, MDP \cite{Puterman}).
They model and enable optimization with respect to conflicting goals, where a desired trade-off is to be considered. 
A natural multi-dimensional generalization of the reachability threshold constraint $\pr[\reach T]\geq t$ is a conjunction $\bigwedge_i\pr[\reach T_i]\geq t_i$ giving rise to \emph{generalized-reachability} (or \emph{multiple-reachability}) \emph{stochastic games}, similar to e.g. generalized mean-payoff SG \cite{DBLP:conf/tacas/BassetKTW15,DBLP:conf/lics/Chatterjee016}. 
The problem is then to decide whether a given vector of thresholds can be achieved by Maximizer.
Note that these games are not determined \cite{DBLP:conf/mfcs/ChenFKSW13}, and in this paper we consider the lower-value (worst-case) problem formulation, i.e. finding a strategy of Maximizer that can guarantee the vector no matter what Minimizer does.

The main results established in the literature are as follows.
For MDP, while generalized mean-payoff can be solved in P \cite{DBLP:conf/fsttcs/Chatterjee07a,DBLP:journals/corr/abs-1104-3489}, generalized reachability is PSPACE-hard and can be solved in exponential time \cite{DBLP:conf/cav/RandourRS15}. 
For SG, generalized mean-payoff has been solved for almost-sure conditions only \cite{DBLP:conf/tacas/BassetKTW15,DBLP:conf/lics/Chatterjee016} and approximation of the values for generalized mean-payoff as well as generalized reachability are still open.
The generalized-reachability SG problem is only known to be decidable for the subclass of \emph{stopping} SGs with a \emph{2-dimensional} objective \cite{DBLP:conf/atva/BrenguierF16}
(an SG is stopping if under any strategies a designated set of sinks is reached almost surely). 

The main open question for genera\-lized-reachability SG is decidability.
There are several important subgoals towards this problem:
From the decidability perspective, stopping SG with more than 2-dimensional objectives, or general SG with 2- (or more) dimensional objectives have been open.
Moreover, the same holds even for $\varepsilon$-approximability.
From the algorithmic perspective,  \cite{DBLP:conf/mfcs/ChenFKSW13} provides a converging sequence of \emph{lower bounds} on the \emph{Pareto frontier}, i.e. the set of achievable vectors that are pointwise-maximal (in other words, vectors which cannot be improved in one dimension without sacrificing another one). 
It is open whether converging upper bounds can also be computed.
Since such bounds would imply $\varepsilon$-approximability, this open question is the most imminent.

\para{Our contribution} in this paper is twofold. Firstly, we prove the following theorem: 
\begin{quote}
\para{Theorem:}\emph{
	The set of all achievable vectors in an \emph{arbitrary} (not necessarily stopping) SG with generalized-reachability objective of \emph{any dimension} can be effectively approximated for any given precision $\varepsilon$.
}
\end{quote}
Secondly, we provide a value-iteration algorithm that approximates the Pareto frontier by giving converging lower \emph{and upper} bounds.
Consequently, it becomes an \emph{anytime algorithm}, providing the current approximation and its error at each moment of the computation.
Thus our first contribution resolves the $\varepsilon$-approximability open question and our second contribution resolves the algorithmic open question; both results are for arbitrary SG with generalized-reachability objectives of any dimension.

\para{Convergent upper bounds on the value} are known to be notoriously difficult to achieve.
Until recently, the default engine for analysis in the most used probabilistic model checker PRISM \cite{prism} and PRISM-Games \cite{PRISM-games} used \emph{value iteration}, e.g.\ \cite{Puterman}, which converges to the value from below, but because of the used stopping criteria the results could be arbitrarily wrong~\cite{BVI}.
For a solution with a given precision, one could use linear programming instead, which however, does not scale well for MDP, and, more importantly, does not work at all for SG~\cite{condonAlgSSG}.
For MDP, value iteration has been extended~\cite{atva,BVI} so that it provides not only the under-approximating convergent sequence, but also an over-approxi\-mating one, calling the technique ``bounded value iteration'' (due to \cite{BRTDP}) or ``interval iteration'', respectively.
Its essence is to collapse \emph{maximal end components} (MECs) of the MDP, thereby not changing the values; on MDP without MECs the over-approximating sequence converges to the actual value of the (collapsed as well as original) MDP.
This technique was further extended to MDP with mean-payoff objective \cite{cav17}.
In contrast, for SG one cannot collapse MECs since they account for non-trivial alternating structure, as opposed to MDP, where any desired action exiting the MEC can be taken almost surely.
Therefore, a more complex procedure has been proposed for SG \cite{cav18}: Depending on the current under-approximation, problematic parts of MECs are dynamically identified and their over-approximation is lowered to over-approximations of certain actions exiting the MEC, as exemplified and explained later. 
We lift this procedure to general dimensions.
Note that we do not give any convergence rate for our algorithm, because it is not possible to extend the argument of the single-dimensional case~\cite{visurvey} in a straightforward manner. This argument requires the lowest probability occurring in a play to be bounded. However, in the multi-dimensional setting strategies may need infinite memory and hence there is no lower bound on the probability that a strategy assigns to actions \cite[Appendix B1, full version]{cfk13}. Giving bounds on the convergence is an interesting direction of future work.

This paper combines and extends several techniques from literature to obtain the corresponding result for the multi-dimensional case:
\begin{itemize}
	\item Firstly, we use the Bellman operator extended to down-ward-closed sets (instead of just real values) \cite{DBLP:conf/mfcs/ChenFKSW13}, allowing for value iteration in the multi-dimensional setting.
	\item Secondly, we exploit the technique of \cite{cav18}, which in the single-dimensional setting repetitively identifies the problematic parts of MECs hindering convergence.
	\item Thirdly, in order to apply this technique, we reduce the multi-dimensional problem to a continuum of single-dimensional problems, by splitting the Pareto front into directions, similar to \cite{DBLP:conf/atva/ForejtKP12}.
	\item Fourthly, we group the single-dimensional problems into finitely many \emph{regions}, similar in spirit to regions of timed automata \cite{DBLP:journals/tcs/AlurD94} since they are essentially given by orderings of the approximate values of certain actions.
	Nevertheless, due to the projective geometry of the problem, we need to work slightly more generally with simplicial complexes, see e.g.\ \cite{hatcher2002algebraic}.
\end{itemize}
The main technical difficulty is to identify (i) the parts of MECs with an unjustified too high upper bound and (ii) the value to which it should be decreased in each step.
Both of these depend on the desired trade-off between the targets.
As we compute the whole set of achievable vectors, we need to consider all possible trade-offs, which are, moreover, uncountably many.

\smallskip

\para{Related work.}
Already for a decade, MDP have been extensively studied in the setting of multiple objectives.
Multiple objectives have been considered both qualitative, such as reachability and LTL~\cite{DBLP:journals/lmcs/EtessamiKVY08}, as well as quantitative, such as mean payoff~\cite{DBLP:conf/fsttcs/Chatterjee07a,DBLP:journals/corr/abs-1104-3489}, discounted sum~\cite{DBLP:conf/lpar/ChatterjeeFW13}, or total reward~\cite{DBLP:conf/tacas/ForejtKNPQ11}.
The expectation has been combined with variance in~\cite{DBLP:conf/lics/BrazdilCFK13}.
Beside expectation queries, conjunctions of percentile (threshold) queries have been considered for various objectives \cite{FKR95,DBLP:journals/corr/abs-1104-3489,DBLP:journals/fmsd/RandourRS17,DBLP:journals/lmcs/ChatterjeeKK17}.
Further, for general Boolean combinations
for Markov chains with total reward, \cite{DBLP:conf/lics/HaaseKL17} approximates the value, while computability is still open.
In contrast, \cite{DBLP:conf/fossacs/Velner15} shows that Boolean combinations over mean payoff games become quickly undecidable.
For the specifics of the two-dimensional case and the interplay of the two objectives, see \cite{DBLP:conf/nfm/BaierDDKK14}.
The usage of the multi-dimensional setting is discussed in \cite{DBLP:conf/fase/BaierDKDKMW14,DBLP:conf/csl/BaierDK14}, comparing multiple rewards and quantiles and reporting how they have practically been applied and found useful by domain experts.

More recently, SG have been also analyzed with multiple objectives; \cite{DBLP:journals/ejcon/SvorenovaK16} provides an overview and implementation of existing algorithms for Pareto frontier computation for multi-objective total reward, reachability, and probabilistic LTL properties as well as mixtures thereof. However, the computation is limited to stopping SGs, i.e. ones without end components

Multiple mean-payoff objective was first examined in \cite{DBLP:conf/tacas/BassetKTW15} and both the qualitative and the quantitative problems are coNP-complete \cite{DBLP:conf/lics/Chatterjee016}.
Although Boolean combinations of mean-payoff are undecidable in general \cite{DBLP:conf/fossacs/Velner15}, in certain subclasses of SG they can be approximated \cite{DBLP:journals/iandc/BassetKW18}.
Boolean  combinations  of  total-reward  objectives  were  approximated in the case of stopping games \cite{DBLP:conf/mfcs/ChenFKSW13} and applied to autonomous driving \cite{DBLP:conf/qest/ChenKSW13}, where LTL is reduced to total reward in the case of stopping games and,
for dimension two, the problem is shown decidable in \cite{DBLP:conf/atva/BrenguierF16}.

PRISM-Games \cite{DBLP:conf/tacas/KwiatkowskaPW16} provides tool support for several multi-player multi-objective settings \cite{DBLP:journals/sttt/KwiatkowskaPW18}. Other tools supporting multi-player settings, GAVS+ \cite{DBLP:conf/tacas/ChengKLB11} and GIST \cite{DBLP:conf/cav/ChatterjeeHJR10}, are not maintained any more and are limited to single-objective settings. 

In many settings, Pareto frontiers can be $\varepsilon$-approximated in polynomial time \cite{DBLP:conf/focs/PapadimitriouY00}.
Pareto frontiers are constructed for the generalized mean-payoff objective for 2-player (non-stochastic) games in \cite{DBLP:conf/cav/BrenguierR15}, MDPs in \cite{DBLP:journals/corr/abs-1104-3489,DBLP:journals/lmcs/ChatterjeeKK17}, and SGs in \cite{DBLP:journals/iandc/BassetKW18}.
For the genera\-lized-reachability, the Pareto frontier is approximated for MDP in \cite{DBLP:journals/lmcs/EtessamiKVY08}, but for SG the Pareto frontier is not even known to be given by finitely many points, except for dimension two \cite{DBLP:conf/atva/BrenguierF16}.
In contrast, in the single-dimensional case, the value is known to be a multiple of a denominator that can be calculated from the syntactic description of the game \cite{visurvey}.

%
%
%

\para{Structure of the paper} 
After recalling the basic notions in Section~\ref{sec:prelim}, we illustrate the problem, the difficulties and our solution on examples in Section \ref{sec:example}.
The algorithm is described and the correctness intuitively explained in Section~\ref{sec:algo} and formally proven in Section~\ref{sec:proofStruct}.
The proofs of several technical statements are, for the sake of readability, relayed to \ifarxivelse{Appendix}{\cite[Appendix]{techreport}}.
We conclude in Section~\ref{sec:conc}.

\section{Preliminaries} \label{sec:prelim}
\subsection{Stochastic Games}

A 
{probability distribution} on a finite set $X$ is a mapping $\trans: X \to [0,1]$, such that $\sum_{x\in X} \trans(x) = 1$.
The set of all probability distributions on $X$ is denoted by $\Distributions(X)$.
%
Given a dimension $n\in \Naturals$, often implicitly clear from context, and $c\in \Reals$, we let $\vec c$ denote the $n$-dimensional vector with all components equal to $c$.
For a vector $\vec v$, its $i$-th component is denoted $\vec v_i$.
We compare vectors component-wise, i.e. $\vec u\leq \vec v$ if $\vec u_i\leq \vec v_i$ for all $i$.
%
In this paper, we restrict ourselves to non-negative vectors, i.e.\ elements of $\Reals_{\geq0}^n$

Now we define turn-based two-player stochastic games.
As opposed to the notation of e.g.\ \cite{condonComplexity}, we do not have special stochastic nodes, but rather a probabilistic transition function.
 
\begin{definition}[\SG]
	A \emph{stochastic game ($\SG$)} is a tuple 
	$(\states, \states<\Box>,\allowbreak \states<\circ>, \initstate, \actions, \Av, \delta)$,
	where $\states$ is a finite set of \emph{states} partitioned
	\ into the sets $\states<\Box>$ and $\states<\circ>$ of states of the player \emph{Maximizer} and \emph{Minimizer}, 
	respectively, 
	$\initstate \in \states$ is the \emph{initial} state, $\actions$ is a finite set of \emph{actions}, $\Av: \states \to 2^{\actions}$ assigns to every state a set of \emph{available} actions, and $\trans: \states \times \actions \to \distributions(\states)$ is a \emph{transition function} that given a state $\state$ and an action $\action\in \Av(\state)$ yields a probability distribution over \emph{successor} states.	
\end{definition}
A {Markov decision process (MDP)} is then a special case of $\SG$ where $\states<\circ> = \emptyset$.
We assume that $\SG$ are non-blocking, so for all states $\state$ we have $\Av(\state) \neq \emptyset$.

For a state $\state$ and an available action $\action \in \Av(\state)$, we denote the set of successors by $\post(\state,\action) := \set{\state' \mid \trans(\state,\action,\state') > 0}$. 
We say a state-action pair $(\state, \action)$ is an \emph{exit} of a set of states $T$, written $(\state,\action)\leaves T$, if $\exists t \in \post(\state,\action):t \notin T$, i.e., if with some probability a successor outside of $T$ could be chosen. 
Further, we use $\exits(T) = \set{(\state,\action) \mid \state \in T, \action \in \Av(\state), (\state,\action)\leaves T}$ to denote all exits of a state set $T \subseteq \states$.
Finally, for any set of states $T \subseteq \states$, we use $T_\Box$ and $T_\circ$ to denote the states of $T$ that belong to Maximizer and Minimizer, whose states are drawn in the figures as $\square$ and $\circ$, respectively.
 

The semantics of SG is given in the usual way by means of strategies and the induced Markov chain \cite{BaierBook} and its respective probability space, as follows.
An \emph{infinite path} $\path$ is an infinite sequence $\path = \state<0> \action<0> \state<1> \action<1> \cdots \in (\states \times \actions)^\omega$, such that for every $i \in \Naturals$, $\action<i>\in \Av(\state<i>)$ and $\state<i+1> \in \post(\state<i>,\action<i>)$.
\emph{Finite path}s are defined analogously as elements of $(\states \times \actions)^\ast \times \states$.
%
A \emph{strategy} of Maximizer or Minimizer is a function $\straa: (\states \times \actions)^\ast \times \states<\Box> \to \distributions(\actions)$ or $(\states \times \actions)^\ast \times \states<\circ> \to \distributions(\actions)$, respectively, such that $\straa(\rho \state) \in \distributions(\Av(\state))$ for all $\state$.
We call a strategy \emph{deterministic} if it maps to Dirac distributions only; otherwise, it is \emph{randomizing}.
A pair $(\straa,\strab)$ of strategies of Maximizer and Minimizer induces an (infinite state) Markov chain $\G[\straa,\strab]$ with finite paths as states, $\initstate$ being initial, and the transition function $\trans(\fpath \state, \fpath \state\action \state') = \straa(\fpath \state) (\action) \cdot \trans(\state, \action, \state')$ for states of Maximizer and analogously for states of Minimizer, with $\straa$ replaced by $\strab$.
The Markov chain 
induces a unique probability distribution $\pr^{\straa,\strab}$ over measurable sets of infinite paths \cite[Ch.~10]{BaierBook} (the usual index with the initial state is not used since it is fixed already in the game).

\subsection{End Components}
Now we recall a fundamental tool for analysis of MDP called end components.
An end component of a SG is then defined as the end component of the underlying MDP with both players unified. 
\begin{definition}[EC]
	\label{def:EC}
	A non-empty set $T\subseteq \states$ of states is an \emph{end component (EC)} if there is a non-empty set $B \subseteq \Union_{\state \in T} \Av(s)$ of actions such that 
	\begin{enumerate}
		\item for each $\state \in T, \action \in B \intersection \Av(\state)$, we have $(\state,\action)\notin \exits(T)$, 
		\item for each $\state, \state' \in T$ there is a finite path $\fpath = \state \action<0> \dots \action<n> \state' \in (T \times B)^* \times T$, i.e. the path stays inside $T$ and only uses actions in $B$.
	\end{enumerate}
\end{definition}

Intuitively, ECs correspond to bottom strongly connected components of the Markov chains induced by possible strategies.
Hence for some pair of strategies all possible paths starting in an EC remain there. 
An EC $T$ is a \emph{maximal end component (MEC)} if there is no other end component $T'$ such that $T \subseteq T'$.
Given an $\SG$ $\G$, the set of its MECs is denoted by $\mec(\G)$ and can be computed in polynomial time~\cite{CY95}.

\subsection{Generalized Reachability}

For a set $\targetset\subseteq\states$, we write $\Diamond \targetset:=\set{\state<0> \action<0> \state<1> \action<1> \cdots \in(\states \times \actions)^\omega  \mid \exists i \in \Naturals: \state<i>\in\targetset}$ to denote the (measurable) set of all paths which eventually reach $\targetset$. 
A \emph{generalized-reachability objective} (of dimension $n$) is an $n$-tuple $\targetsets=(\targetset_1,\ldots,\targetset_n)$ of state sets $\targetset_i\subseteq \states$.
A vector $\vec v$ (of dimension $n$) is \emph{achievable} if there is a strategy $\sigma$ of Maximizer such that for all strategies $\tau$ of Minimizer 
\[\forall i\in\{1,\ldots,n\}\quad\pr^{\straa,\strab}(\Diamond \targetset_i)\geq \vec v_i\]
Note that, since these games are not determined \cite{DBLP:conf/mfcs/ChenFKSW13}, this corresponds to the lower value, i.e. the worst case analysis.

For a given state $\state$, the set of points achievable \emph{from} $\state$, meaning in a game where the initial state is set to $\state$, is denoted $\achievable_{\targetsets}(s)$ or just $\achievable(s)$ when $\targetsets$ is clear from context.


\subsection{Basic Geometry Notation and Pareto Frontiers}\label{sec:prelimGeometry}

In order to consider convex combinations of sets, we define scaling of a set $X \subseteq \Reals^n$ by a constant $c \in [0,1]$ as
$c \cdot X = \set{c \cdot x \mid x \in X}$, 
and the Minkowski sum of sets $X$ and $Y$ as
$X + Y = \set{x+y \mid x \in X, y \in Y}$.
The convex hull of a set $X$ is denoted by $\conv (X)=\{\sum_{i=1}^k a_ix_i \mid k\geq 1,  x_i\in X, a_i\geq 0, \sum_{i=1}^k a_i=1\}.$ 

A \emph{downward closure} of a set $X$ of vectors is $\dwc(X) := \{y \mid \exists x \in X: y \leq x\}$. 
A set $X$ is \emph{downward closed} if $X=\dwc(X)$.
The set $\achievable$ of achievable points is clearly downward closed.

It will be convenient to use a few basic notions of projective geometry, which we now recall.
Intuitively, a \emph{direction} is a ray from the origin $\vec 0$ into the ($n$-dimensional) first quadrant.
As such, we may represent it with any vector $\vec v \neq \vec 0$ on that ray.
Then all vectors $\lambda\cdot \vec v$ for any $\lambda\in\Reals_{>0}$ are equivalent and represent the same direction. 
For instance, direction $\dir=[(1,0,0)]$ denotes the $x$-axis and it is equal to  $[(\lambda,0,0)]$ for any $\lambda>0$.
Formally, a direction $\dir=[\vec v]$ is the set $\{\lambda\cdot\vec v\mid \lambda\in\Reals_{>0}\}$.
We denote by $\directions=\{[\vec v]\mid \vec v\in\dwc(\{\vec 1\})\}$ the set of all directions (in the first quadrant).


Given a set $X$ of points and a direction $\dir$, $X$ \emph{evaluated in direction} $\dir$ is the (Euclidean) length of the vector from the origin to the farthermost intersection of $X$ and $\dir$, denoted $$X[\dir]:=\sup\{||\vec x|| \mid \vec x\in X,\dir=[\vec x]\}$$ with the usual $\sup\emptyset=0$.
Fig.~\ref{fig:intersection-with-Pareto-frontier} illustrates an evaluation of a direction on an achievable set.
Intuitively, it describes what is achievable if we prefer the dimensions in the ``ratio'' given by $\dir$.
Another example is the whole set (blue and red) of Fig.~\ref{fig:vis-regions}: evaluated in $[(1,1)]$ it yields $\sqrt2/2$.

Given a downward closed set $X$, its \emph{Pareto frontier} is the set of farthermost points in each direction:
\[\pareto(X)=\{\vec x  \mid \dir\in\directions, \dir=[\vec x], X[\dir]=||\vec x||\}\]
The \emph{Pareto frontier of a state $\state$} is the Pareto frontier of the set achievable in $\state$, i.e. $\pareto(\state):=\pareto(\achievable(\state))$.
The \emph{Pareto set  of the game} is $\pareto:=\pareto(\initstate)$.
Thus by definition, $\pareto=\pareto(\achievable(\initstate))$ and, further, $\dwc(\pareto)$ is (the closure of) $\achievable(\initstate)$.%
\footnote{Our notion of Pareto frontier captures the whole surface in the first quadrant.
	Other definitions such as $\pareto_{\targetsets}=\{\vec v\mid \vec v\text{ is achievable }\wedge\forall\text{ achievable }\vec u: \vec u\not>\vec v\}$ only capture the Pareto optimal points.
	For example, if the set of achievable points in the three-dimensional space is the whole unit cube then our definition returns its three sides, while the other definition returns only the singleton with the Pareto optimal point $(1,1,1)$.}
Note that it is not known whether $\val$ is closed, since it is not known whether the suprema of achievable points are also achievable. 
Our notion of $\pareto$ includes these suprema, which is why it is only equal to the closure of $\achievable$.

\subsection{Problem Formulation}\label{sec:problem}

In this paper, we are interested in $\varepsilon$-approximating $\pareto$. 
In terms of under- and over-approximation:

 \medskip
\noindent\framebox[\linewidth]{\parbox{0.95\linewidth}{
Given an SG, generalized-reachability objective $\targetsets$, and precision $\varepsilon>0$, 
the task is to construct sets $\mathcal L,\mathcal U\subseteq \Reals^{|\targetsets|}$ such that for each direction $\dir\in \directions$, 
$\mathcal L[\dir]$ and $\mathcal U[\dir]$ are effectively computable and we have
\[\mathcal L[\dir]\leq \pareto[\dir]\leq\mathcal  U[\dir] \quad\text{ and }\quad \mathcal U[\dir]-\mathcal L[\dir]<\varepsilon\ .\]
}}

\begin{figure}
	\centering\begin{tikzpicture}[scale=1.4, font=\footnotesize]
	\draw[draw=none,-] (-0.4, -0.4) -- (-0.4, 1.2) -- (1.2, 1.2) -- (1.2, -0.4) -- (-0.4, -0.4);
	
	\draw[->]  (0, -0.1) -- (0, 1.2);
	\draw[->]  (-0.1, 0) -- (1.2, 0);
	
	\draw[-]   (0, 1) -- (0.5, 0.9) -- (0.9, 0.5) -- (1, 0);
	
	\draw[->]  (0, 0) -- (1.2, 0.35);
	
	\draw[draw=none,fill=red,radius=1pt] (0.95, 0.275) circle node [right]{~~~$\dir$} ;
	\node at(0.4,0.5) {$X$} ;
	
	\draw[fill=black,radius=0.75pt] (0.5, 0.9) circle;
	\draw[fill=black,radius=0.75pt] (0.9, 0.5) circle;
	\draw[fill=black,radius=0.75pt] (0, 1) circle node [left] {1};
	\draw[fill=black,radius=0.75pt] (1, 0) circle node [below] {1};
	\end{tikzpicture}
	\caption{Example showing a Pareto frontier of a set $X$, a direction $\dir$, and the point of intersection of $\dir$ with the frontier, depicted as  \protect\tikz\protect\draw[draw=none,fill=red] (0,0) circle (.5ex); in distance $X[\dir]$ from the origin.}
	\label{fig:intersection-with-Pareto-frontier}
\end{figure}

\subsection{Multi-dimensional and Bounded Value Iteration}\label{sec:bvi}

In this section we recall two extensions of the standard value iteration: a generalization for \emph{multi-dimensional} objectives and a \emph{``bounded''} one with an over-approximating sequence.
Firstly, the multi-dimensional Bellman operator for reachability, e.g.\ \cite{DBLP:conf/mfcs/ChenFKSW13},
\[\bellman:\big(\states\to 2^{[0,1]^n}\big)\to\big(\states\to 2^{[0,1]^n}\big)\]  
works with \emph{sets} $X(\state)$ of points achievable in $\state$ rather than single points:

\begin{equation*}\label{eq:s}
\bellman(X)(\state) = \begin{cases}
~~~~~~~~~\bigcap_{a \in \Av(\state)} X(\state,\action) &\mbox{if } \state \in \states<\circ>\\
\conv(\bigcup_{a \in \Av(\state)} X(\state,\action)) &\mbox{if } \state \in \states<\Box> 
\end{cases}
\end{equation*}
where we define
\begin{equation*}\label{eq:sa}
X(\state,\action) =\left( \dwc(\{\mathbbm 1_{\targetsets}(\state)\}) + \sum_{\state' \in \states 
} \trans(\state,\action,\state') \cdot X(\state')\right) \cap \mathbf 1
\end{equation*}
and $\mathbbm 1_{\targetsets}$ is the indicator vector function of target sets, i.e. 
$\mathbbm 1_{\targetsets}(\state)_i$ equals $1$ if $\state \in \targetset_i$ and $0$ otherwise, 
and $\mathbf 1=\dwc(\{\vec 1\})$ is the unit box.

Intuitively, the operator works as follows.
Given what can be achieved from $\state$ using now an action $\action$, we can compute the value for the minimizing state as the intersection over all actions since these points are achievable no matter what Minimizer does.
For maximizing states, if there exists an action achieving a point then Maximizer can achieve it from here; moreover, we compute the convex hull since Maximizer can also randomize and, as opposed to the minimizing case with intersection, union of convex sets need not be convex.
Once we have dealt with decision making on the first line, it remains to determine what can be achieved by each decision, on the second line.
The achievable values are given by the weighted average of the successors' values, but additionally, the base case of targets must be handled.
Namely, whenever a state is in a target set, all values up to $1$ in that dimension are achievable (but not greater than $1$).

This also gives rise to an algorithm approximating $\val$, which is the least fixpoint of $\bellman$~\cite{DBLP:conf/mfcs/ChenFKSW13}.
We initialize $\lb:\states\to 2^{[0,1]^n}$ to return $\{\vec 0\}$ everywhere and iteratively apply the Bellman operator, yielding arbitrarily precise approximations of $\achievable$ by $\bellman^k(\lb)$ as $k\to\infty$ \cite{DBLP:conf/mfcs/ChenFKSW13}\footnote{Precisely, $\lim_{k\to\infty}\bellman^k(\lb)\subseteq \achievable\subseteq \dwc(\pareto) = \cl(\lim_{k\to\infty}\bellman^k(\lb))$ where $\cl$ is the standard closure in $\Reals^n$.}.
Moreover, for every state $s$ it can be checked that the set $\bellman^k(\lb)(s)$ is presented at each step $k$ as a closed downward-closed convex polyhedron, i.e. a \emph{finite} object. Thus we can effectively construct any desired approximation.

However, it is not known how to bound the difference of the actual achievable set $\achievable$ and the approximation after $k$ iterations.
For that reason, \cite{cav18} introduced for the single-dimensional case the \emph{bounded value iteration} (named along the tradition of \cite{BRTDP}), a way to compute also an over-approximating sequence.
If we initialize $\ub$ to return $\dwc(\{\vec 1\})$ everywhere\footnote{The same holds even if we initialize to $0$ all the dimensions $i$ in states from which there is no path to $T_i$, as is customary in MDP analysis. The solution of \cite{cav18} is not sensitive to this and does not require this special treatment in the initialization of $\ub$.}, then $\lim_{k\to\infty}\bellman^k(\ub)$ is a fixpoint, which is generally different from the least one.
Hence \cite{cav18} modifies $\bellman$ so that it has a single fixpoint equal to the least one of the original $\bellman$.
Then both the sequence of lower bounds and of upper bounds converge to $\achievable$, the value of the game.
The modification is demonstrated in the next section, where we also illustrate the main ideas how to cope with the multi-dimensional case.

\section{Example}\label{sec:example}

In this section, we illustrate the issues preventing convergence of the upper bounds, as well as the solution of \cite{cav18} and our extension of it.
Value iteration converges if the SG is stopping, i.e. if the game reaches a designated sink with probability 1, or equivalently, if there are no end components (ECs). 
Hence the difficulty in solving reachability SG is rooted in ECs, as it is possible to cycle in its states infinitely long.
As a running example, consider the EC in Fig.~\ref{fig:standard-example} with states $\mathsf{p},\mathsf{q},\mathsf{r}$ and actions $\mathsf a,\ldots,\mathsf g$.
The symbols $\alpha$, $\beta$ and $\gamma$ are placeholders; in the single dimensional case, they represent a real number; in the multi dimensional case, a Pareto frontier. 
One can make this game a standard SG in the single dimensional case by, for example, replacing $\alpha$ with a transition that reaches the target with probability $\alpha$ and the sink with probability $1-\alpha$. The multi-dimensional case is a straightforward extension.

We start by considering the single-reachability objective. The standard Bellman update procedure as described in Section \ref{sec:bvi} reduces to the following equations, where intersections become minima and unions become maxima.
We write $\ub<k>$ as short for $\bellman^k(\ub)$.

\begin{align*}
\ub<i+1>(\mathsf{p}) &= \min{\{  \ub<i>(\mathsf{q}), \ub<i>(\mathsf{r}), \gamma \}} \\
\ub<i+1>(\mathsf{q}) &= \max{\{ \ub<i>(\mathsf{p}), \alpha \}} \\
\ub<i+1>(\mathsf{r}) &= \max{\{ \ub<i>(\mathsf{p}), \beta \}}
\end{align*}
By replacing $\ub$ with $\lb$, we get the update equations for the lower bound.
Recall that we initialize $\lb<0>$ to return $0$ everywhere and $\ub<0>$ to return $1$ everywhere.

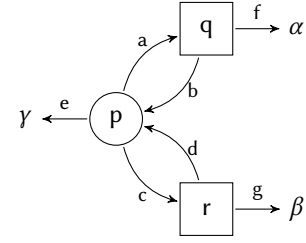
\begin{figure}
	\centering
	\begin{tikzpicture}[scale=0.6]
	
	\drawcirc (A) at (0,0) {$\mathsf{p}$};
	\drawbox (B) at (2,2) {$\mathsf{q}$};
	\drawbox (C) at (2,-2) {$\mathsf{r}$};
	
	\node (Bo) at (4,2) {$\alpha$};
	\node (Co) at (4,-2) {$\beta$};
	\node (Ao) at (-2, 0) {$\gamma$};
	
	\draw[->] (A) to[bend left]  node [midway,anchor=south, font=\footnotesize] {$\mathsf{a}$} (B);
	\draw[->] (B) to[bend left]  node [midway,anchor=north, pos=0.2, font=\footnotesize] {$\mathsf{b}$} (A);
	\draw[->] (A) to[bend right]  node [midway,anchor=south, below, font=\footnotesize] {$\mathsf{c}$} (C);
	\draw[->] (C) to[bend right]  node [midway,anchor=south, pos=0.2, above, font=\footnotesize] {$\mathsf{d}$} (A);
	
	\draw[->] (A) to  node [midway,anchor=south, font=\footnotesize] {$\mathsf{e}$} (Ao);
	\draw[->] (B) to  node [midway,anchor=south, font=\footnotesize] {$\mathsf{f}$} (Bo);
	\draw[->] (C) to  node [midway,anchor=south, font=\footnotesize] {$\mathsf{g}$} (Co);
	
	\end{tikzpicture}
	\caption{An example demonstrating the complications arising in an end component.}
	\label{fig:standard-example}
\end{figure}

\subsection{MDP}\label{sec:exMDP}
Firstly, let us briefly mention the solution of  \cite{atva,BVI} for MDP.
Suppose that all states belonged to the maximizing player, i.e. $\mathsf{p}$ was also maximizing.
Then, the initialization $\ub<0>=\vec 1$ is already a fixpoint, although the true value of all three states is $\max\{\alpha,\beta,\gamma\}$.
Intuitively, the reason for this is that the equations create a cyclic dependency: the process of finding the value by ``asking neighbours'' is not well-founded and all states falsely believe that they can achieve the higher value $1$.
\cite{cav18} calls such an EC \emph{bloated}, having an unjustifiably large (bloated) upper bound.
The solution of \cite{atva,BVI} is to detect that this is an EC and collapse it into a single state, eliminating the cycle. 
Only outgoing actions $\alpha,\beta,\gamma$ of the EC are kept, and in the next iteration, the Bellman operator correctly sets the value of the collapsed state to $\max\{\alpha,\beta,\gamma\}$, thus converging to the true value.
The solution of \cite{cav18} captures this idea from a different perspective: 
It does not change the underlying graph, but instead realizes that all three states can reach the \emph{``best exit''} of the EC, i.e.\ the state with an action exiting the EC and having the highest value.
Then the algorithm reduces the upper bounds of the states of the EC to that of the best exit. 
This is called \emph{deflating}, as the ``internal higher pressure'' of bloated upper bounds is ``relieved'', equalizing with the best exit.

\subsection{Single-reachability SG}\label{sec:exSingle}

Secondly, for single-reachability \emph{SG}, the EC cannot in general be collapsed, since the values of the states differ, and it is not clear a priori which states share a value.
They depend on the \emph{ordering} of the values of the exits, i.e. on the ordering of $\alpha$, $\beta$ and $\gamma$.

\textit{Case 1}: If $\gamma < \min(\alpha, \beta)$, then after the first iteration we have 
$\ub<1>(\mathsf{p}) = \gamma$, $\ub<1>(\mathsf{q}) = 1$ and $\ub<1>(\mathsf{r}) = 1$. 
After the next iteration, $\ub<2>(\mathsf{p}) = \gamma$, $\ub<2>(\mathsf{q}) = \alpha$ and $\ub<2>(\mathsf{r}) = \beta$. 
These are the true values, as observable in Figure \ref{fig:standard-example}.
In this case $\ub<k>$ converges to the value.
However, note that the values of the states in the same EC are different. 

\textit{Case 2}: If $\gamma \geq \min(\alpha, \beta)$, and say $\alpha > \beta$, then the values of $\mathsf{p}$ and $\mathsf{r}$ are $\beta$ and that of 
$\mathsf{q}$ is $\alpha$. This is the case, because $\mathsf{p}$ will always play action $\mathsf{c}$, not allowing state $\mathsf{r}$ to achieve anything but the smallest value $\beta$. 
However, $\ub<k>$ does not converge to these values. 
In the first iteration, $\ub<1>(\mathsf{p}) = \gamma$, $\ub<1>(\mathsf{q}) = 1$ and $\ub<1>(\mathsf{r}) = 1$. 
After the next iteration, $\ub<2>(\mathsf{p}) = \ub<2>(\mathsf{q}) = \ub<2>(\mathsf{r}) = \gamma$.
After this, the upper bounds do not change any more, because we have the problem of cyclic dependencies as described in Section \ref{sec:exMDP}.
If we fix the strategy of the Minimizer to $\mathsf{c}$ as that is the best choice, only $\{\mathsf{p},\mathsf{r}\}$ forms an EC.
The value of both $\mathsf{p}$ and $\mathsf{r}$ is $\beta$, as that is the best exit that the Maximizer can achieve, given that Minimizer does not play the suboptimal action $\mathsf{a}$.
Such an EC where all states share the same value is called simple end component (SEC) \cite{cav18}. 
It is simple, because after fixing the strategy of Minimizer to be optimal, this player cannot influence the play anymore (as the SG locally becomes an MDP). In the SEC, Maximizer can direct the play to the best exit and almost surely achieve the value of it.
Deflating the SEC $\{\mathsf{p}, \mathsf{r}\}$, i.e. setting the upper bound for all states in the SEC to that of the best exit, correctly updates the bounds to $\beta$. 
Afterwards, the upper bound of $\mathsf{q}$ is correctly set to $\alpha$ in the next iteration.
So one would like to find and deflate all SECs. 

However, which states form a SEC depends on the relative ordering of the exits' values and the corresponding choices that Minimizer makes (recall we had to fix the strategy of $\mathsf{p}$ to the optimal action $c$ in order to realize which states form the SEC). 
Indeed, in the case with $\alpha<\beta$, a different SEC ($\{\mathsf{p},\mathsf{q}\}$) should be deflated and if $\alpha=\beta$ then all three states form a SEC.
Since we do not know the values of the exits, the algorithm uses the approximations ($\lb<i>$) to guess which actions are suboptimal for the Minimizer, and hence which states form a SEC. 
As the lower approximation converges to the value, the true SECs are eventually detected and correctly deflated.
However, when $\lb<i>$ is not yet close enough to the value, the computation of SECs can be wrong, e.g. if $\alpha < \beta$, but for the first few iterations of the algorithm the lower bound on $\beta$ is smaller than that on $\alpha$. 
Then, for these first iterations, the algorithm believes $\{\mathsf{p},\mathsf{r}\}$ to be the SEC, and only afterwards realizes that it actually is $\{\mathsf{p},\mathsf{q}\}$.
Hence, the operation we perform on the SEC has to be conservative, i.e. sound even if it is given a set of states that actually do not form a SEC.
This is why deflating was introduced, as it is sound for any EC, even ones that are not SECs \cite[Lemma 3]{cav18}. In contrast, modifying the underlying graph by collapsing as in \cite{atva,BVI} would commit to the detected SEC-candidate and thereby possibly make the wrong choice.
Note that we never know that we have correctly detected a SEC, we just know that in the limit we will eventually detect it.

\subsection{Generalized-reachability SG}\label{sec:exMOSG}
\begin{figure}
	\centering
	
	\begin{tikzpicture}[scale=1.4, font=\footnotesize]
	\draw[draw=none,-] (-0.4, -0.4) -- (-0.4, 1.2) -- (1.2, 1.2) -- (1.2, -0.4) -- (-0.4, -0.4);
	
    \filldraw[fill=\myblue, draw=none] (0,0) -- (0, 0.9) -- (0.5, 0.9) -- (0.5, 0);
    
	\draw[->]  (0, -0.1) -- (0, 1.15);
	\draw[->]  (-0.1, 0) -- (1.15, 0);
	
	\draw[-]   (0, 0.9) -- (0.5, 0.9) -- (0.5, 0);
	
	\draw[fill=black,radius=0.75pt] (0, 0.9) circle node [left] {0.9};
	\draw[fill=black,radius=0.75pt] (0.5, 0) circle node [below] {0.5};
	
	\draw[fill=black,radius=0.75pt] (0.5, 0.9) circle node [right] {(0.5, 0.9)};
	\end{tikzpicture}
	\hfill
	\begin{tikzpicture}[scale=1.4, font=\footnotesize]
	\draw[draw=none,-] (-0.4, -0.4) -- (-0.4, 1.2) -- (1.2, 1.2) -- (1.2, -0.4) -- (-0.4, -0.4);
	
    \filldraw[fill=\myred, draw=none] (0,0) -- (0, 0.5) -- (0.9, 0.5) -- (0.9, 0);
    
	\draw[->]  (0, -0.1) -- (0, 1.15);
	\draw[->]  (-0.1, 0) -- (1.15, 0);
	
	\draw[-]   (0, 0.5) -- (0.9, 0.5) -- (0.9, 0);
	
	\draw[fill=black,radius=0.75pt] (0, 0.5) circle node [left] {0.5};
	\draw[fill=black,radius=0.75pt] (0.9, 0) circle node [below] {0.9};
	
	\draw[fill=black,radius=0.75pt] (0.9, 0.5) circle node [above] {(0.9, 0.5)};
	\end{tikzpicture}
	\hfill
	\begin{tikzpicture}[scale=1.4, font=\footnotesize]
	\draw[draw=none,-] (-0.4, -0.4) -- (-0.4, 1.2) -- (1.2, 1.2) -- (1.2, -0.4) -- (-0.4, -0.4);
	
	\draw[->]  (0, -0.1) -- (0, 1.2);
	\draw[->]  (-0.1, 0) -- (1.2, 0);
	
	\draw[-]   (0, 1) -- (0.5, 0.9) -- (0.9, 0.5) -- (1, 0);
	
	\draw[fill=black,radius=0.75pt] (0.5, 0.9) circle node [right] {(0.5, 0.9)};
	\draw[fill=black,radius=0.75pt] (0.9, 0.5) circle node [right] {(0.9, 0.5)};
	\draw[fill=black,radius=0.75pt] (0, 1) circle node [left] {1};
	\draw[fill=black,radius=0.75pt] (1, 0) circle node [below] {1};
	\end{tikzpicture}
	\caption{Pareto frontiers of $\alpha$ (left), $\beta$ (center) and $\gamma$ (right) in a 2-objective setting. X-axis represents the value along the first objective and Y-axis represents the value along the second objective.} \label{fig:Pareto-objective-example}
\end{figure}
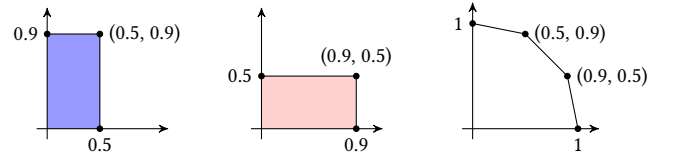

\colorlet{mix}{red!50!blue}

\begin{figure}[t]
	\centering
	%
	%
	%
	%
	%
	%
	\begin{subfigure}{0.25\textwidth}
		\centering
		\begin{tikzpicture}[scale=1.4, font=\footnotesize]
		\draw[draw=none,-] (-0.4, -0.4) -- (-0.4, 1.9) -- (2.5, 2.5) -- (1.9, -0.4) -- (-0.4, -0.4);
		
		\filldraw[fill=\myblue, draw=none] (0,0) -- (0.5, 0.5) -- (0.5, 0) -- (0, 0);
		\filldraw[fill=\myred, draw=none] (0,0) -- (0.5, 0.5) -- (0, 0.5) -- (0, 0);
		
		\draw[-]  (0, -0.1) -- (0, 1.15);
		\draw[-]  (-0.1, 0) -- (1.15, 0);
		
		\draw[-] (2,0)--(0,2);
		\draw[-,\myblue,ultra thick] (2,0)--(1,1);
		\draw[-,\myred,ultra thick] (0,2)--(1,1);
		\draw[fill=black!50,radius=2pt] (1, 1) circle node [left] {};
		
		\draw[-]   (0, 0.9) -- (0.5, 0.9) -- (0.5, 0);
		\draw[-]   (0, 0.5) -- (0.9, 0.5) -- (0.9, 0);
		
		\draw[dotted,-stealth]   (0, 0) -- (0, 2.2) node [above] {$\dir_1$};
		\draw[dotted,->]   (0, 0) -- (1.2, 1.2) node [above right] {$\dir_2$};
		\draw[dotted,->]   (0, 0) -- (2.2, 0) node [right] {$\dir_3$};
		
		\draw[thick, -, black!50] (0, 0) -- (0.5, 0.5);
		
		\draw[fill=black,radius=1.pt] (0, 0.9) circle node [left] {0.9};
		\draw[fill=black,radius=1.pt] (0.5, 0) circle node [below] {0.5};
		\draw[fill=black,radius=1.pt] (0, 0.5) circle node [left] {0.5};
		\draw[fill=black,radius=1.pt] (0.9, 0) circle node [below] {0.9};
		\draw[fill=black,radius=1.pt] (0.5, 0.5) circle;
		\draw[fill=black,radius=1.pt] (0.9, 0.5) circle;
		\draw[fill=black,radius=1.pt] (0.5, 0.9) circle;
		\end{tikzpicture}
		\caption{} \label{fig:vis-regions}
	\end{subfigure}
	\begin{subfigure}{0.15\textwidth}
		\centering
		\begin{tikzpicture}[scale=1.4, font=\footnotesize]
		\draw[draw=none,-] (-0.4, -0.4) -- (-0.4, 1.9) -- (2.5,2.5) -- (1.9, -0.4) -- (-0.4, -0.4);
		
		\draw[->]  (0, -0.1) -- (0, 1.15);
		\draw[->]  (-0.1, 0) -- (1.15, 0);
		
		\draw[-]   (0, 0.5) -- (0.5, 0.5) -- (0.5, 0);
		
		\draw[fill=black,radius=1pt] (0.5, 0) circle node [below] {0.5};
		\draw[fill=black,radius=1pt] (0, 0.5) circle node [left] {0.5};
		\draw[fill=black,radius=1pt] (0.5, 0.5) circle;
		\end{tikzpicture}
		\caption{} \label{fig:deflate-regions}
	\end{subfigure}
	\caption{(a) Visualizing the regions for state $\mathsf{p}$; and (b) the result of deflating. } \label{fig:regions}
	
\end{figure}
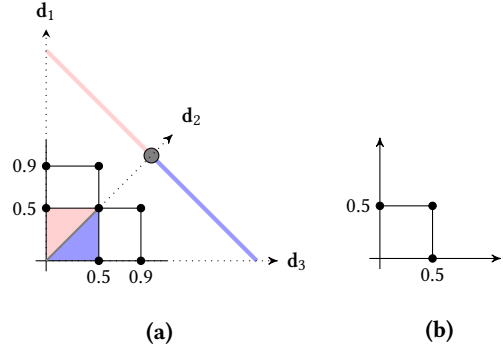

Here we intuitively describe and illustrate the main elements of our solution.
The formal definitions of the key concepts only follow in the next section.

\paragraph{Regions.}
Consider again the example of Fig.~\ref{fig:standard-example}.
In the multi-dimensional case, instead of $\alpha, \beta$ and $\gamma$ being reals, they are sets of achievable vectors.
Let them be given as in Fig.~\ref{fig:Pareto-objective-example}, so e.g. $\alpha = \dwc(\{(0.5,0.9)\})$.
Here $\gamma$ gives the highest values, so it is the best one for Maximizer, and hence Minimizer will not play the corresponding $\mathsf{e}$ (as in Case 2 in Section \ref{sec:exSingle}).
$\alpha$ and $\beta$, however, cannot be compared. Depending on the trade-off (corresponding to a direction) that Maximizer wants to achieve, $\alpha$ or $\beta$ might be better than the other.
To this end, let $\dir$ be the direction in which Maximizer wants to maximize.
Depending on $\dir$, Minimizer's behaviour changes. 
If the objective along the x-axis is more important, then Minimizer chooses action $\mathsf a$. 
This way, the value of the more important objective is restricted to 0.5.
If on the other hand, the objective along y-axis is more important, then the Minimizer chooses action $\mathsf c$. 
The Minimizer, for each direction $\dir$, decides on the action to be chosen by comparing $\alpha$ and $\beta$ evaluated in that direction; in other words, by computing the minimum of $\alpha[\dir]$ and $\beta[\dir]$.

Our algorithm identifies finitely many \textit{regions} where the Minimizer has the same preference ordering over actions and then we deflate each region separately. 
In our example, we can identify three regions, as shown in Fig.~\ref{fig:vis-regions}. 
Between the directions ${\dir_1}$ and ${\dir_2}$ (red \tikz\draw[draw=none,fill=\myred] (0,0) circle (.5ex); region),
Minimizer's best choice is action $c$; 
between ${\dir_2}$ and ${\dir_3}$ (blue \tikz\draw[draw=none,fill=\myblue] (0,0) circle (.5ex); region),
Minimizer's best choice is action $a$; 
and along ${\dir_2}$ (grey \tikz\draw[draw=none,fill=black!50] (0,0) circle (.5ex); line), Minimizer is indifferent.

\paragraph{Deflating regional SECs.}
Once restricting to a region fixes the preference ordering over Minimizer's actions, we can proceed as in the single-dimensional case:
We fix Minimizer's optimal strategy based on the lower bounds, identify SEC-candidates and deflate them.
That means we update the Pareto frontier in the region to that of the \emph{best exit} from the SEC. The whole Pareto frontier is constructed piece by piece, region by region.


Returning to our running example, we have already identified the three regions in the Pareto frontier for state $\mathsf{p}$ in Figure \ref{fig:vis-regions}.
The SECs depending on the regions are as follows: 
In the blue \tikz\draw[draw=none,fill=\myblue] (0,0) circle (.5ex); region it is $\{\mathsf{p},\mathsf{q}\}$, in the red \tikz\draw[draw=none,fill=\myred] (0,0) circle (.5ex); region it is $\{\mathsf{p},\mathsf{r}\}$, and along $\dir_2$ all three states form a SEC. 
Deflating the blue \tikz\draw[draw=none,fill=\myblue] (0,0) circle (.5ex); region, we see that the best exit from the SEC has value $\alpha$, so between 0\textdegree\ and 45\textdegree\ the value of $\mathsf{p}$ is set to the corresponding part of $\alpha$.
Doing the same for the other two regions results in the Pareto frontier depicted in Figure \ref{fig:deflate-regions}.
This result is also intuitively expected, as depending on which direction Maximizer prefers, Minimizer can always restrict the play to the other exit.
Note that for the sake of example here we always talked about the true values, while the algorithm does not know these precisely. 
Therefore, deflating cannot update the values based on the value of $\alpha$, but only on its approximation.
Being on the safe side, the values will be decreased only to its over-approximation.

\medskip

\begin{figure}
	\centering
	\begin{subfigure}{0.22\textwidth}
		\centering
		\includegraphics[width=\linewidth]{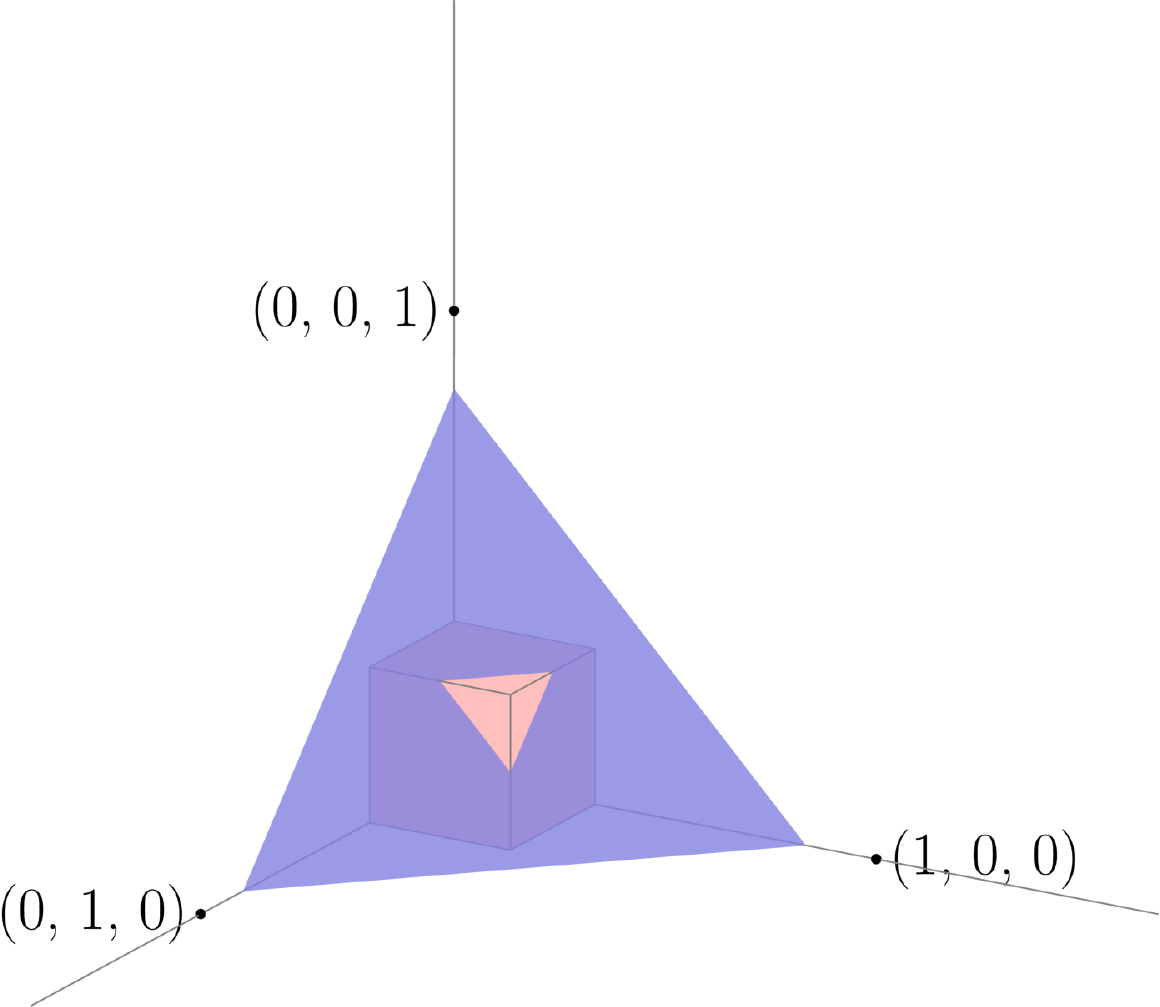}
	\end{subfigure}%
	\begin{subfigure}{.17\textwidth}
		\centering
		\begin{tikzpicture}[font=\footnotesize, scale=0.8]
		\draw[draw=none] (-2, -0.33) rectangle (2, 3.5);
		\coordinate (A) at (-1.5, 0);
		\coordinate (B) at (1.5, 0);
		\coordinate (C) at (0, 2.598);
		\coordinate (D) at (-0.35, 1.1);
		\coordinate (E) at (0.35, 1.1);
		\coordinate (F) at (0, 0.566);
		\node[below] at (A) {[0, 1, 0]};
		\node[below] at (B) {[1, 0, 0]};
		\node[above] at (C) {[0, 0, 1]};
		\draw[gray!50, -] (A) -- (B) -- (C) -- (A);
		\draw[thick,-] (D) -- (E) -- (F) -- (D);
		\end{tikzpicture}
	\end{subfigure}%
	
	\centering
	\begin{subfigure}{0.22\textwidth}
		\centering
		\includegraphics[width=\linewidth]{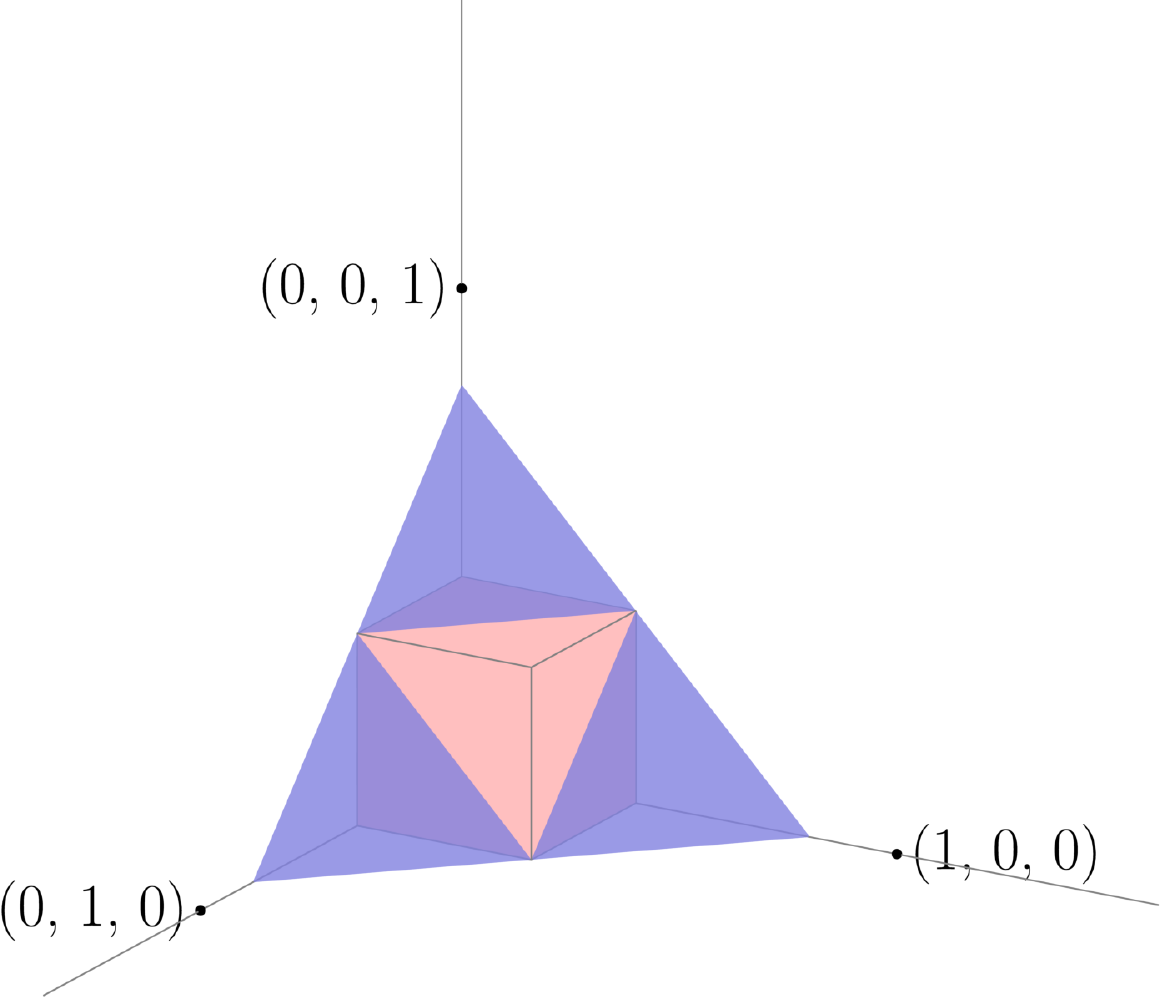}
	\end{subfigure}%
	\begin{subfigure}{.17\textwidth}
		\centering
		\begin{tikzpicture}[font=\footnotesize,scale=0.8]
		\draw[draw=none] (-2, -0.33) rectangle (2, 3.5);
		\coordinate (A) at (-1.5, 0);
		\coordinate (B) at (1.5, 0);
		\coordinate (C) at (0, 2.598);
		\coordinate (D) at (0, 0);
		\coordinate (E) at (0.75, 1.3);
		\coordinate (F) at (-0.75, 1.3);
		\node[below] at (A) {[0, 1, 0]};
		\node[below] at (B) {[1, 0, 0]};
		\node[above] at (C) {[0, 0, 1]};		
		\draw[gray!50, -] (A) -- (B) -- (C) -- (A);
		\draw[thick,-] (D) -- (E) -- (F) -- (D);
		\end{tikzpicture}
	\end{subfigure}%
	
	\centering
	\begin{subfigure}{0.22\textwidth}
		\centering
		\includegraphics[width=\linewidth]{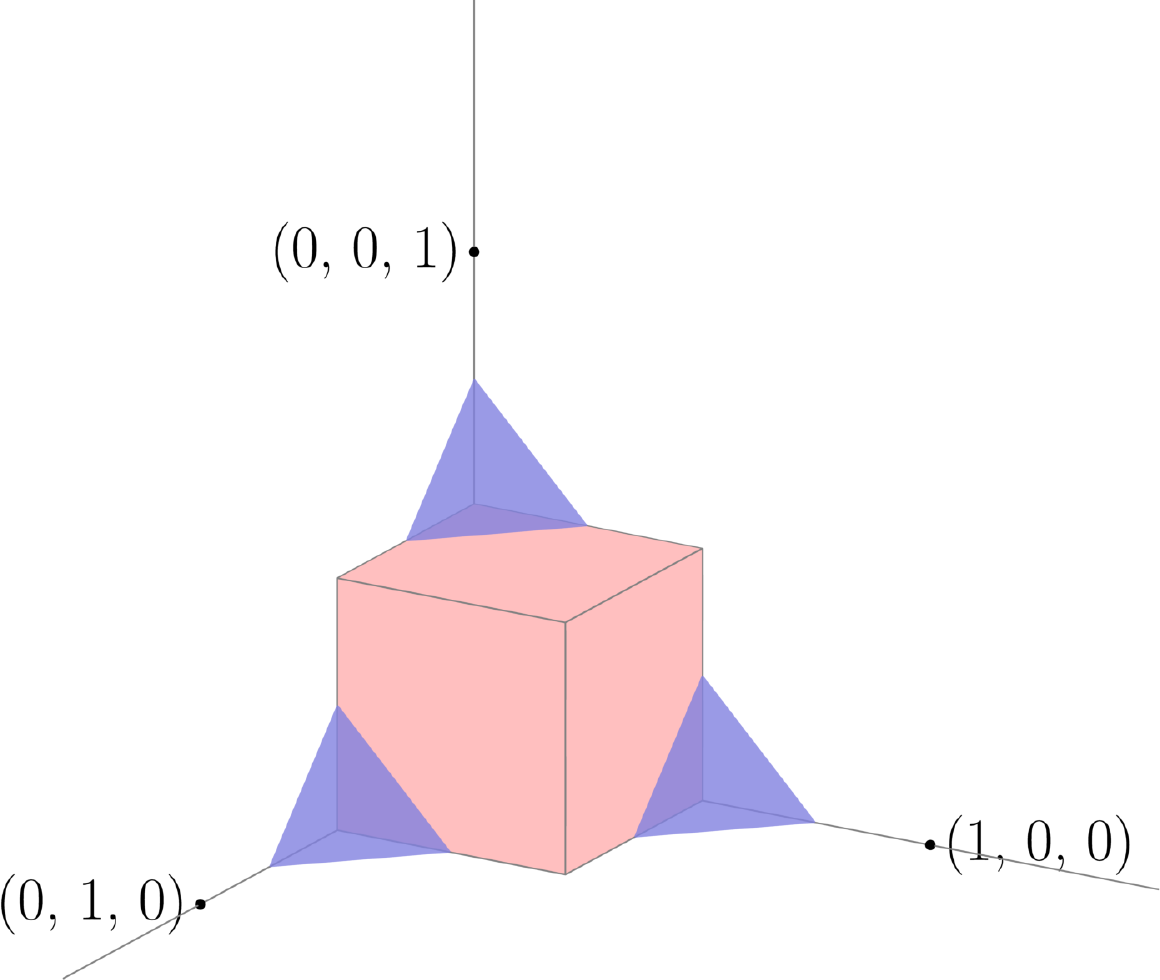}
	\end{subfigure}%
	\begin{subfigure}{.17\textwidth}
		\centering
		\begin{tikzpicture}[font=\footnotesize,scale=0.8]
		\draw[draw=none] (-2, -0.33) rectangle (2, 3.5);
		\coordinate (A) at (-1.5, 0);
		\coordinate (B) at (1.5, 0);
		\coordinate (C) at (0, 2.598);
		\coordinate (D) at (-0.6,0);
		\coordinate (E) at (0.6, 0);
		\coordinate (F) at (1.05, 0.7785);
		\coordinate (G) at (0.45, 1.8186);
		\coordinate (H) at (-0.45, 1.8186);
		\coordinate (I) at (-1.05, 0.7785);
		\node[below] at (A) {[0, 1, 0]};
		\node[below] at (B) {[1, 0, 0]};
		\node[above] at (C) {[0, 0, 1]};
		\draw[gray!50, -] (A) -- (B) -- (C) -- (A);
		\draw[thick,-] (D)  (E) -- (F)  (G) -- (H)  (I) -- (D);
		\end{tikzpicture}
	\end{subfigure}%
	\caption{The left column shows Pareto frontiers; there is the blue tetrahedon and a pink box of varying size. The right column shows the projection of the intersection onto the projective hyperplane.}
	\label{fig:type2}
\end{figure}
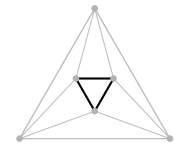
\begin{figure}
	\centering
	\begin{tikzpicture}[scale=1]
	\draw[draw=none] (-2, -0.33) rectangle (2, 1.8);
	\coordinate (A) at (-1, 0);
	\coordinate (B) at (1, 0);
	\coordinate (C) at (0, 1.732);
	\coordinate (D) at (-0.25, 0.8);
	\coordinate (E) at (0.25, 0.8);
	\coordinate (F) at (0, 0.366);
	
	\draw[gray!50, -] (A) -- (B) -- (C) -- (D) -- (E) -- (F);
	\draw[gray!50, -] (A) -- (F) -- (D) (A) -- (D) (A) -- (C);
	\draw[gray!50, -] (E) -- (C) (E) -- (B) (F) -- (B);
	
	\draw[gray!50, -] (A) -- (B) -- (C) -- (A);
	\draw[thick,-] (D) -- (E) -- (F) -- (D);
	
	\draw[draw=none,fill=black!30,radius=1.25pt] (A) circle;
	\draw[draw=none,fill=black!30,radius=1.25pt] (B) circle;
	\draw[draw=none,fill=black!30,radius=1.25pt] (C) circle;
	\draw[draw=none,fill=black!30,radius=1.25pt] (D) circle;
	\draw[draw=none,fill=black!30,radius=1.25pt] (E) circle;
	\draw[draw=none,fill=black!30,radius=1.25pt] (F) circle;
	\end{tikzpicture}
	\caption{Triangulation of the top right of Fig.~\ref{fig:type2}}
	\label{fig:triangulation-example}
\end{figure}


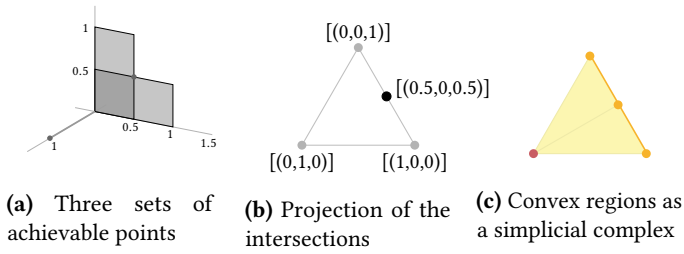
\begin{figure}[t]
	\hspace*{-0.5cm}
	\begin{subfigure}{0.15\textwidth}
		\begin{tikzpicture}[scale=0.42, font=\Large]
		\begin{axis}[
		axis equal,
		axis lines=middle,
		xmin=0, xmax=1.25, 
		ymin=0, ymax=1.25, 
		zmin=0, zmax=1.25, 
		view={120}{20},
		axis line style={gray!60, -},
		xtick style={draw=none},
		ytick style={draw=none},
		ztick style={draw=none},
		area plot/.style={
			-,
			fill opacity=0.75,
			fill=black!30!white,
			mark=none,
		},
		area plot dark/.style={
			-,
			fill opacity=0.75,
			fill=black!40!white,
			mark=none,
		},
		mymark/.style = {mark=*, color=black!60, -},
		myline/.style = {-, ultra thick, black!40!white}
		]
		\addplot3 [area plot] table [x expr=0, y=a, z=b] {
			a b
			0 0
			0 1
			0.5 1
			0.5 0.5
			1 0.5
			1 0
			0 0
		};
		\addplot3 [area plot dark] table [x expr=0, y=a, z=b] {
			a b
			0 0
			0 0.5
			0.5 0.5
			0.5 0
			0 0
		};
		\addplot3 [mymark] table [x expr=0, y=a, z=b] {
			a b
			0.5 0.5
		};
		\addplot3 [mymark] table [x =a, y expr =0, z expr=0] {
			a 
			1
		};
		\addplot3 [myline] table [x=a, y=b, z expr=0] {
			a b
			1 0
			0 0
		};
		\end{axis}
		\draw[draw=none] (0, 0) rectangle (7, 7);
		\end{tikzpicture}
		\caption{Three sets of achievable points}
		\label{fig:proj-sets}
	\end{subfigure}
	~~~
	\begin{subfigure}{.15\textwidth}
		\begin{tikzpicture}[scale=0.75,font=\footnotesize]
		\draw[draw=none] (-2, -0.33) rectangle (2, 3.5);
		\draw[-, gray!50] (-1,0) -- (1, 0) -- (0, 1.732) -- (-1, 0);
		\draw[fill=black,radius=2.25pt] (0.5, 0.866) circle;
   		\draw[draw=none,fill=black!30,radius=2.25pt] (-1, 0) circle;
        \draw[draw=none,fill=black!30,radius=2.25pt] (1, 0) circle;
        \draw[draw=none,fill=black!30,radius=2.25pt] (0, 1.732) circle;
		\node at(1,-0.3) {[(1,0,0)]};
		\node at(-1,-0.3) {[(0,1,0)]};
		\node at(0,2) {[(0,0,1)]};
		\node at(1.5,1) {[(0.5,0,0.5)]};
		
		\end{tikzpicture}
		\caption{Projection of the intersections}
		\label{fig:proj-plane}
	\end{subfigure}%
	~~~
	\begin{subfigure}{.15\textwidth}
		\begin{tikzpicture}[scale=0.75]
		\draw[draw=none] (-2, -0.33) rectangle (2, 3.5);
        
        \draw[-,very thick,saffron] (0, 1.732) -- (1, 0);
        
        \draw[-,fill=bananamania,draw=none] (-1,0) -- (1, 0) -- (0, 1.732) -- (-1, 0);
        
        \draw[-,draw=bananamania!95!black] (0, 1.732) -- (-1,0) -- (1,0);
		\draw[-, bananamania!95!black] (0.5, 0.866) -- (-1, 0);
		
		
		\draw[fill=saffron,draw=none,radius=2.25pt] (0.5, 0.866) circle;
		\draw[draw=none,fill=indianred,radius=2.25pt] (-1, 0) circle;
		\draw[draw=none,fill=saffron,radius=2.25pt] (1, 0) circle;
		\draw[draw=none,fill=saffron,radius=2.25pt] (0, 1.732) circle;
       
		\end{tikzpicture}
		\caption{Convex regions as a simplicial complex}
		\label{fig:proj-partit}
	\end{subfigure}
	\caption{Projections of intersections of Pareto frontiers to the projection plane, which in 3D is the triangle formed by the points $(1,0,0)$, $(0,1,0)$ and $(0,0,1)$. In Fig. \ref{fig:proj-plane}, labels represent directions and not individual vectors.}
	\label{fig:proj}
\end{figure}

\paragraph{Computing and representing regions.}
As explained above, a region depends on the preference ordering of actions.
To compute regions where this ordering is constant, we use geometric methods.
In the example of Fig.~\ref{fig:vis-regions}, the point where the preference ordering changes is $(0.5,0.5)$, which is where the two Pareto frontiers intersect.
So, intuitively, by drawing the Pareto frontiers and finding the points of intersection, we can identify the regions (sets of the corresponding directions) where the preference ordering over actions is constant.

In Figure~\ref{fig:type2}, we give a set of three examples to illustrate the construction of regions.
The left picture in each row of the figure shows two Pareto frontiers: One is the blue tetrahedron, generated by Maximizer's free, but exclusive choice between target sets.
The other is a red box of different sizes, generated by the possibility to reach a state in all target sets with a given probability. 
From top to bottom, we increase this probability, thereby increasing the size of the box, yielding three different examples.
We define regions as sets of directions.
In order to draw directions, it is useful to consider the so-called \emph{projective hyperplane}.
It is the set of all directions and can be drawn (in our case with non-negative vectors only) as a triangle with corners $[1,0,0],[0,1,0],[0,0,1]$, capturing all directions.
When a point (vector) $\vec v$ is projected into its direction $[\vec v]$, it intuitively corresponds to drawing a ray from the origin through the point $\vec v$.
If we identify the projective hyperplane with the hyperplane passing through the \emph{points} $(1,0,0)$, $(0,1,0)$ and $(0,0,1)$ (or more precisely with this triangle) then the intersection of the ray and the hyperplane, say point $\vec p_v$, is the \emph{projection} of $\vec v$ to the projective hyperplane. 
In our example, the right side of the figure shows the projection of the intersection of the Pareto frontiers onto the projective hyperplane.
This gives rise to three regions, each with different preference ordering: the inner open triangle, its boundary and the outer triangle with the hole.
Minimizer prefers the red action in the outside triangle, the blue one in the inside triangle, and is indifferent on the boundary.
As these regions are hard to describe (as well as possibly not even convex and connected), we  triangulate the projections to get smaller regions which are convex and generated by finitely many points. 
The triangulation of the top right of Figure \ref{fig:type2} is depicted in Figure~\ref{fig:triangulation-example}.
Further note that while the preference ordering of actions is constant in each region, the faces of a region represent turning points of the preference ordering; hence these faces need to be separate regions like is customary for timed automata~\cite{DBLP:journals/tcs/AlurD94}. 
Hence in order to represent the regions, we thus decompose the triangle (generally, in higher dimensions, a \emph{simplex}) into open triangles, open line segments and points (in general into a \emph{simplicial complex}, i.e. the simplex together with its faces and recursively their faces).

As another example of the projection to the projective hyperplane and the triangulation, consider Figure~\ref{fig:proj-sets} with three achievable sets: two rectangles -- $\dwc(\{(1, 0, 0.5)\})$ and $\dwc(\{(0.5, 0, 1)\})$; and one line -- $\dwc(\{(0, 1, 0)\})$.
The frontiers of the sets generate only one non-empty intersection\footnote{The neutral element $\{\vec 0\}$ is not considered a non-empty intersection.}, namely the point $(0.5,0,0.5)$.
Its projection is represented by its direction, $[(0.5, 0, 0.5)]$ in Figure \ref{fig:proj-plane}.
In order to keep the representation of regions effective, we again triangulate regions into finer ones, which are convex and generated by finitely many points, see Fig.~\ref{fig:proj-partit}. 
Finally, note that Pareto frontiers of smaller dimensions may induce regions that are faces of the projective hyperplane (triangle). In this example, the vertex at $[(0,1,0)]$ is its own region, as it is the only direction where playing the line-action is not optimal for Minimizer.
We can also see that in Fig.~\ref{fig:deflate-regions}: the red vertex corresponds to Minimizer choosing one of the "rectangular" actions (as the other action is suboptimal), the orange region to choosing the action yielding $(0,1,0)$, and in the yellow Minimizer is indifferent between all actions.
Since these cases only arise on faces of the projective hyperplane, the decomposition into the simplicial complex of the projective hyperplane (triangle) caters for these corner cases. 
Note that for identifying the regions, we considered the point $[(0.5,0,0.5)]$, which is the turning point of preference between the two rectangles. 
As both of them are suboptimal in this direction, this is not necessary to get the coarsest partition. 
However, it is not a problem to use a finer partition (splitting the orange line and the yellow triangle), as we still have the invariant that in every region the strategy of Minimizer is constant.

\section{Algorithm}\label{sec:algo}

\subsection{Lifting the concepts from the single-dimensional case}\label{sec:defs}

Before giving the algorithm, we have to define extensions of the concepts of best exit and simple end component (SEC) introduced in \cite{cav18} to the multi-objective setting, as intuitively discussed in the previous section. 
To this end, we also introduce the concept of \emph{regions}.

\paragraph{Best exits}
In the single-dimensional case, the best exit of an EC was just the best exiting action for the Maximizer. 
In the multi-dimensional setting, Maximizer cannot only pick the best exit, but first visits all targets inside the (S)EC and then use any combination of exits to achieve any desired tradeoff. 
The definition of best exit depends on a parameter $f$. This function is used to calculate the set of achievable points from an exit. 
We can instantiate it with $\val$ to denote the actual set of achievable points, as well as with the over-approximation $\ub$; in the algorithm, we do the latter, as we do not know $\val$.

Thus we define the best exit in the multi-dimensional setting (similarly as $X(\state,\action)$ in Section \ref{sec:bvi}):
\[\exit[f](T) \eqdef \left(\dwc(\sum_{\state \in T} \{\mathbbm 1_{\targetsets}(\state)\}) + \conv(\ \bigcup_{\mathclap{(\state,\action) \in \exits(T_\Box)}} f(\state,\action))\right)\cap\mathbf 1\]
The first part ensures that, if a target is in the EC, all states in the EC have probability 1 to reach it; the second part takes the convex hull of the union of (Pareto sets of) all of Maximizer's exits, corresponding to randomizing over the exiting actions.
For general ECs, this may give a strict over-approximation since Minimizer might prevent Maximizer from freely visiting all states and combining all actions.
However, for SECs the expression is later shown exact.
Note that here we use the convention $\bigcup_\emptyset (\cdot) = \{\vec 0\}$, which is a neutral and minimal element. 
This solves the corner case of an EC without any exit.

\paragraph{Regions}
The extension of SEC works only when partitioning the set of all possible directions into \emph{regions}, and then applying the same ideas as in the single-dimensional case in each region separately.

\begin{definition}[Region]
	A \emph{region} is a subset $R \subseteq \mathcal \directions$ of directions. 
\end{definition}
To keep the presentation simple, we rely on a very general definition of regions at this point. We will see later in Section \ref{sec:getReg} how we can restrict to handling only regions
that correspond to a finitely generated cone.
In the following, slightly abusing notation, we sometimes view a region $R$ as the set of points it contains, i.e. $\{v \in [0, 1]^n \mid \exists \dir \in R: [v] = \dir\}$.

\paragraph{Simple ECs}
In the single-dimensional case, the idea of SEC is the following:
If Minimizer fixes their strategy to the optimal strategy (i.e. ignores all suboptimal actions), and in the remaining game there still exists an EC, then this EC is simple.
It is the best choice of Minimizer to allow Maximizer to roam around freely in the SEC and pick the best exit.
Thus, all states in the SEC have the same value, namely that of the best exit (recall, best for Maximizer).

In the multi-dimensional case, the optimal strategy of Minimizer depends on the tradeoffs between the different goals. 
This is why, to generalize the concept of SEC, we need to add the restriction that a set of states is a SEC for some region $R$, as the trade-offs between the goals are resolved in the same way in the whole region, or in other words: where the optimal strategy of Minimizer is the same for all directions in $R$.
Formally:
\begin{definition}[Regional $\SEC$]
	An EC $T$ is \emph{a regional simple end component for some \region\ $R$}, if for every direction $\dir \in R$ and all states $s \in T$, $\val(s)[\dir] = \exit[\val](T)[\dir]$.
\end{definition}
Note that from this definition we also know that all states in the regional $\SEC$ have the same value. 
Moreover, as we shall see, the definition implies that on $R$, the optimal strategy of Minimizer should be the same in all directions.
Lifting this to a set of regions we have the following property:

\begin{definition}[Consistent Partition]
	Let $T$ be an EC and $f \colon S \rightarrow 2^{\mathbb{R}^n}$. 
	A partition of the set $\directions$ of directions into a set of regions $\regions$ is called \emph{\consistent} w.r.t. $T$ and $f$ if for all $R \in \regions$ and all $\dir_1,\dir_2 \in R$, $s \in T_\circ$ and $a\in \Av(s)$ it holds that
	\begin{align*}
	f(s,a)[\dir_1] &= \min_{b \in \Av(s)} f(s,b)[\dir_1] \iff \\
	f(s,a)[\dir_2] &= \min_{b \in \Av(s)} f(s,b)[\dir_2].
	\end{align*}
\end{definition}

In the other direction, we shall see that every possible regional SEC can be defined on regions of an arbitrary consistent partition. 
Hence, algorithmically, we shall be looking for such partitions first and then for regional SECs.


\subsection{Algorithms}

We present our overall bounded VI procedure as Algorithm~\ref{alg:MOBVI}. In the following, we provide intuitive explanations of the algorithm and its sub-procedures, as well as the proofs for the lemmata on correctness of the sub-procedures. 
The correctness of the whole algorithm is proven in Section~\ref{sec:proofStruct}.
Section~\ref{sec:getReg} gives more details on the effectiveness of the computation in Algorithm~\ref{alg:getregions}, as that pseudocode is rather mathematical and it is not trivial to see that it is indeed effectively computable and yields an effective approximation.
 
\begin{algorithm}[t]
	\caption{Multi-Objective Bounded Value Iteration}\label{alg:MOBVI}
	\begin{algorithmic}[1]
		\Require 
			\Statex SG $\G$, generalized-reach. objective $\targetsets$, precision $\varepsilon$
		\Ensure 
			\Statex $\mathcal L, \mathcal U \text{ such that } \forall \dir\in\directions:\mathcal L[\dir]\leq \pareto[\dir]\leq\mathcal  U[\dir]$ and $\mathcal U[\dir]-\mathcal L[\dir]<\varepsilon$
		\Procedure{\textsf{MO-BVI}}{$\G$,$\targetsets$,$\varepsilon$}
		\For {each $\state \in \states$} \Comment{Initialization}
		\State $\lb(\state) \gets \set{\vec{0}}$ \Comment{to the least and}
		\State $\ub(\state) \gets \dwc(\set{\vec{1}})$ \Comment{the greatest values}
		\EndFor
		\medskip
		
		\Repeat \Comment{The new Bellman update $\mop$}
		\State $\lb\gets \bellman(\lb)$ \Comment{Standard Bellman updates}
		\State $\ub\gets \bellman(\ub)$ 
		\State $\ub \gets \doSECs(\G,\lb,\ub)$ \label{line:doSECs} \Comment{New treatment}
		\Until $\displaystyle\max_{\dir \in \mathcal{D}} \ub(\initstate)[\dir] - \lb(\initstate)[\dir] < \varepsilon$ \Comment{$\varepsilon$-approximate} \label{line:StoppingCrit}
		\State\Return $(\lb(\initstate),\ub(\initstate))$
		\EndProcedure
	\end{algorithmic}
\end{algorithm}

\begin{algorithm}[t]
	\caption{Deflate candidate \rSEC{}s}\label{alg:doSECs}
	\begin{algorithmic}[1]
		\Require 
			\Statex SG $\G$, functions $\lb$ and $\ub$ such that for all states $s:$ $\lb(s) \subseteq \val(s) \subseteq \ub(s)$
		\Ensure 
			\Statex	Updated upper bound $\ub'$
		\Procedure{$\doSECs$}{$\game,\lb,\ub$}
		\Statex \Comment{In each MEC, we compute relevant regions, find all candidate $\SEC$s and decrease their upper bounds}
		\State $\ub'(s) \gets \{\vec{0}\}$ for all $\state \in S$ \Comment{Result variable}
		\State $\mathcal M \gets \MECs(\G)$ \Comment{MEC decomposition of the game}
		\For{each $T \in \mathcal M$} \label{line:doSECs:get-mecs}
		\State $\regions \gets  \GETREGIONS(T,\lb)$
		\For {each $R \in \regions$}
		\State $\mathcal{S} \gets \FINDSECS(T,\lb,R)$ \Comment{Candidate SECs} \label{line:find}
		\For {each $\state \in T$}
		\Statex \Comment{If in candidate SEC, deflate~~~~~~~~~~~~~~~~~~~~~~}
			\If {$\state \in C$ for some $C \in \mathcal{S}$} 
				\State \mbox{\hspace*{-2mm}$\ub'(\state) \gets \ub'(\state) \cup (\ub(\state) \cap \exit[\ub](C) \cap R)$} \label{line:deflate}
				\Statex \vspace{-0.1em} \Comment{Otherwise, keep estimate in this region}
			\Else 
				\State \mbox{\hspace*{-2mm}$\ub'(\state) \gets \ub'(\state) \cup (\ub(\state) \cap R)$} \label{line:keepEstimate}
			\EndIf
		\EndFor
		\EndFor
		\EndFor
		\State \Return $\ub'$
		\EndProcedure
	\end{algorithmic}
\end{algorithm}

\begin{algorithm}[t]
	
	\caption{Compute consistent partition into regions}\label{alg:getregions}
	\begin{algorithmic}[1]\Require 
		\Statex MEC $T\subseteq \states$, function $\lb$ such that for all states $s:$ $\lb(s) \subseteq \val(s)$
		\Ensure 
		\Statex	Consistent partition $\regions$ w.r.t. $T$ and $\lb$
		\Procedure{$\GETREGIONS$}{$T,\lb$}
		\State $\regions \gets \{\directions\}$ \Comment{initialize with trivial partition}
		\For{$\state\in T_\circ$} 
			\For {each $B \subseteq \Av(\state)$}
				\State \hspace{-0.6em}$R_B \gets \{\dir \in \directions \mid B = \argmin_{a \in \Av(\state)} \lb(\state,a)[\dir]\}$ \label{line:regionForSet}\vspace{-0.45em}
			\EndFor
			\State  $\regions' = \{R_B \mid B \subseteq \Av(s), R_B \neq \emptyset\}$ 
			\State $\regions \gets $ common refinement of $\regions$ and $\regions'$\label{line:regionCR}
		\EndFor
		\State \Return $\regions$
		\EndProcedure
	\end{algorithmic}
\end{algorithm}

\begin{algorithm}[t]
	\caption{Find candidate \rSEC{}s}\label{alg:findsecs}
	\begin{algorithmic}[1]
		\Require{Under-approximation $\lb$, EC $T$, region $R$ from a consistent partition w.r.t. $T$ and $L$}
		\Ensure{Set of Regional $\SEC$s for $R$, according to $\lb$}
		\Procedure{$\FINDSECS$}{$\lb$,$T$,$R$}
		{$T\subseteq \states,\lb,$ region $R$}
		\State $\dir \gets $ arbitrary element of $R$
		\State $\Av' \gets \Av$
		\For {each $s \in T_\circ$}
		\Statex \Comment{Keep only optimal Minimizer actions~~~~~~~~~~~~~~~~~}
		\State $\Av'(s) \gets \{a \in \Av(s) \mid \lb(s,a)[\dir] =$ 
		\Statex ~~~~~~~~~~~~~~~~~~~~~~~~~~~~~~~~~~~~~~~~~~~~~~~~~~~~~~~$\min_{b \in \Av(s)} \lb(s,b)[\dir]\}$ \label{line:find_secs_actions}
		\EndFor
		\vspace{-0.5em}
		\State \textbf{return} $\MECs(T|_{\Av[\prime]})$ \Comment{\parbox[t]{.5\linewidth}{MEC decomposition on $T$ with actions restricted to~$\Av[\prime]$}}
		\EndProcedure
	\end{algorithmic}
\end{algorithm}

\paragraph{Algorithm~\ref{alg:MOBVI} (\textsf{MO-BVI})} initializes the under- and over-app\-roximations $\lb$ and $\ub$ and updates them using the new Bellman update operator $\mop$. 
This operator first performs the standard Bellman updates and then calls the procedure $\doSECs$, which we exemplified in Section \ref{sec:example}.
The intuition of the whole algorithm is, that as the under-approximation converges, eventually the correct regional SECs are found and deflated. 
When all regional SECs are deflated, the over-approximation approaches the true set of achievable vectors in the limit.
Note that the stopping criterion can be evaluated, as the under- and over-approximation are at all times described by finitely many points, for details see Section~\ref{sec:getReg}.

\paragraph{Algorithm~\ref{alg:doSECs} ($\doSECs$)} is the heart of our new algorithm. It implements the correct handling of end components, ensuring convergence of the upper and the lower approximation to the \emph{same} fixpoint.
As every SEC is an EC and every EC is a subset of a MEC, the algorithm first computes the MEC-decomposition.
Then, for each MEC we compute a consistent partition of the set of directions into regions using Algorithm \ref{alg:getregions}.
Finally, Algorithm \ref{alg:doSECs} updates the over-approximation of every state in the considered MECs.
It does so piece by piece, region by region; this is why in Lines \ref{line:deflate} and \ref{line:keepEstimate} we always intersect with $R$, restricting the update to points in the current region, and take the union with the intermediate result $\ub'$, adding all the points from the previous iterations of the loop over $\regions$.
If a state is part of a regional SEC $C$ (as detected by Algorithm \ref{alg:findsecs}), the upper bound in the current region is reduced to $\exit[\ub](C)$, i.e. to the best exit from the regional SEC.
If a state is not in a candidate SEC for the current region, its upper bound does not change.
Note that the best exit depends on $\ub$, our current best over-approximation.
The intersection with $\ub(s)$ ensures that deflate is monotonic.
Formally, we have the following:
\begin{lemma}[$\DEFLATE$ is monotonic and sound]\label{lem:deflateMonSound}
	Given a game $\G$ with correct upper and lower bounds $\ub$ and $\lb$  (i.e. $\forall s \in S: \lb(\state) \subseteq \val(\state) \subseteq \ub(\state)$), $\ub' = \doSECs(\G,\lb,\ub)$ has the following properties: For all states $s \in S$,
	\begin{itemize}
		\item $\ub'(\state) \subseteq \ub(\state)$ (Monotonicity),
		\item $\val(\state) \subseteq \ub'(\state)$ (Soundness),
	\end{itemize}
\end{lemma}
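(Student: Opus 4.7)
My plan is to verify both properties by direct inspection of Algorithm~\ref{alg:doSECs}, relying on two structural facts about the subroutines: (i) $\GETREGIONS$ returns a partition of $\directions$ (Algorithm~\ref{alg:getregions} starts from the trivial partition $\{\directions\}$ and only ever takes common refinements in Line~\ref{line:regionCR}), and (ii) every $C\in\FINDSECS(T,\lb,R)$ is by construction an EC of the game in which Minimizer's actions in $T$ are restricted to those minimizing $\lb(\cdot,\cdot)[\dir]$ for $\dir\in R$. \emph{Monotonicity} is then immediate: for $s\in T\in\MECs(\G)$ the value $\ub'(s)$ is a union over $R\in\regions$ of subsets of $\ub(s)$ of the form $\ub(s)\cap R$ (Line~\ref{line:keepEstimate}) or $\ub(s)\cap\exit[\ub](C)\cap R$ (Line~\ref{line:deflate}), and for $s$ outside every non-trivial MEC the initialization $\ub'(s)=\{\vec 0\}$ is already contained in the non-empty downward-closed set $\ub(s)$.

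For \emph{soundness}, fix $\vec v\in\val(s)$ and set $\dir=[\vec v]$. If $s$ is in no MEC the claim follows trivially, so let $T$ be its MEC and $R$ the unique region containing $\dir$. If $s$ lies in no candidate SEC for $R$, Line~\ref{line:keepEstimate} puts $\ub(s)\cap R$ into $\ub'(s)$, and $\vec v\in\val(s)\cap R\subseteq\ub(s)\cap R$ finishes the case. If instead $s\in C$ for some $C\in\FINDSECS(T,\lb,R)$, Line~\ref{line:deflate} contributes $\ub(s)\cap\exit[\ub](C)\cap R$; since $\vec v\in\ub(s)\cap R$ is already known, it remains to establish the key inclusion $\vec v\in\exit[\ub](C)$.

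The key inclusion is the multi-dimensional analogue of the single-dimensional soundness result for deflation from \cite{cav18}. I would argue it by fixing any Maximizer strategy $\sigma$ witnessing $\vec v\in\val(s)$ and analyzing the induced Markov chain against Minimizer strategies $\tau$ confined to the restricted action set $\Av'$ used by $\FINDSECS$. Because $C$ is an EC of this restricted game, Minimizer cannot force exiting $C$; therefore, under any such $\tau$ almost every play either remains in $C$ forever, contributing at most the capped indicator $\dwc(\sum_{s'\in C}\{\mathbb{1}_\targetsets(s')\})\cap\mathbf 1$ to its reward vector, or exits through some $(s',a)\in\exits(C_\Box)$, contributing a vector in $\val(s',a)$. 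Expected rewards then decompose into the sum of these two contributions, convex-combined over exits; since $\vec v$ must lie pointwise below every such expectation, $\vec v$ lies in the Minkowski sum defining $\exit[\val](C)$. Monotonicity of $\exit[\cdot](C)$ in its set-valued argument together with $\val\subseteq\ub$ yields $\vec v\in\exit[\val](C)\subseteq\exit[\ub](C)$.

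The main obstacle in making this argument rigorous is the multi-dimensional decomposition in the previous paragraph: in one dimension the analogous step is a standard scalar MDP residue argument, but here the exit values $\val(s',a)$ are convex downward-closed polyhedra, optimal Maximizer strategies may require infinite memory (as flagged in the introduction), and the interplay between the in-$C$ indicator contribution and exit contributions is mediated by a Minkowski sum capped at $\vec 1$. Careful measure-theoretic bookkeeping on the induced Markov chain, combined with a limit argument over memory-bounded approximants, is the natural route. A subsidiary subtlety worth noting is that Minimizer strategies using actions outside $\Av'$ need not be analyzed separately: restricting Minimizer only weakens the adversary, so the bound derived against $\Av'$-confined $\tau$ already contains $\val(s)$.
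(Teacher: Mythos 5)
Your overall structure matches the paper's: monotonicity is read off directly from Lines~\ref{line:deflate} and~\ref{line:keepEstimate} (every contribution to $\ub'(\state)$ is intersected with $\ub(\state)$), and soundness is reduced, region by region, to the single inclusion $\val(\state)\subseteq\exit[\val](C)\subseteq\exit[\ub](C)$ for $\state$ in a candidate SEC $C$. For that key inclusion, however, the paper takes a different route: it merges $C$ into one fresh Maximizer state $t$ that owns \emph{all} exits of $C$ and lies in every target set met inside $C$, argues $\val(\state)\subseteq\val(t)$ for each $\state\in C$ (handing Minimizer's states to Maximizer only enlarges the achievable set), and then evaluates $\exit$ on the singleton $\{t\}$ via a fixpoint argument. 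Your route --- decomposing the plays of the induced Markov chain into those staying in $C$ forever and those leaving through a Maximizer exit --- is a legitimate and arguably more transparent alternative, but one step of it is wrong as written.

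The gap is the ``therefore'': from ``$C$ is an EC of the restricted game, so Minimizer cannot \emph{force} exiting $C$'' you conclude that under \emph{any} $\Av'$-confined $\tau$ almost every play either stays in $C$ forever or leaves through $\exits(C_\Box)$. That does not follow. An EC only requires the \emph{existence} of some witnessing action set $B$ whose actions stay inside; $\Av'$ may well retain Minimizer actions that leave $C$, and under a $\tau$ using them plays exit through Minimizer state-action pairs, which do not appear in the union defining $\exit[\ub](C)$. The repair is to instantiate one \emph{specific} $\tau$ that plays only $B$-actions (every state of an EC has at least one by Definition~\ref{def:EC}); since $\vec v\in\val(\state)$ is guaranteed against all $\tau$, bounding it against this single staying strategy suffices, and against it your dichotomy really is exhaustive. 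Once you do this, the restriction to $\Av'$ plays no role at all --- which is as it should be, since the lemma must be sound for \emph{arbitrary} EC candidates returned by $\FINDSECS$ under an imprecise $\lb$, not only for true regional SECs. A smaller inconsistency: for states outside every MEC you invoke the literal initialization $\ub'(\state)=\{\vec{0}\}$ to get monotonicity but then claim soundness ``follows trivially''; read literally, Algorithm~\ref{alg:doSECs} would indeed return $\{\vec{0}\}$ there, which is \emph{not} sound --- the intended reading is that such states retain $\ub(\state)$, and you should state that explicitly rather than use the literal pseudocode for one bullet and its intended meaning for the other.
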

\begin{proof}[Proof]
	For monotonicity notice that due to line \ref{line:deflate}, $\ub'(s)$ is obtained by intersecting $\ub(s)$ with $\exit[\ub](C)$ on each region $R \in \regions$, which makes sure that $\ub'(\state) \subseteq \ub(\state)$ in the end. For the second item, we have that $\val(\state') \subseteq \ub'(\state')$ for all states $s'$ by assumption. Recall that $\exit[\ub](\val)$ is the set of points achievable from $s \in C$ assuming that Maximizer has control over all states in $C$. Clearly, $\val(s) \subseteq \exit[\val](C) \subseteq \exit[\ub](\val)$, which proves soundness (see \ifarxivelse{Appendix \ref{sec:deflateMonSound}}{\cite[Appendix A.1]{techreport}} for details).
\end{proof}

\paragraph{Algorithm \ref{alg:getregions} ($\GETREGIONS$)} has to return a consistent partition of the set of directions $\directions$, i.e. for all directions in a region, the optimal strategy of Minimizer needs to be the same.
To do that, for every state in the given MEC, we partition the set of directions into regions according to the optimal strategy of Minimizer, i.e. which actions are optimal in the region\footnote{The implementation suggested in Section \ref{sec:getReg} actually computes regions for all orderings of actions. It then describes the regions with the same optimal actions as a union of all regions where these actions are at the top of the ordering.}.
Then we take the \emph{common refinement} of all these partitions.
The common refinement of two partitions $\regions_1$ and $\regions_2$ is defined as the coarsest partition $\regions$ such that for all $R_1 \in \regions_1$, $R_2 \in \regions_2$ we have $R_1 \cap R_2\in\regions$. Notice that the common refinement of any number of consistent partitions (w.r.t. the same $T$ and $\lb$) is again consistent.
Intuitively, in every resulting region the strategy of all Minimizer states in the MEC is constant.
Formally, we have the following lemma:

\begin{lemma}[$\GETREGIONS$ is sound]\label{lem:getRegionsConsistent}
	For any set of states $T$ and bound function $\lb$, the set of regions $\mathcal{R}$ returned by procedure $\GETREGIONS$($T,\lb$) is a consistent partition.
\end{lemma}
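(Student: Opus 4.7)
The plan is to proceed in three steps, building up from the inner loop to the outer loop of Algorithm~\ref{alg:getregions}.

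First, I would fix a single Minimizer state $s \in T_\circ$ and establish that the family $\regions'_s := \{R_B \mid B \subseteq \Av(s), R_B \neq \emptyset\}$ computed on line~\ref{line:regionForSet} is a partition of $\directions$ that is consistent w.r.t.\ $T$ when only the state $s$ is considered. The partition property is immediate: since $\Av(s)$ is finite and nonempty, every direction $\dir \in \directions$ has a unique well-defined set of $\lb$-minimizing actions $B^{\star}(\dir) := \argmin_{a \in \Av(s)} \lb(s,a)[\dir]$, and hence $\dir$ lies in $R_{B^{\star}(\dir)}$ and in no other $R_B$. Consistency restricted to $s$ is also by construction: if $\dir_1, \dir_2 \in R_B$ then both share the same optimal-action set $B$ at $s$, which is exactly the iff required in the definition of consistent partition for this one state.

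Second, I would show that the common refinement preserves consistency. Concretely, if $\regions_1$ and $\regions_2$ are partitions of $\directions$ that are consistent w.r.t.\ subsets $T_1, T_2 \subseteq T_\circ$ respectively (meaning the iff in the definition holds whenever the quantified Minimizer state lies in $T_i$), then their common refinement is consistent w.r.t.\ $T_1 \cup T_2$. Indeed, any region $R$ in the common refinement is contained in some $R_1 \in \regions_1$ and some $R_2 \in \regions_2$, so for any $\dir_1, \dir_2 \in R$ and any $s \in T_1 \cup T_2$, the equality of optimal-action sets at $\dir_1$ and $\dir_2$ follows from whichever of $\regions_1, \regions_2$ covers $s$. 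In particular, the operation is associative, so taking common refinements iteratively is well-defined.

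Third, I would combine these two observations by a straightforward induction over the outer loop. The loop invariant is that after processing the first $k$ states of $T_\circ$ (in some enumeration), the current $\regions$ is a partition of $\directions$ consistent w.r.t.\ those $k$ states. The base case $k = 0$ holds for the trivial partition $\{\directions\}$. For the inductive step, when state $s$ is processed, $\regions$ is replaced by the common refinement with $\regions'_s$; by step one this newly-added partition is consistent w.r.t.\ $\{s\}$, and by step two the common refinement is therefore consistent w.r.t.\ the first $k+1$ states. Once the loop terminates, $\regions$ is consistent w.r.t.\ all of $T_\circ$, which is precisely the claim.

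The main obstacle is conceptual rather than technical: the soundness argument above is purely set-theoretic and does not rely on any effective representation of the sets $R_B$ or of common refinements. Verifying that the algorithm actually computes finite, representable partitions (as needed for the rest of \textsf{MO-BVI} to terminate) is a separate matter, addressed in Section~\ref{sec:getReg}; here only consistency is asserted, so no further work is required.
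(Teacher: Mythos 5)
Your proof is correct and follows essentially the same route as the paper's: the paper also argues that each per-state family $\{R_B\}$ is a partition consistent w.r.t.\ $\{s\}$ and that the common refinement over all $s \in T_\circ$ yields a consistent partition for $T$. You merely spell out the details (uniqueness of the argmin set, preservation of consistency under refinement, and the loop induction) that the paper leaves implicit.
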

\begin{proof}[Proof]
	We simply consider for every subset $B \subseteq \Av(s)$, $s \in T_\circ$, the region $R$ where the actions in $B$ are all optimal. This yields a partition $\regions' = \{R_B \mid B \subseteq \Av(s), R_B \neq \emptyset\}$ which is consistent w.r.t. $\{s\}$ and $\lb$. We repeat this for all $s \in T_\circ$ and take the common refinement of all partitions obtained in this way, yielding a consistent partition for the whole EC $T$ and $\lb$. See the next section on how to technically implement these operations effectively.
\end{proof}

\paragraph{Algorithm \ref{alg:findsecs} ($\FINDSECS$)} is very similar to the single-di\-mensional case (\cite[Alg. 2]{cav18}).
The difference is that in the multi-objective setting we cannot just fix the strategy of Minimizer and compute the ECs in the resulting SG.
We have to pick a direction from the region and consider the strategy of Minimizer w.r.t. that direction.
Since we know that the given region is from a consistent partition by assumption on the input (which is true due to Lemma \ref{lem:getRegionsConsistent}), Minimizer's optimal strategy is the same for all directions in the input region.
Thus the direction can be arbitrarily chosen from that region.
We stress that $\FINDSECS$ is called with the current under-approximation and returns only 
those state sets, which according to the current lower bound form regional SECs; these need not actually be regional SECs according to $\val$.
However, as sketched in the proof of Lemma \ref{lem:deflateMonSound}, deflation is so conservative that it is sound given any EC.
The required property of $\FINDSECS$ is that it eventually finds the correct regional SECs when $\lb$ converges to $\val$ close enough, or formally:

\begin{lemma}[$\FINDSECS$ is sound]\label{lem:findSECSsoundAndCorr}
	For $T \subseteq \states$ and a region $R$ from a consistent partition, it holds that $X \in \FINDSECS(T,\val,R)$ if and only if $X$ is an inclusion-maximal $\rSEC$ for region $R$.
\end{lemma}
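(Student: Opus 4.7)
The plan is to prove both implications by relating inclusion-maximal ECs of the restricted subgame $T|_{\Av'}$ to inclusion-maximal regional SECs of the original game, leveraging the consistency of $R$ throughout. As a preliminary observation I would argue that because $R$ belongs to a consistent partition with respect to $T$ and $\val$, the set $\Av'(s)$ computed on line~\ref{line:find_secs_actions} is independent of the specific representative $\dir \in R$ picked by the algorithm: consistency means that for every Minimizer state $s$ and action $a$, the predicate ``$a$ achieves $\min_{b \in \Av(s)} \val(s,b)[\dir]$'' has the same truth value for all $\dir \in R$. This well-definedness is essential because both the algorithm and the regional SEC property quantify over all of $R$.

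For the ($\Rightarrow$) direction, suppose $X$ is an inclusion-maximal MEC of $T|_{\Av'}$. Fix any $\dir \in R$. By construction of $\Av'$, every Minimizer action available inside $T|_{\Av'}$ at a state $s \in X$ is $\dir$-optimal, hence all such actions yield $\val(s,a)[\dir] = \val(s)[\dir]$. Consequently Minimizer is effectively indifferent within $T|_{\Av'}$, turning the subgame on $X$ into an MDP controlled by Maximizer. Since $X$ is a MEC in this MDP, Maximizer can with probability one visit every state of $X$ and then freely combine the exits in $\exits(X_\Box)$, which by the definition of $\exit[\val](X)$ means the achievable set from every $s\in X$ equals $\exit[\val](X)$. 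Thus $\val(s)[\dir] = \exit[\val](X)[\dir]$ for all $s \in X$ and $\dir \in R$, so $X$ is a regional SEC. Inclusion-maximality transfers because any strictly larger regional SEC would, by the ($\Leftarrow$) direction below, extend to an EC of $T|_{\Av'}$ strictly containing $X$, contradicting the maximality assumption.

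For the ($\Leftarrow$) direction, let $X$ be an inclusion-maximal regional SEC for $R$ and again fix $\dir \in R$. The crux is to show that $X$ is an EC of $T|_{\Av'}$; then inclusion-maximality as a MEC of $T|_{\Av'}$ follows by contrapositive from the ($\Rightarrow$) direction. I would argue that for every $s \in X_\circ$ at least one action $a \in \Av'(s)$ stays inside $X$: otherwise every $\dir$-optimal Minimizer action at $s$ exits $X$, so by optimality there is a successor $s' \notin X$ with $\val(s')[\dir] = \exit[\val](X)[\dir]$, and a direct computation shows that $X \cup \{s'\}$ still satisfies the regional SEC condition on $R$, contradicting inclusion-maximality of $X$. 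Combined with Maximizer's $X$-internal actions (which already witness connectivity because $X$ is an EC in the original game), these stay-in-$X$ optimal Minimizer actions provide the witnessing action set $B$ in the EC definition for $T|_{\Av'}$.

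The main technical obstacle is the enlargement argument in the previous paragraph, i.e.\ verifying that if every optimal Minimizer action at some $s \in X_\circ$ exits $X$ then we can strictly enlarge $X$ while preserving the regional SEC property uniformly in $\dir \in R$. This requires carefully tracking the Pareto set of an exiting action's successors and exploiting both the downward closure of $\val$ and the consistency of the partition to lift the single-point equality $\val(s')[\dir] = \exit[\val](X)[\dir]$ to the whole region. The rest of the proof is essentially a direct lift of the single-dimensional reasoning from \cite{cav18}, made uniform over $R$ by the consistency property.
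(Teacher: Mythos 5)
Your preliminary observation and your ($\Rightarrow$) direction track the paper's own (very terse) argument: consistency of $R$ makes $\Av'$ independent of the chosen representative direction, Minimizer is indifferent inside $T|_{\Av'}$, so a MEC of the restricted game behaves like a Maximizer-controlled MDP from which precisely $\exit[\val](X)$ is achievable. That part is fine and is essentially what the paper writes.

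The genuine gap is in your ($\Leftarrow$) direction, specifically the enlargement argument, and it matters because your maximality transfer in the ($\Rightarrow$) direction also leans on it. First, even if you could find a successor $s' \notin X$ with $\val(s')[\dir] = \exit[\val](X)[\dir]$, the set $X \cup \{s'\}$ need not satisfy the regional SEC condition: by definition a regional SEC must be an \emph{end component}, and $s'$ may have no actions leading back into $X$ at all, so no witnessing action set $B$ exists and no contradiction with inclusion-maximality arises. Second, the existence of such an $s'$ does not follow from optimality of the exiting action. The evaluation $X[\dir]$ of a Minkowski/convex combination of downward-closed sets is \emph{not} additive in a fixed direction --- e.g.\ $\dwc(\{(1,0)\})$ and $\dwc(\{(0,1)\})$ both evaluate to $0$ in direction $[(1,1)]$, while their average contains $(0.5,0.5)$ and evaluates to $\sqrt{2}/2$ --- so from $\val(s,a)[\dir]=\exit[\val](X)[\dir]$ you only get inequalities relating the successors' directional values, not the pointwise equality $\val(s')[\dir]=\exit[\val](X)[\dir]$ for some particular outside successor. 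The same non-additivity also undermines the implicit step that every Minimizer action of the witnessing set $B$ of a regional SEC survives the restriction to $\Av'$: an action whose successors all lie in $X$ and all evaluate to $\exit[\val](X)[\dir]$ can still have $\val(s,a)[\dir]$ strictly larger than that, hence be discarded on line~\ref{line:find_secs_actions}. To repair the ($\Leftarrow$) direction you would need an argument at the level of strategies (Minimizer can secure $\exit[\val](X)[\dir]$ only by playing $\dir$-optimal actions, and a regional SEC must retain an internal optimal action at each Minimizer state lest its value drop below $\exit[\val](X)[\dir]$ or rise above it), rather than a local set-enlargement.
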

\begin{proof}[Proof]
	Since $R$ is from a consistent partition, we can pick any direction $\dir \in R$ and identify Minimizer's optimal actions for the whole region $R$ as in line \ref{line:find_secs_actions}. Let $X$ be a MEC returned by $\FINDSECS$. Then within this EC, Minimizer only has optimal actions for region $R$ and thus, it does not matter how exactly these choices are resolved -- in particular, it does not make a difference if Maximizer takes over control of Minimizer's states as explained earlier. But then, from each $s \in X$, Maximizer can achieve precisely $\exit[\val](X)$. Thus $X$ is an inclusion-maximal \rSEC{} for region $R$.
\end{proof}

\subsection{Effectiveness of $\GETREGIONS$}\label{sec:getReg}

In this section we describe $\GETREGIONS$ in more detail and argue why the computation is effective.
As discussed in Section \ref{sec:example}, regions in our context bear some resemblance to regions of timed automata \cite{DBLP:journals/tcs/AlurD94}. 
We first recall some geometric notions from e.g. \cite{hatcher2002algebraic} that are necessary to talk about the representation of the considered objects:

A \emph{$(k)$-simplex} is a $k$-dimensional polytope given as the convex hull of $k + 1$ affinely independent vertices.
Intuitively, a simplex is a point, line segment, triangle, tetrahedron etc. 
For example, considering Figure \ref{fig:proj-plane}, the point $(0.5,0,0.5)$ is a 0-simplex, the line between this point and $(1,0,0)$ is a 1-simplex, and the whole triangle is a 2-simplex. 
A \emph{face} of a $k$-dimensional simplex is the convex hull of a non-empty subset of the $k+1$ points making up the simplex. 
A \emph{facet} of a $k$-dimensional simplex is a natural face, i.e. a face that uses exactly $k$ points.
For example, for a 2-simplex which is a triangle, the triangle itself, the 3 edges and 3 vertices are all faces.
Each face is also a simplex.
Only the three edges are facets.

A \emph{simplicial complex (SC)} is a set of simplices closed under taking faces, i.e. every face of a simplex in the SC is also part of the SC. 
It also satisfies the property that a non-empty intersection of any two simplices in the SC is a face of both the simplices.
Using Figure \ref{fig:proj-plane} again: Consider the SC containing the two lines (1-simplices) between $(0.5,0,0.5)$ and $(1,0,0)$ as well as between $(0.5,0,0.5)$ and $(0,0,1)$.
It also has to contain the point (0-simplex) $(0.5,0,0.5)$, as that is the intersection of the lines.
Additionally, the points $(1,0,0)$ and $(0,0,1)$ need to be in the SC, as they are the faces of the lines.
In order to represent (i) partitions (disjoint decompositions) and (ii) open regions, as discussed already in Section~\ref{sec:exMOSG}, we consider the open version: we subtract from each simplex all its facets and, abusing the notation, call them simplices and their union SC. 

The invariant of our computation is that all partitions into regions as well as the Pareto frontiers are represented as finite unions of SCs.
The partitions decompose (triangulate) $\directions$, the part of the projective hyperplane that is in the non-negative orthant ($n$-dimensional analog of the first quadrant), which can thus itself be seen as $(n-1)$-simplex; the Pareto frontiers are given by linear functions on the areas defined by regions, hence consists of SCs in the non-negative orthant (of course, generally not arranged in a hyperplane). 
Altogether, since simplices can be stored as the set of their vertices, we can effectively represent these partitions and frontiers by finite sets of finite sets of points.

For the computation of $\GETREGIONS$ on Line \ref{line:regionForSet}, we can compute the intersections of all pairs of Pareto frontiers of the available actions as they are piece-wise linear with finitely many pieces, and we obtain a finite partition.
The projected intersections then become $k$-simplices for some $k > 0$ (the intersection of Pareto frontiers can be points, lines, planes and so on as seen in Fig.~\ref{fig:proj} and \ref{fig:type2}).
Similarly, on Line \ref{line:regionCR}, starting from two finite partitions, their common refinement after the respective triangulation, as e.g. in Figure \ref{fig:triangulation-example}, is also finite and an SC.
Recall the base case for the partition is the SC of the projective $n-1$-simplex.

Finally, the resulting approximation of $\achievable$ is effective since, given a direction $\dir$, we can identify its region and the respective simplex on the Pareto frontiers $\lb$ and $\ub$ and their value in the intersection with $\dir$.
For effectiveness of the stopping criterion on Line \ref{line:StoppingCrit} of Algorithm \ref{alg:MOBVI}, we additionally note that we only need to test for each simplex the differences in its generating points (more precisely the limits as the simplex is open) since the difference is a linear function on each of the finitely many pieces of the approximation.

\section{Correctness Proof of Algorithm MO-BVI}\label{sec:proofStruct}

Our new Bellman operator $\mop$ defined as one application of the loop body of Algorithm \ref{alg:MOBVI} is a higher order operator transforming pairs of the estimate functions: the two estimate functions $\lb,\ub \in \states \to 2^{[0,1]^n}$ for the under-/over-approximation are transformed into a pair with the modified under- and over-approximation.
It can thus be seen as a function of type
\begin{multline*}
\mop:\left(\states  \to 2^{[0,1]^n}\times2^{[0,1]^n}\right) \to
 \left(\states \to 2^{[0,1]^n}\times 2^{[0,1]^n}\right).
 \end{multline*}
 
We fix an SG $\exGame$ and a generalized-reachability objective $\targetsets$ for the following proofs and implicitly use them as parameters of $\mop$.
Note that for all states $\state \in \states$, $\ub<0>(\state) = \dwc(\{\vec{1}\})$ respectively $\lb<0>(\state)=\{\vec{0}\}$ are set by the initialization.

We consider the sequence
$(\lb<i>,\ub<i>) := \mop^i(\lb<0>,\ub<0>)$, $i \in \mathbb{N}$, output by our algorithm.
We also use the notation $\lb<\infty> := \lim_{i \to \infty} \lb<i> := \bigcup_{i \geq 0}\lb<i>$ and $\ub<\infty> := \lim_{i \to \infty} \ub<i> := \bigcap_{i\geq 0} \ub<i>$.

\begin{proposition}{\textbf{Soundness}}
\label{prop:soundness}
\newline Algorithm \ref{alg:MOBVI} computes for each state $\state \in \states$ a sequence of monotonic over- and under-approximations of $\achievable(\state)$, i.e.
$\forall i \in \mathbb{N}: \lb<i>(\state) \subseteq \achievable(\state) \subseteq \ub<i>(\state)$ and for $i < j, \lb<i>(\state) \subseteq \lb<j>(\state)$ as well as $\ub<i>(\state) \supseteq \ub<j>(\state)$.
\end{proposition}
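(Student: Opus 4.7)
The plan is to prove all four statements jointly (soundness $\lb<i>(s)\subseteq\achievable(s)\subseteq\ub<i>(s)$ and pointwise monotonicity $\lb<i>\subseteq\lb<i+1>$, $\ub<i+1>\subseteq\ub<i>$) by a single induction on $i$. The base case $i=0$ is immediate: by initialization $\lb<0>(s)=\{\vec 0\}\subseteq\achievable(s)$ since $\vec 0$ is trivially achievable, and $\achievable(s)\subseteq[0,1]^n\subseteq\dwc(\{\vec 1\})=\ub<0>(s)$. Moreover $\lb<1>\supseteq\lb<0>$ because $\bellman$ is inflationary when applied to the constant-$\{\vec 0\}$ assignment, and $\ub<1>\subseteq\ub<0>$ because $\bellman$'s output is clipped to $\mathbf 1=\ub<0>$ and Lemma \ref{lem:deflateMonSound} can only shrink this further.

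For the inductive step, three of the four statements are routine. Monotonicity of $\bellman$ combined with the inductive hypothesis yields $\lb<i+1>=\bellman(\lb<i>)\supseteq\bellman(\lb<i-1>)=\lb<i>$ and $\lb<i+1>=\bellman(\lb<i>)\subseteq\bellman(\achievable)=\achievable$, using that $\achievable$ is a fixpoint of $\bellman$ (from the Bellman-operator setup of \cite{DBLP:conf/mfcs/ChenFKSW13}). Soundness of $\ub<i+1>$ follows similarly: the inductive hypothesis gives $\achievable\subseteq\ub<i>$, hence $\achievable=\bellman(\achievable)\subseteq\bellman(\ub<i>)$, and then the soundness part of Lemma \ref{lem:deflateMonSound}, applied to the (by induction) valid pair $(\lb<i+1>,\bellman(\ub<i>))$, delivers $\achievable\subseteq\ub<i+1>$.

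The hard part is the monotonicity $\ub<i+1>\subseteq\ub<i>$. The monotonicity clause of Lemma \ref{lem:deflateMonSound} only gives $\ub<i+1>=\DEFLATE(\lb<i+1>,\bellman(\ub<i>))\subseteq\bellman(\ub<i>)$, so it remains to show $\bellman(\ub<i>)\subseteq\ub<i>$, i.e., that each $\ub<i>$ is a prefixpoint of $\bellman$. I would therefore strengthen the induction with this extra invariant. Its base case is trivial since $\ub<0>$ is the top element. For the inductive step, writing $\tilde\ub\eqdef\bellman(\ub<i-1>)$, the previous invariant together with monotonicity of $\bellman$ yields $\bellman(\tilde\ub)\subseteq\tilde\ub$, and the task is to lift this through the deflation step $\ub<i>=\DEFLATE(\lb<i>,\tilde\ub)$, i.e., to show $\bellman(\DEFLATE(\lb<i>,\tilde\ub))\subseteq\DEFLATE(\lb<i>,\tilde\ub)$.

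This SEC-internal step is the main obstacle. Outside every deflated candidate SEC the claim is immediate from $\bellman(\DEFLATE(\lb<i>,\tilde\ub))\subseteq\bellman(\tilde\ub)\subseteq\tilde\ub=\DEFLATE(\lb<i>,\tilde\ub)$. For a state $s$ in a candidate SEC $C$ and region $R$, where $\DEFLATE$ replaces $\tilde\ub(s)\cap R$ by $\tilde\ub(s)\cap\exit[\tilde\ub](C)\cap R$, one must verify that the Bellman backup of the deflated assignment does not leave this intersection: actions staying in $C$ feed from successor upper bounds that were themselves deflated against the same best-exit set in $R$, so their weighted averages remain within the deflated region; actions exiting $C$ contribute values already bounded by $\exit[\tilde\ub](C)$ since deflation leaves the exit states untouched, and the convex-hull / intersection shape of $\bellman$ preserves membership in a downward-closed convex region in $R$. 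This regionwise analysis---where the precise design of $\DEFLATE$, consistent regions, and best exits from Section \ref{sec:defs} is genuinely used---is the main technical content and I would defer its details to the appendix.
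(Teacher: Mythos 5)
Your handling of the two soundness inclusions and of lower-bound monotonicity coincides with the paper's: $\lb<i>=\bellman^i(\lb<0>)$ is the Kleene sequence from the bottom element, and upper-bound soundness combines $\achievable=\bellman(\achievable)\subseteq\bellman(\ub<i>)$ with the soundness half of Lemma~\ref{lem:deflateMonSound}. Where you diverge is on $\ub<i+1>\subseteq\ub<i>$, and there you have put your finger on a real soft spot: the paper disposes of this in one sentence, but, exactly as you observe, order-preservation of $\bellman$ together with the contraction property of $\DEFLATE$ only yields $\ub<i+1>\subseteq\bellman(\ub<i>)$, and one still needs $\bellman(\ub<i>)\subseteq\ub<i>$. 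Your strengthened invariant (each $\ub<i>$ is a prefixpoint of $\bellman$) is the natural repair; note also that the tempting alternative --- order-preservation of the whole operator $\mop$ on pairs --- fails, because changing $\lb$ changes which candidate SECs $\FINDSECS$ returns in a non-monotone way.

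The gap is that the decisive step of your repair is only sketched, and the sketch glosses over exactly the places where it can fail as stated. Write $V\eqdef\bellman(\ub<i>)$, so that $\ub<i+1>=\DEFLATE(\G,\lb<i+1>,V)$; you must show $\bellman(\ub<i+1>)\subseteq\ub<i+1>$. First, for a Maximizer state $\state$ in a candidate SEC $C$, $\bellman$ takes $\conv(\bigcup_{\action}\cdot)$ over the \emph{whole} space, whereas $\ub<i+1>(\state)$ is assembled region by region, each region $R$ being cut against $\exit[{V}](C_R)$ for a possibly \emph{different} candidate $C_R$; a convex combination of a point lying in region $R_1$ with one lying in $R_2$ can land in $R_1$ without belonging to $\exit[{V}](C_{R_1})$, so ``weighted averages remain within the deflated region'' is not automatic. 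Second, a Maximizer action from $\state\in C$ may leave $C$ yet stay inside the MEC, reaching states deflated against another candidate or not deflated at all; your claim that exiting actions are ``already bounded by'' the best exit covers them only via the convex hull of exits evaluated at the \emph{pre-deflation} values $V$, so one must additionally argue that the post-deflation successor sets have not moved outside that hull. Third, the indicator offset $\dwc(\{\mathbbm 1_{\targetsets}(\state)\})$ and the clipping $\cap\,\mathbf 1$ appear on the two sides with different summation sets ($\{\state\}$ versus all of $C$) and need to be reconciled. Until this regionwise containment is actually proved --- an argument of roughly the same weight as the fixpoint analysis behind Lemma~\ref{lem:fixpoint} --- the monotonicity of the upper bounds is not established by your proposal (nor, for that matter, by the paper's one-line justification).
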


\begin{proposition}{\textbf{Convergence from below}}
\label{prop:conv_below}
\newline $\forall$ states $\state \in \states$ and all directions $\dir \in \directions: \lb<\infty>(\state)[\dir] = \achievable(\state)[\dir]$.
\end{proposition}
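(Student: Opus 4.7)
The plan is to reduce Proposition \ref{prop:conv_below} to the known convergence result for the standard multi-dimensional value iteration \cite{DBLP:conf/mfcs/ChenFKSW13} recalled in Section \ref{sec:bvi}, by observing that the deflation step in $\mop$ affects only the upper estimate.

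\textbf{Step 1: $\lb$ follows standard value iteration.} I inspect Algorithm \ref{alg:MOBVI} and note that the only assignment touching the lower bound is $\lb \gets \bellman(\lb)$, while $\doSECs$ reads $\lb$ but writes only $\ub$. Hence a trivial induction on $i$ gives $\lb<i> = \bellman^i(\lb<0>)$, with $\lb<0>(\state) = \{\vec 0\}$ for every $\state$. This is exactly the multi-dimensional VI sequence.

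\textbf{Step 2: Set-theoretic convergence.} I invoke the footnote in Section \ref{sec:bvi}: iterating $\bellman$ from $\lb<0>$ yields a monotone (in the $\subseteq$-sense) sequence whose limit $\lb<\infty>(\state) := \bigcup_{i \geq 0}\bellman^i(\lb<0>)(\state)$ satisfies
\[
\lb<\infty>(\state) \;\subseteq\; \achievable(\state) \;\subseteq\; \dwc(\pareto(\state)) \;=\; \cl(\lb<\infty>(\state)),
\]
where $\cl$ is the Euclidean closure in $[0,1]^n$. Soundness ($\lb<i>(\state) \subseteq \achievable(\state)$) is inherited from Proposition \ref{prop:soundness}.

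\textbf{Step 3: From set-convergence to directional convergence.} The directional evaluation $X[\dir] = \sup\{\|\vec x\| \mid \vec x \in X,\ [\vec x]=\dir\}$ is invariant under closure for downward-closed sets. Concretely, if $X$ is downward closed and $\vec y \in \cl(X)$ with $[\vec y]=\dir$, then for every $\varepsilon>0$ there is $\vec y' \in X$ with $\vec y' \ge (1-\varepsilon)\vec y$ componentwise, and downward closure gives $(1-\varepsilon)\vec y \in X \cap \dir$; letting $\varepsilon\to 0$ yields $X[\dir] \ge \|\vec y\|$. Since $\lb<i>(\state)$ is downward-closed (a routine invariant of $\bellman$) and unions of downward-closed sets are downward-closed, $\lb<\infty>(\state)$ is downward-closed; the same property holds for $\achievable(\state)$ by definition. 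Chaining the two invariances with Step~2 yields
\[
\lb<\infty>(\state)[\dir] \;=\; \cl(\lb<\infty>(\state))[\dir] \;=\; \dwc(\pareto(\state))[\dir] \;=\; \achievable(\state)[\dir],
\]
which is the claim.

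\textbf{Main obstacle.} The only nontrivial part is the closure-invariance in Step~3, since $\lb<\infty>(\state)$ need not be closed (its closure may strictly exceed it, as illustrated by the footnote discussion of suprema not being attained). The downward-closedness of the iterates, together with the fact that $X[\dir]$ is itself a supremum, is precisely what lets us bridge the gap. Everything else is either inherited from \cite{DBLP:conf/mfcs/ChenFKSW13} or an immediate consequence of the algorithm's structure.
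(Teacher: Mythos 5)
Your proposal is correct and follows essentially the same route as the paper: observe that $\doSECs$ never writes to $\lb$, so $\lb<i>=\bellman^i(\lb<0>)$ is exactly the standard multi-dimensional value iteration whose limit has the same closure as $\achievable(\state)$, and then conclude equality of the directional evaluations. Your Step~3 simply makes explicit the closure-invariance of $X[\dir]$ for downward-closed sets, which the paper's proof leaves implicit when it passes from ``the closures agree'' to the statement of the proposition.
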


\begin{proposition}{\textbf{Convergence from above}}
\label{prop:conv_above}
\newline $\forall$ states $\state \in \states$ and all directions $\dir \in \directions: \ub<\infty>(\state)[\dir] = \achievable(\state)[\dir]$.
\end{proposition}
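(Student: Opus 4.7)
\medskip

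\noindent\textbf{Proof plan (Proposition \ref{prop:conv_above}).}
The plan is to mirror the single-dimensional strategy of \cite{cav18}, adapted to the multi-dimensional setting via the regional machinery of Section~\ref{sec:defs}. By soundness (Proposition~\ref{prop:soundness}) we already have $\achievable(\state)[\dir] \leq \ub<\infty>(\state)[\dir]$ for every $\state,\dir$, so the task is the reverse inequality. The argument proceeds by contradiction, using the fact that $\ub<\infty>$ is a fixpoint of $\mop$ together with convergence from below (Proposition~\ref{prop:conv_below}), which gives $\lb<\infty>=\achievable$.

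First I would verify that $\ub<\infty>$ is a fixpoint of the update step: monotonicity of the ordinary Bellman step combined with monotonicity of $\doSECs$ (Lemma~\ref{lem:deflateMonSound}) shows that the sequence $\ub<i>$ is decreasing and that $\ub<\infty>$ is invariant under one more application of $\mop$ applied to the pair $(\lb<\infty>,\ub<\infty>)=(\achievable,\ub<\infty>)$. Consequently the regions used inside $\doSECs$ in the limit are produced by $\GETREGIONS$ on input $\achievable$, so by Lemma~\ref{lem:getRegionsConsistent} the resulting partition is consistent with respect to $\achievable$, and by Lemma~\ref{lem:findSECSsoundAndCorr}, $\FINDSECS$ returns exactly the inclusion-maximal regional $\SEC$s of each region.

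Next I would suppose, for contradiction, that there exist $\state^\star\in\states$ and $\dir^\star\in\directions$ with $\ub<\infty>(\state^\star)[\dir^\star]>\achievable(\state^\star)[\dir^\star]$. Let
\[
\delta \;\eqdef\; \sup_{\state,\dir}\bigl(\ub<\infty>(\state)[\dir]-\achievable(\state)[\dir]\bigr) \;>\; 0,
\]
and choose $(\state_0,\dir_0)$ (approximately) attaining $\delta$; if the supremum is not attained, pick a sequence attaining it and pass to a convergent subsequence of directions by compactness. Let $R$ be the region of the consistent partition computed by $\GETREGIONS$ applied to the MEC containing $\state_0$ that contains $\dir_0$, and set
\[
X \;\eqdef\; \bigl\{\state \in \states \ \big|\ \ub<\infty>(\state)[\dir_0]-\achievable(\state)[\dir_0]=\delta\bigr\}.
\]
The core structural step is to show that $X$ contains an end component $C$ with the following closure properties along $\dir_0$: every $\state\in X\cap\states<\Box>$ has some action attaining $\ub<\infty>(\state)[\dir_0]$ whose successors all lie in $X$; and every $\state\in X\cap\states<\circ>$ has an action that is $\achievable$-optimal along $\dir_0$ (and, by consistency of $R$, also $\lb<\infty>$-optimal, i.e.\ preserved by $\FINDSECS$) whose successors all lie in $X$. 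The Maximizer case follows from convex-hull arithmetic: if the achieving convex combination used any successor outside $X$, the gap at $\state$ would drop below $\delta$. The Minimizer case is the crucial one and uses $\dir_0\in R$: consistency forces the set of $\lb<\infty>$-optimal actions in $\dir_0$ to coincide throughout $R$, and since $\lb<\infty>=\achievable$, these are precisely the $\achievable$-optimal actions; among them, the one minimizing $\ub<\infty>(\cdot)[\dir_0]$ must have all successors in $X$, again by the averaging argument applied to the gap.

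Given $C\subseteq X$, Lemma~\ref{lem:findSECSsoundAndCorr} applied to $\lb<\infty>=\achievable$ and the region $R$ shows that $C$ is contained in some regional $\SEC$ returned by $\FINDSECS$ in the limit. The final step is then to derive the contradiction: at the $\doSECs$ step of the fixpoint iteration, Algorithm~\ref{alg:doSECs} replaces $\ub<\infty>(\state_0)\cap R$ by $\exit[\ub<\infty>](C)\cap R$ on Line~\ref{line:deflate}. Every exit $(\state,\action)$ of $C$ must leave $X$ in some successor (otherwise $X$ itself would already absorb the exit, contradicting $(\state,\action)\leaves C$ combined with maximality of the $\delta$-gap at all successors), so along $\dir_0$ the bound $\ub<\infty>(\state,\action)[\dir_0]$ is strictly smaller than $\achievable(\state,\action)[\dir_0]+\delta$. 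Taking the convex hull over exits and adding the indicator contribution preserves this strict inequality, giving $\exit[\ub<\infty>](C)[\dir_0] < \achievable(\state_0)[\dir_0]+\delta = \ub<\infty>(\state_0)[\dir_0]$, which contradicts the fixpoint property of $\ub<\infty>$ at $\state_0$ along $\dir_0$.

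The main obstacle I anticipate is the structural step isolating the bloated end component $C\subseteq X$ in a direction-sensitive way: the argument must simultaneously track the $\achievable$-optimal Minimizer choices along $\dir_0$ and show that they coincide with those preserved by $\FINDSECS$, which is exactly what consistency of the partition buys us; handling the boundary of regions (where several orderings tie and the supremum $\delta$ might only be approached) requires either a limiting argument along a convergent sequence of $\dir$-values or a compactness argument over the projective hyperplane, which is why representing regions as a simplicial complex (Section~\ref{sec:getReg}) is essential.
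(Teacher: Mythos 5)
Your overall strategy coincides with the paper's: use the fixpoint property of $\ub<\infty>$, consider the set $X$ of states with maximal gap $\ub<\infty>(\cdot)[\dir]-\achievable(\cdot)[\dir]$ in a fixed direction, show that $X$ contains an end component that is a regional $\SEC$ for $\{\dir\}$ and is eventually found by $\FINDSECS$ (since $\lb<\infty>=\achievable$), and derive a contradiction from the deflate step. Two issues, one minor and one serious. The minor one: taking $\delta$ as a supremum over \emph{directions} as well as states is unnecessary and problematic, because $\achievable(\state)[\cdot]$ and $\ub<\infty>(\state)[\cdot]$ need not be continuous in the direction (compare $\dwc(\{(1,0)\})$ evaluated at $[(1,\varepsilon)]$ and at $[(1,0)]$), so the compactness argument you sketch does not go through. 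The paper simply fixes \emph{any} direction with a positive gap and maximizes over the finite state set; you should do the same.

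The serious gap is in your final step, where you assert that ``every exit $(\state,\action)$ of $C$ must leave $X$ in some successor,'' justified by saying $X$ would otherwise ``absorb the exit.'' This is false: an action can exit the EC $C$ while all its successors remain in $X$ (namely in $X\setminus C$), in which case $\Delta(\state,\action)=\delta$ rather than $\Delta(\state,\action)<\delta$; worse, such a successor may have a \emph{larger} upper bound than the states of $C$, so that $\exit[\ub<\infty>](C)[\dir]$ is not below $\ub<\infty>(\state)[\dir]$ for $\state\in C$, the deflate step (which intersects with the old bound) produces no strict decrease, and no contradiction with the fixpoint property is obtained. This is precisely where the paper's proof does its real work (Step 7 of the proof of Proposition 3 and Appendix A.3): it does not take an arbitrary EC inside $X$, but first passes to a \emph{bottom} MEC $X'$ of $X$, then to the subset $Y\subseteq X'$ of states whose upper bound attains the maximum $m$ over $X'$, shows that every state of $Y$ has an action staying in $Y$, and finally takes a bottom MEC $Z$ of $Y$. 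With this two-level refinement, every Maximizer exit of $Z$ either leaves $X$ (and then the gap argument bounds its value strictly below $m$) or lands in $X'\setminus Y$, where the upper bound is strictly below $m$ by construction of $Y$; either way $\exit[\ub<\infty>](Z)[\dir]<m=\ub<\infty>(\state)[\dir]$ for $\state\in Z$, and deflation strictly decreases the fixpoint. Without this refinement your contradiction does not follow.
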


Note that for all directions $\dir$ and for all $\state \in \states$ by definition $\achievable(\state)[\dir] = \pareto(\state)[\dir]$.
Using this and the three propositions, we can prove the main theorem.
\begin{theorem}
Algorithm \ref{alg:MOBVI} computes convergent monotonic over- and under-approximations of $\pareto(\state)$ for each $\state \in \states$.
Since it is convergent, for every $\epsilon >0$ there exists an $i$, such that for every $\state \in \states$ and direction $\dir \in \directions: \ub<i>(\state)[\dir] - \lb<i>(\state)[\dir] < \epsilon$. 
So by instantiating $\state$ with $\initstate$, we solve the problem posed in Section \ref{sec:problem}.
\end{theorem}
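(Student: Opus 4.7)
The plan is to assemble the theorem directly from Propositions~\ref{prop:soundness}--\ref{prop:conv_above}, whose proofs carry the real technical content. First, I would fix an arbitrary state $\state$ and direction $\dir$. Evaluating the inclusion $\lb<i>(\state) \subseteq \achievable(\state) \subseteq \ub<i>(\state)$ of Proposition~\ref{prop:soundness} in direction $\dir$ and using the identity $\achievable(\state)[\dir] = \pareto(\state)[\dir]$ recalled just before the theorem, I obtain the sandwich $\lb<i>(\state)[\dir] \leq \pareto(\state)[\dir] \leq \ub<i>(\state)[\dir]$ for every~$i$. Monotonicity of the two sequences (again from Proposition~\ref{prop:soundness}) combined with the pointwise limits from Propositions~\ref{prop:conv_below} and~\ref{prop:conv_above} then immediately yields the first claim of the theorem: both $\lb<i>(\state)$ and $\ub<i>(\state)$ are monotonic convergent approximations of $\pareto(\state)$.

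The second claim requires upgrading pointwise convergence in $\dir$ to \emph{uniform} convergence, because the stopping criterion on Line~\ref{line:StoppingCrit} of Algorithm~\ref{alg:MOBVI} queries the quantity $\max_{\dir\in\directions} \ub<i>(\initstate)[\dir] - \lb<i>(\initstate)[\dir]$. Let $g_i(\dir) := \ub<i>(\initstate)[\dir] - \lb<i>(\initstate)[\dir] \geq 0$; by the previous paragraph, $g_i(\dir) \to 0$ pointwise and monotonically. Since $\directions$ is compact (a closed subset of the unit sphere intersected with the non-negative orthant), a Dini-type compactness argument should convert pointwise monotone convergence into uniform convergence over $\directions$; finiteness of $\states$ then extends uniformity across states as well.

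The main obstacle I expect is the potential discontinuity of the map $\dir \mapsto X[\dir]$ for low-dimensional downward-closed sets $X$: e.g.\ if $X$ is a segment lying on the $y$-axis, then $X[\dir]$ jumps between $0$ and a positive value as $\dir$ approaches $[(0,1)]$ from the interior of $\directions$. I would circumvent this obstacle by exploiting the simplicial-complex representation developed in Section~\ref{sec:getReg}: at each iteration $\lb<i>(\initstate)$ and $\ub<i>(\initstate)$ are finite simplicial complexes, so $\directions$ admits a finite decomposition (the common refinement of the decompositions induced by the two approximations) on whose open cells $g_i$ is affine, hence continuous and continuously extensible to the closure of each cell. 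Applying Dini's theorem cell-by-cell on the finitely many closed cells covering $\directions$ then delivers uniform convergence $\max_{\dir \in \directions} g_i(\dir) \to 0$.

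From this uniform convergence it follows that for every $\varepsilon>0$ some iteration~$i$ satisfies the stopping criterion; by the sandwich of the first paragraph instantiated at $\state = \initstate$, the returned pair $(\lb<i>(\initstate), \ub<i>(\initstate))$ satisfies both $\lb<i>(\initstate)[\dir] \leq \pareto[\dir] \leq \ub<i>(\initstate)[\dir]$ and $\ub<i>(\initstate)[\dir] - \lb<i>(\initstate)[\dir] < \varepsilon$ for every $\dir \in \directions$, solving the problem posed in Section~\ref{sec:problem}.
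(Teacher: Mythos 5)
Your first paragraph reproduces exactly what the paper does: the theorem's first claim is obtained by evaluating the inclusions of Proposition~\ref{prop:soundness} in each direction, invoking the identity $\achievable(\state)[\dir]=\pareto(\state)[\dir]$, and passing to the limit via Propositions~\ref{prop:conv_below} and~\ref{prop:conv_above}; the paper offers no more detail than this, so that part is fine.

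The substance of your proposal is the second half, and there you have correctly put your finger on a point the paper silently skips: the propositions give convergence for each \emph{fixed} direction, whereas the theorem and the stopping criterion on Line~\ref{line:StoppingCrit} need a single $i$ working for all of the uncountably many directions at once. Unfortunately your Dini-type repair does not close this gap. First, the decomposition of $\directions$ into cells on which $g_i$ is affine depends on $i$, so there is no fixed finite family of closed cells on which all the $g_i$ are simultaneously continuous. Second, and more fundamentally, on the closure of a cell adjacent to a face of the orthant the continuous extensions of the $g_i$ need not decrease to $0$: when $\achievable(\initstate)$ is lower-dimensional, the extensions converge at the boundary direction to a \emph{positive} value, so the pointwise limit of the extended functions is discontinuous and Dini's theorem is inapplicable --- indeed, uniform convergence itself fails. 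Concretely, take the Markov chain with $\trans(\initstate,\action)=\{\initstate\mapsto\tfrac12,\ t\mapsto\tfrac12\}$ where $t$ is an absorbing state in $\targetset_2$ and $\targetset_1$ is unreachable. Then $\achievable(\initstate)=\dwc(\{(0,1)\})$, and one computes $\ub<i>(\initstate)=\dwc(\{(2^{1-i},1)\})$ while $\lb<i>(\initstate)\subseteq\dwc(\{(0,1)\})$; evaluating in the corner direction $\dir_i=[(2^{1-i},1)]$ gives $\ub<i>(\initstate)[\dir_i]>1$ and $\lb<i>(\initstate)[\dir_i]=0$, so $\max_{\dir\in\directions}g_i(\dir)>1$ for every $i$ and the loop never terminates. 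The uniform statement can only be rescued with an additional ingredient --- for instance the preprocessing mentioned in the paper's footnote (initializing to $0$ every dimension whose target cannot be reached, so that $\ub<i>$ and $\achievable$ lie in the same coordinate subspace and the degenerate directions evaluate to $0$ in both bounds), followed by an argument that in the remaining non-degenerate situation the directional gaps are uniformly controlled. As it stands, both your proof and the paper's are missing this step.
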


\begin{proof}[Proof of Propositions \ref{prop:soundness} and \ref{prop:conv_below}]
Note that for all $i \in \mathbb{N}$ it holds that $\lb<i> = \bellman^i(\lb<0>)$, since $\doSECs$ does not change the under-approximation.
\cite[Proposition 8]{DBLP:journals/iandc/BassetKW18} proves that $\bellman$ is order-preserving, i.e. monotonic, and that 
it converges to the unique least fixpoint $\achievable$ when repeatedly applied to the bottom element of a complete partial order. 
The least possible lower bound assigns $\vec{0}$ to all $\states$, since there is no smaller vector that can be assigned to a state.
This is exactly the definition of $\lb<0>$, which implies that for all $\state \in \states$, the closure of $\lb<\infty>(\state)$ equals the closure of $\achievable(\state)$, which implies Proposition \ref{prop:conv_below}.

For the soundness of the over-approximation we require that the additional operation, namely $\doSECs$, performed by $\mop$ is sound (proven in Lemma \ref{lem:deflateMonSound}).
The monotonicity of the under- and over-approximation follows from the monotonicity of $\bellman$~\cite[Proposition 8]{DBLP:journals/iandc/BassetKW18} and of $\doSECs$ (Lemma~\ref{lem:deflateMonSound}) 
Thus we can deduce Proposition \ref{prop:soundness}.
\end{proof}

It only remains to show Proposition \ref{prop:conv_above}. As a key ingredient for the proof we will use the following:
\begin{lemma}[Fixpoint]
	\label{lem:fixpoint}
	$\mop(\lb<\infty>,\ub<\infty>) = (\lb<\infty>,\ub<\infty>)$, i.e. the limit of $\mop$ is also a fixpoint.
\end{lemma}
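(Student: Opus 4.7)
The plan is to verify each coordinate of the pair separately. Since $\doSECs$ acts only on the upper-bound component, the chain $(\lb<i>)$ is the standard Kleene iteration of $\bellman$ starting from $\{\vec 0\}$. The operator $\bellman$ is $\omega$-continuous on the complete lattice of state-indexed downward-closed convex subsets of $[0,1]^n$ ordered by inclusion (as already invoked in the proof of Proposition~\ref{prop:conv_below} through the result of~\cite{DBLP:journals/iandc/BassetKW18}), so $\bellman(\lb<\infty>)=\bigcup_i \bellman(\lb<i>)=\lb<\infty>$, settling the first coordinate.

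For the second coordinate, I would unfold $\ub<i+1>=\doSECs(\G,\lb<i+1>,\bellman(\ub<i>))$, take the descending intersection over $i$, and reduce the task to showing both (a) downward $\omega$-continuity of $\bellman$ along the decreasing chain $(\ub<i>)$ and (b) joint continuity of $\doSECs$ when $\lb$ ascends to $\lb<\infty>$ and $\ub$ descends to $\bellman(\ub<\infty>)$. For (a), $\bellman$ is composed of finitely many operations---scaling, Minkowski sum with $\dwc(\{\mathbbm 1_\targetsets(\state)\})$, finite union, convex hull, finite intersection, and intersection with $\mathbf 1$---each of which commutes with descending intersections of compact downward-closed convex subsets of $[0,1]^n$; the convex-hull step requires a short compactness argument (writing points in $\conv(A_i\cup B_i)$ as $\lambda_i a_i + (1-\lambda_i)b_i$ and passing to convergent subsequences), while the remaining operations are routine.

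For (b), I would exploit the finite combinatorial description of $\doSECs$: its output on a region $R$ depends only on the orderings $\lb(s,a)[\dir]$ for $a\in\Av(s)$, $s\in T_\circ$, together with the decreasing chain of $\ub$-exits. Since $\lb<i>\to\lb<\infty>$ monotonically and each $\lb<i>(s,a)$ is a downward-closed polytope, these orderings stabilize on each open simplex of $\GETREGIONS(T,\lb<\infty>)$ from some finite index onward; consequently the SEC-candidates produced by $\FINDSECS$ coincide with those at the limit (Lemmas~\ref{lem:getRegionsConsistent} and~\ref{lem:findSECSsoundAndCorr}), and the intersection with $\exit[\ub](C)$ in line~\ref{line:deflate} becomes, on each such region, a descending intersection of compact convex sets which commutes with the limit by~(a).

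The hardest step will be the stabilization argument in~(b): on measure-zero boundary directions, the comparisons $\lb<i>(s,a)[\dir]$ versus $\lb<i>(s,b)[\dir]$ need not stabilize at any finite index, so no single uniform stabilization step exists across all of $\directions$. The remedy is to work simplex by simplex in the finite region decomposition produced at the limit---on each open simplex the values $\lb<i>(s,a)$ are restrictions of linear functions supported on a finite vertex set, so the finitely many order-defining inequalities do stabilize uniformly there---and to treat lower-dimensional boundary simplices separately, relying on the fact that $\GETREGIONS$ already isolates them into distinct regions as motivated in Section~\ref{sec:exMOSG}. Combining the per-simplex stabilization with (a) yields the second coordinate and hence the lemma.
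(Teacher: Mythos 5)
Your route is genuinely different from the paper's: the paper does not attempt to show that $\doSECs$ commutes with the limit. Instead it argues by contradiction that if $\mop(\val,\ub<\infty>)$ strictly decreased $\ub<\infty>(s)[\dir]$ by deflating some regional SEC $T$, then (a sub-EC of) $T$ is already detected and deflated infinitely often along the iteration, so the deflation equality $\ub<j>(s)[\dir]=\exit[\ub<j>](T)[\dir]$ holds for infinitely many $j$ and therefore in the limit, contradicting the assumed strict decrease. Your first coordinate and your part (a) (downward continuity of $\bellman$ along nested compact convex sets, including the subsequence argument for $\conv$) are fine and are implicitly needed by the paper as well.

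The gap is in part (b), exactly at the point you call the hardest step, and your remedy does not close it. Stabilization of the argmin sets fails not only on measure-zero boundary directions but on full-dimensional regions: whenever two Minimizer actions $a,b$ at some $s\in T_\circ$ satisfy $\lb<\infty>(s,a)[\dir]=\lb<\infty>(s,b)[\dir]$ throughout a region, their under-approximations may converge at different rates, so the stage-$i$ comparison can favour $a$ for some $i$ and $b$ for others, without ever settling. Then $\FINDSECS$ called with $\lb<i>$ keeps only one of the two actions at each stage and can return candidates that never coincide with the limit candidate computed from $\val$ (which retains both and may yield a strictly larger SEC); your claimed identity of SEC-candidates ``from some finite index onward'' is therefore false in general, and the commutation of $\doSECs$ with the limit does not follow from it. The paper explicitly flags this situation (``multiple optimal Minimizer-actions whose lower approximations converge at different rates'') and gets by with the weaker fact that \emph{some} sub-EC $T'\subseteq T$ is deflated infinitely often, the states of $T\setminus T'$ catching up via the plain Bellman update. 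Your treatment of boundary directions has a second, related defect: $\GETREGIONS(T,\lb<i>)$ isolates the stage-$i$ boundaries, not the limit ones, so a limit boundary direction can lie in the interior of a stage-$i$ region carrying the wrong ordering; and uniform stabilization on a limit open simplex can also fail because the gap between two limit values may vanish as $\dir$ approaches the simplex boundary. To repair your argument you would have to weaken the stabilization claim to ``some fixed witness of the deflation is applied infinitely often,'' at which point you have essentially reconstructed the paper's contradiction argument.
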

\begin{proof}[Proof idea]
	We only need to argue about the second component $\ub<\infty>$. If we did not have a fixpoint, then a further application of $\mop$ would find a SEC $T$ for some region $R$ and decrease the over-approximation $\ub<\infty>$ for some $\dir \in R$ and $s \in T$, i.e. $\exit[\ub<\infty>](T)[\dir] < \ub<\infty>(s)[\dir]$. The key idea is that since the lower approximations $L_i$ converge to $\val$, the \rSEC{} $T$ is detected and deflated infinitely many times before convergence. But this means that $\exit[\ub<\infty>](T)[\dir] = \ub<\infty>(s)[\dir]$, contradiction.
	For more details, see 
	\ifarxivelse{Appendix \ref{app:fixpoint}}{\cite[Appendix A.2]{techreport}}.
\end{proof}

\begin{proof}[Proof of Proposition \ref{prop:conv_above}]
We will use the fixpoint property from Lemma \ref{lem:fixpoint} to derive a contradiction.
We assume for contradiction that there is a state $\state \in \states$ and a direction $\dir \in \directions$ such that $\ub<\infty>(\state)[\dir] \neq \achievable(\state)[\dir]$.
Applying the Bellman operator once more to $(\lb<\infty>,\ub<\infty>)$ results in a new upper bound $\ub'$. We will show that $\ub' \subsetneq \ub<\infty>$.
In other words, applying the loop once more decreases the over-approximation.
This is a contradiction to $\ub<\infty>$ being a fixpoint and proves our goal.
\begin{enumerate}
	\item Assume for contradiction, that $\exists t \in \states, \dir \in \directions: \ub<\infty>(t)[\dir] \neq \achievable(t)[\dir]$ and thus $\exists \dir\ \ub<\infty>(t)[\dir] > \achievable(t)[\dir]$ with Prop. \ref{prop:soundness}. We fix this direction $\dir$ and $t$ for the rest of the proof.\label{step:fix}
	
	\item Let $X := \set{\state \in \states \mid \Delta(\state) = \max_{t \in \states} \Delta(t)}$, 
		where $\Delta(\state) := \ub<\infty>(\state)[\dir] - \achievable(\state)[\dir]$ is the difference between 
		over-approximation $\ub<\infty>(\state)$ and achievable set $\val$ in $\dir$. 
		\begin{enumerate}
			\item We also define $\Delta(\state,\action) := \ub<\infty>(\state,\action)[\dir] - \achievable(\state,\action)[\dir]$ for an action $\action \in \Av(s)$.
			
			\item By assumption, $X \neq \emptyset$ and for all $\state \in X: \Delta(\state) > 0$.
		\end{enumerate}
		Note: $\Delta(s), \Delta(s,a)$ and $X$ are all defined w.r.t. the fixed direction $\dir$ (not indicated in notation to avoid clutter).
	\item $\Delta(s) > 0$ implies that $\dwc(\mathbbm 1_{\targetsets}(\state))[\dir] = 0$, i.e. $s$ is not contained in target sets ``aligned'' in direction $\dir$ because otherwise, $ \ub<\infty>(\state)[\dir] = \achievable(\state)[\dir] = \dwc(\mathbbm 1_{\targetsets}(\state))[\dir]$.\label{step:dwc1_obs}
	\item \label{step:contract} For all $(\state,\action) \leaves X$ it holds that $\Delta(\state,\action) < \Delta(\state)$.\\
	\textbf{\textit{Reason:}}
	If $(\state,\action) \leaves X$, then $\exists \state' \in \post(\state,\action) \setminus X$. Note that $\Delta(\state') < \Delta(\state)$ by construction of $X$
	\begin{align*}
	\Delta(\state,\action) &= \ub<\infty>(\state,\action)[\dir] - \val(\state,\action)[\dir] \tag{Definition of $\Delta(s,a)$}\\
	&= \sum_{\state' \in \post(\state,\action)}  \delta(s,a,s') \left(\ub<\infty>(\state')[\dir] - \val(\state')[\dir]\right) \tag{Definition of $\val(\state,\action)$ and Step \ref{step:dwc1_obs}}\\
	&= \sum_{\state' \in \post(\state,\action)}  \delta(s,a,s')\Delta(s') \tag{Definition of $\Delta(s)$}\\
	&<\Delta(\state) \tag{since  $t \in \post(\state,\action)$ and $\Delta(t) < \Delta(\state)$}
	\end{align*}
	
	\item No state in $X$ depends on a leaving action. Formally:
		\begin{enumerate}[(a)]
		\item 
		$\forall \state \in X_\circ, (\state,\action) \leaves X: \val(\state)[\dir] < \val(\state,\action)[\dir]$,
		i.e. a leaving action leaving $X$ cannot be optimal for Minimizer in direction $\dir$.\\
		\textbf{\textit{Reason:}}
		Since $\state$ is a state of Minimizer, $\forall \action \in \Av(s): \ub<\infty>(\state)[\dir] \leq \ub<\infty>(\state,\action)[\dir]$.
		From this and the inequality from the previous Step \ref{step:contract}, we get that:
		\begin{align*}
			&\ub<\infty>(\state,\action)[\dir] - \val(\state,\action)[\dir]\\
			&=\Delta(\state,\action) \tag{Definition of $\Delta$}\\
			&< \Delta(\state)  \tag{Step \ref{step:contract}}\\
			&= \ub<\infty>(\state)[\dir] - \val(\state)[\dir] \tag{Definition of $\Delta$}\\
			&\leq \ub<\infty>(\state,\action)[\dir] - \val(\state)[\dir] \tag{$\state \in X_\circ$}
		\end{align*}
		Subtracting $\ub<\infty>(\state,\action)[\dir]$ and multiplying by (-1) yields the claim.
		\item $\forall \state \in X_\Box$, $\ub<\infty>(s)[\dir] > \sum_{a \in \Av(s,a)}w_a \cdot \ub<\infty>(s,a)$ if $w_a > 0$ for some action $a$ exiting $X$. Intuitively, this means that Maximizer cannot assign positive weight to any action leaving $X$. The proof is similar to part (a).
		\end{enumerate} \label{step:not_depend_outside}
		
	\item $X$ contains an EC because if not, then $\exists s \in X: \forall \action \in \Av(s): (s,a) \leaves X$. But then $s$ necessarily depends on a leaving action in the sense of the previous Step \ref{step:not_depend_outside}, contradiction.
	
	\item Using that $X$ contains an EC, we can show that $X$ even contains a regional \emph{simple} EC  $Z \subseteq X$ w.r.t. to the region $\set{\dir}$. \label{step:fact2}
	Applying $\mop$ once more to $(\val,\ub<\infty>)$, the over-approximation decreases.\\
	
		\textbf{\textit{Reason:}} We only give high-level intuition here, as the proof is very technical. The formal details are in 
		\ifarxivelse{Appendix \ref{app:fact2}}{\cite[Appendix A.3]{techreport}}.
		We prove by a large case distinction that $X$ contains a regional SEC $Z$ for the region $\set{\dir}$.
		Since $\lb<\infty>=\val$, by Lemma \ref{lem:findSECSsoundAndCorr} this regional SEC is found and deflated.
		By construction of $Z$, then its value is set to ``depend on the outside'', i.e. it assigns a positive weight on an action leaving $X$.
		Then, by Step \ref{step:not_depend_outside}, the over-approximation is reduced and we arrive at a contradiction.\qedhere
\end{enumerate}
\end{proof}

\section{Conclusion} \label{sec:conc}

For a given $\varepsilon>0$ and a generalized-reachability stochastic game, we compute an $\varepsilon$-approximation of its Pareto frontier.
Our algorithm can be run as an anytime algorithm, reporting the under- and over-approximations on the frontier, due to an extended version of value iteration.
We have suggested the name ``bounded value iteration'' as it better generalizes to higher dimensions than ``interval iteration''.
We conjecture that this technique can be generalized to other models, such as concurrent games, and more complex objectives, such as total reward.
Finally, while decidability remains open, the approximation algorithms are practically more relevant even in the single-dimensional case.
Note that approximative value iteration is the default technique for analysis of MDP, although there is an exact and polynomial solution by linear programming.
The reason is that the theoretical worst-case complexity of value iteration is practically not too relevant.
Consequently, an efficient implementation, possibly exploring only a part of the state space using learning, as e.g.\ in \cite{atva,cav18}, may be an interesting future direction.

\begin{acks}
	Pranav Ashok, Jan K\v{r}et{\'i}nsk{\'y} and Maximilian Weininger were funded in part by TUM IGSSE Grant 10.06 (PARSEC) and the German Research Foundation (DFG) project KR 4890/2-1 ``Statistical Unbounded Verification''. Krishnendu Chatterjee was supported by the ERC CoG 863818 (ForM-SMArt) and Vienna Science and Technology Fund (WWTF) Project ICT15-003. Tobias Winkler was supported by the RTG 2236 UnRAVeL.
\end{acks}


\bibliography{ref}
\ifarxivelse{
\appendix
\section*{Appendix}
\section{Technical proofs}

We fix an SG $\exGame$ and a generalized-reachability objective $\targetsets$ for the following proofs and implicitly use them as parameters of $\mop$.

\subsection{Proof of Lemma \ref{lem:deflateMonSound}}
\label{sec:deflateMonSound}

\paragraph{Soundness}
To prove soundness of $\DEFLATE$, we first need two auxiliary lemmata.

\begin{lemma}[$\exit$ for a Maximizer state is correct] \label{lem:casstate}
	If a state $\state \in \states<\Box>$ belongs to the Maximizer, then $\exits[\val](\set{\state}) = \val(\state)$.
\end{lemma}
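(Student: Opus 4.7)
The plan is to prove the two inclusions separately, with an absorbence lemma as the main tool. Partition $\Av(\state) = E \cup L$ where $E := \{\action : (\state,\action) \in \exits(\set{\state})\}$ and $L$ is the set of pure self-loops ($\post(\state,\action) = \set{\state}$), and let $A := \dwc(\set{\mathbbm 1_{\targetsets}(\state)})$. By the definitions,
\[
\exit[\val](\set{\state}) = (A + M) \cap \mathbf 1, \qquad M := \conv\bigl(\textstyle\bigcup_{\action \in E} \val(\state,\action)\bigr),
\]
while the Bellman fixpoint equation for $\state \in \states<\Box>$ gives $\val(\state) = \conv\bigl(\bigcup_{\action \in \Av(\state)} \val(\state,\action)\bigr)$. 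The edge case $E = \emptyset$ (all actions self-loop) is direct: using the convention $\conv(\emptyset) = \set{\vec 0}$ gives $\exit[\val](\set{\state}) = A$, and a least-fixpoint iteration from $\set{\vec 0}$ using only updates $\val(\state,\action) = (A + \val(\state)) \cap \mathbf 1$ stabilizes at $\val(\state) = A$, so the two sides coincide.

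For $E \neq \emptyset$, I would establish an \emph{absorbence} property for both $X = \val(\state,\action)$ (each $\action \in E$) and $X = M$: $(A + X) \cap \mathbf 1 = X$. The $\supseteq$ direction is immediate since $\vec 0 \in A$. For $\subseteq$, take $z = v' + x \leq \vec 1$ with $v' \in A$, $x \in X$; I would construct $x' \in X$ with $x' \geq z$ componentwise and conclude $z \in X$ by downward-closedness of $X$ (which holds since $\val(\state,\action)$ and its convex hull $M$ are built from downward-closed pieces). For $X = \val(\state,\action) = (A + Y_\action) \cap \mathbf 1$, writing $x = v + y$ with $v \in A$, $y \in Y_\action$, the point $x' := (\mathbbm 1_{\targetsets}(\state) + y) \cap \mathbf 1 \in X$ has $x'_i = 1$ in every target coordinate (so $x'_i \geq z_i$), and $x'_i = y_i = x_i = z_i$ in every non-target coordinate (since $v'_i = 0$ there). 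For $X = M$, apply the same construction action-wise to each component of a convex decomposition of $x$ and convex-combine. Absorbence then yields $\exit[\val](\set{\state}) = (A + M) \cap \mathbf 1 = M$.

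The inclusion $\exit[\val](\set{\state}) \subseteq \val(\state)$ is now immediate since $M \subseteq \val(\state)$ (as $E \subseteq \Av(\state)$). For the converse, I would use the least-fixpoint characterization of $\val$ from \cite{DBLP:conf/mfcs/ChenFKSW13}: define $V^\star(\state) := M$ and $V^\star(\state') := \val(\state')$ for $\state' \neq \state$, so $V^\star \leq \val$ pointwise, and verify that $V^\star$ is a \emph{pre-fixpoint} of $\bellman$. At $\state$, exit actions $\action \in E$ give $\bellman(V^\star)(\state,\action) \subseteq \val(\state,\action) \subseteq M$ by monotonicity of $\bellman$, while self-loops $\action \in L$ give $\bellman(V^\star)(\state,\action) = (A + M) \cap \mathbf 1 = M$ by absorbence, hence $\bellman(V^\star)(\state) \subseteq M$. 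At each $\state' \neq \state$, monotonicity gives $\bellman(V^\star)(\state') \subseteq \val(\state') = V^\star(\state')$. Since $\val$ is the least fixpoint of $\bellman$ and every pre-fixpoint dominates it, $\val \subseteq V^\star$ pointwise, so $\val(\state) \subseteq M = \exit[\val](\set{\state})$.

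The subtlest point is absorbence at the level of $M$: the naive ``cap $y_\action + v'$ coordinate-wise'' strategy fails because $\min(\cdot,1)$ does not commute with convex combinations. The workaround is to first replace the $A$-component of each $y_\action$ by the full target indicator $\mathbbm 1_{\targetsets}(\state)$ so that target coordinates saturate at $1$, and only then convex-combine and invoke downward-closedness. This saturation trick relies on the explicit decomposition $\val(\state,\action) = (A + Y_\action) \cap \mathbf 1$, which separates the $A$-contribution from the propagated part $Y_\action$; once absorbence is in hand, the two inclusions follow cleanly from the pre-fixpoint argument.
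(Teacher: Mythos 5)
Your proof is correct, and it rests on the same underlying fact as the paper's --- the fixpoint equation $\val(\state)=\bellman(\val)(\state)$ --- but the execution is genuinely different and, in one respect, more complete. The paper's proof compares $\exit[\val](\set{\state})$ directly with $\bellman(\val)(\state)$, splits on whether $\state$ is a target, and simply asserts that adding the self-loop terms to the inner union ``does not change the result'' because Maximizer cannot profit from looping; that assertion is precisely the nontrivial point, and it silently relies on $\val$ being the \emph{least} fixpoint (any larger fixpoint also satisfies the Bellman equations, so no purely local computation can rule out the self-loops inflating the value). Your two ingredients make this rigorous: the absorbence identity $(A+X)\cap\mathbf 1=X$ disposes of the target-indicator term uniformly (subsuming the paper's separate target case, and correctly handling the fact that componentwise capping at $\vec 1$ does not commute with convex combination --- hence your saturation trick of first replacing the $A$-component by the full indicator), and the pre-fixpoint $V^\star$ together with least-fixpoint minimality is the honest justification for discarding self-loops. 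What the paper's version buys is brevity; what yours buys is an actual proof of the glossed-over step. Two small facts should be stated explicitly, since your absorbence argument leans on them twice: Minkowski sums and convex hulls of downward-closed subsets of the non-negative orthant are again downward closed. Both are true and elementary, but they are load-bearing when you conclude $z\in X$ and $z\in M$ from $z\le x'$.
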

\begin{proof}
	If we show that for all Maximizer states $\state$, $\exit[\val](\set{\state}) = \bellman(\val)(\state)$, then the lemma holds because $\val = \bellman(\val)$ as $\val$ is a fixpoint of $\bellman$.
	\begin{itemize}
		\item If $\state$ is a non-target Maximizer state, then $\exits[\val](\set{\state})$ =
		$\conv(\bigcup_{(\state, \action) \in \exits(\set{\state})} \val(\state,\action))$ = $\conv(\bigcup_{\action \in \Av(\state)} \val(\state,\action))$ = $\bellman(\val)(\state)$. 
		While $\Av(\state)$ may contain a self-loop action which is not contained in $\exits(\set{\state})$, this does not matter as the Maximizer cannot improve its value by choosing a self-loop action unless $\state$ is a target. Hence, adding a $\val(\state, \action')$ term, where $\action'$ is a self-loop, to the inner union operation does not change the result.
		\item If the Maximizer state $\state$ is a target, then the Bellman operator sets the corresponding direction to 1. This is what the first term ($\sum_{\state' \in \{\state\}} \{\mathbbm 1_{\targetsets}(\state')\} = \{\mathbbm 1_{\targetsets}(\state)\}$) in the definition of best exit does:
		Afterwards the reasoning follows the same arguments as in the previous case.
	\end{itemize}
\end{proof}

\begin{lemma}[$\exit$ for a set of states is an over-approximation] \label{lem:casset} 
	Given an $\EC\ T$, and a correct upper bound $\ub$ with $\ub(\state) \supseteq \val(\state)$ for all $\state \in \states$, we get that
	$\forall \state \in T: \exit[\ub](T) \supseteq \val(\state)$
\end{lemma}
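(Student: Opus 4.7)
The plan is to establish the chain $\val(s) \subseteq \exit[\val](T) \subseteq \exit[\ub](T)$. The second inclusion is immediate: from $\val(s') \subseteq \ub(s')$ pointwise we obtain $\val(s',a) \subseteq \ub(s',a)$ for every exit $(s',a) \in \exits(T_\Box)$, and the set operations in the definition of $\exit[\cdot](T)$ (union, convex hull, Minkowski sum, downward closure, intersection with $\mathbf{1}$) are all monotone under set inclusion. I would dispatch this first, in one or two lines.

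For the substantive inclusion $\val(s) \subseteq \exit[\val](T)$, I fix $v \in \val(s)$ and a witnessing Maximizer strategy $\sigma$. Considering an arbitrary Minimizer best response $\tau$, I analyze the induced Markov chain starting at $s$ by classifying paths according to their first exit from $T$: each path either stays in $T$ forever, or leaves through some first-exit action $(s',a) \in \exits(T)$. The stay-forever contribution to coordinate $i$ of $v$ is at most $\mathbbm{1}[T_i \cap T \neq \emptyset]$, so the corresponding sub-vector lies in $\dwc(\sum_{s' \in T}\{\mathbbm{1}_{\targetsets}(s')\}) \cap \mathbf{1}$. The exit contribution is a weighted average over first-exit events whose conditional post-exit value is captured by $\val(s',a)$. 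Splitting $v = v_{\mathrm{stay}} + v_{\mathrm{exit}}$ along these two regimes, I would deduce membership in $\exit[\val](T)$ directly from the Minkowski-sum structure of its definition.

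\textbf{Main obstacle.} The delicate step is the handling of first-exits through Minimizer actions $(s',a) \in \exits(T_\circ)$, since only Maximizer exits appear in the union in $\exit[\val](T)$. My plan is to argue that any positive probability Minimizer places on such exits can be replaced, without decreasing $v$ componentwise, by Minimizer stay-in-$T$ actions, which are guaranteed to exist by the EC property (the witnessing action set $B$ supplies at least one stay-action at every state in $T_\circ$). Minimizer's optimality implies $\val(s',a) \leq \val(s',b)$ in the directions relevant to $v$ for every stay-action $b$, so redirecting play through $T$ until a Maximizer exit is reached cannot hurt Maximizer. After the appropriate strategy modification and $\varepsilon \to 0$ limit, the exit contribution lies in $\conv(\bigcup_{(s',a) \in \exits(T_\Box)} \val(s',a))$. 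The corner case $\exits(T_\Box) = \emptyset$ is covered by the convention $\bigcup_\emptyset (\cdot) = \{\vec 0\}$: there $v_{\mathrm{exit}} = \vec 0$ and the inclusion reduces to the target-indicator bound alone.
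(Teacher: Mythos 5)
Your plan is correct and reaches the lemma, but by a genuinely different route than the paper. The paper's proof avoids paths and strategies almost entirely: it introduces an auxiliary Maximizer state $t$ that belongs to every target set intersecting $T$ and whose available actions are the exits of $T$, observes $\exit[\ub](T) = \exit[\ub](\{t\})$, and reduces to the already-proved single-state case $\exit[\val](\{t\}) = \val(t)$ (which rests on the fixpoint property $\bellman(\val)=\val$). The only game-theoretic content there is the domination claim $\val(\state)\subseteq\val(t)$ for every $\state\in T$, split into Maximizer states (which $t$ can mimic by randomizing over exits) and Minimizer states (dominated by some Maximizer state of $T$ because, $T$ being an EC, each Minimizer state has a staying action whose value is an average over states of $T$). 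Your first-exit decomposition of a witnessing strategy $\straa$ proves that same domination directly; what each approach buys is roughly: the paper's is short but its domination claim is itself only asserted, while yours is self-contained but must confront the path-level bookkeeping. Two remarks on your sketch. First, your "main obstacle" dissolves more simply than you make it: since $\straa$ guarantees $\vec v$ against \emph{every} $\strab$, you may just \emph{choose} a Minimizer strategy that plays only staying actions of the EC inside $T$ -- no optimality or replacement argument is needed, and then all first exits are automatically through $\exits(T_\Box)$. Second, the one step you should not gloss over is that the conditional post-exit vector "is captured by $\val(\state,\action)$": in this non-determined multi-objective setting, the continuation of $\straa$ after a history need not guarantee the conditional vector realized against the particular $\strab$ you fixed. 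You need either to take, for each coordinate $i$ separately, the infimum over Minimizer continuations (these infima assemble into one vector that the continuation of $\straa$ does guarantee simultaneously, hence lies in the closure of $\val(\state,\action)$, using that Minimizer's post-exit choices are independent across exit histories), or to bypass strategies and unfold the fixpoint equation $\val=\bellman(\val)$ up to the first exit. With that repair, together with the $\vec 0$-padding you already note for sub-probability exit mass and the observation that only $\vec v \leq \vec v_{\mathrm{stay}} + \vec v_{\mathrm{exit}}$ (not equality) is needed because $\exit[\ub](T)$ is downward closed, your argument goes through.
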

\begin{proof}
	Let us introduce a new Maximizer state $t$ representing $T$. 
	Let $t$ be in all target sets that some state of $T$ is in, i.e. $(\exists \state \in T. \state \in \targetsets_i) \implies t \in \targetsets_i$ for all target sets $\targetsets_i \in \targetsets$.
	Let $\Av(t) = \exits(T)$.
	Since $t$ can randomize between any set of actions that any of the states in $T_\Box$ can choose, $\forall s \in T_\Box: \val(t) \supseteq \val(s)$. 
	Moreover, $\forall s \in T_\circ,\ \exists s' \in T_\Box:\ \val(s) \subseteq \val(s')$.
	If this was not the case, it means that there exists some Minimizer state $s_\circ$ that has a value greater than all Maximizer states. 
	Since $T$ is an $\EC$, $s_\circ$ has an action $a_\circ$ whose successors are all in the $\EC$. 
	This implies that $\val(s_\circ, a_\circ)$ cannot be greater that $\val(s')$ for all $s' \in T$.
	
	Using Lemma \ref{lem:casstate} and the fact that $\ub$ is a correct upper bound, we get that $\exit[\ub](T) = \exit[\ub](\set{t}) \supseteq \exit[\val](\set{t}) = \val(t)$.
	Combining this with the previous argument yields $\forall s~\in~T:\ \exit[\ub](T) \supseteq \val(s)$.
	
\end{proof}

\begin{lemma}[$\doSECs$ is sound]\label{lem:do-secs-sound}
	For correct upper and lower bound functions $\ub$ and $\lb$ with $\lb(\state) \subseteq \val(\state) \subseteq \ub(\state)$, for each $\state \in \states$, it holds that $\ub' = \doSECs(\G, \lb,\ub)$  is still correct, i.e. $\val(\state) \subseteq \ub'(\state)$ for all $\state \in \states$.
\end{lemma}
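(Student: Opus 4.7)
The plan is to fix a state $\state$ and show that, for every region $R$ produced by $\GETREGIONS$ inside the MEC containing $\state$, the contribution made to $\ub'(\state)$ in the inner loop of Algorithm~\ref{alg:doSECs} already contains $\val(\state)\cap R$. Since $\GETREGIONS$ returns a partition of the entire direction space $\directions$, taking the union over $R\in\regions$ then yields $\val(\state)\subseteq \ub'(\state)$. States outside every MEC are handled separately (under the natural convention that such states keep their old bound, i.e. are treated as trivial singleton MECs); the argument for them is that the update degenerates to the identity $\ub'(\state) = \bigcup_{R\in\regions}(\ub(\state)\cap R) = \ub(\state)$, which already over-approximates $\val(\state)$ by assumption.

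For the main case, fix a MEC $T\ni \state$, put $\regions\eqdef \GETREGIONS(T,\lb)$ and consider any $R\in\regions$. The pseudocode distinguishes two sub-cases. In sub-case (A), $\state$ lies in some $C\in \FINDSECS(T,\lb,R)$ and the new contribution is $\ub(\state)\cap \exit[\ub](C)\cap R$. Since $C$ is produced as a MEC of the sub-game obtained by pruning some of Minimizer's actions in $T$, $C$ is in particular an EC of $\G$ containing $\state$, so Lemma~\ref{lem:casset} applies and gives $\val(\state)\subseteq \exit[\ub](C)$; combined with the input hypothesis $\val(\state)\subseteq \ub(\state)$, this yields
\[
\val(\state)\cap R \;\subseteq\; \ub(\state)\cap \exit[\ub](C)\cap R.
\]
In sub-case (B), $\state$ is in no candidate SEC and the new contribution is $\ub(\state)\cap R$, which by hypothesis already contains $\val(\state)\cap R$. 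In both sub-cases the iteration for $R$ puts $\val(\state)\cap R$ into $\ub'(\state)$.

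Because $\regions$ is a partition of $\directions$, every nonzero vector $\vec v$ lies in exactly one $R\in\regions$ (viewing regions as sets of points as in Section~\ref{sec:defs}), and $\vec 0$ belongs to $\ub'(\state)$ trivially from the initialization. Hence
\[
\val(\state) \;=\; \{\vec 0\} \cup \bigcup_{R\in\regions}(\val(\state)\cap R) \;\subseteq\; \ub'(\state),
\]
which is the required soundness statement.

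The main obstacle is not the core inclusion, which is essentially a one-line appeal to Lemma~\ref{lem:casset}, but the geometric bookkeeping: one must verify (i) that $\FINDSECS$ indeed returns actual ECs of $\G$ so that Lemma~\ref{lem:casset} can be invoked, (ii) that $\GETREGIONS$ produces a partition of $\directions$ whose union (together with $\vec 0$) covers every vector in $\val(\state)\subseteq[0,1]^n$, and (iii) that the corner case of states not belonging to any MEC is covered by the intended reading of the pseudocode (treating them as trivial singleton MECs, where both $\FINDSECS$ is empty and the update reduces to $\ub(\state)$). Once these bookkeeping steps are dispatched, the proof consists of the two-case analysis above.
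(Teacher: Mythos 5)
Your proof is correct and follows essentially the same route as the paper's: a region-by-region decomposition justified by the partition property of $\GETREGIONS$, a two-case analysis on Lines~\ref{line:deflate} and \ref{line:keepEstimate}, and an appeal to Lemma~\ref{lem:casset} (applied to the candidate SEC $C$, which is indeed an EC of $\G$) for the deflate case. Your extra bookkeeping about states outside MECs and the vector $\vec 0$ is a harmless elaboration of what the paper leaves implicit.
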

\begin{proof}
	The new over-approximation $\ub'$ is constructed region by region, and for each region the update is either performed by applying Line \ref{line:deflate} or Line \ref{line:keepEstimate} of $\doSECs$.
	We will argue for both the lines that after applying the update, $\ub'$ is still an over-approximation in this region.
	Then, using the additional fact that by Lemma \ref{lem:getRegionsConsistent}, the disjoint union of the set of regions $\doSECs$ considers covers the whole space of directions, we get that $\ub'$ is an over-approximation in every direction.
	
	If we apply Line \ref{line:keepEstimate}, the previous $\ub'$ only contains points in other regions; the intersection with $R$ ensures that this update affects only the current region.
	Thus, the only remaining term is $\ub(s)$, and clearly $\ub(s) \supseteq \ub(s)$.
	
	If we apply Line \ref{line:deflate}, the argument is the same except for the term $\exit[\ub](C)$. However, we know that $\state \in C$, and hence by Lemma \ref{lem:casset} we have that $\ub'(s) \supseteq \exit[\ub](C) \supseteq \val(\state)$.
\end{proof}


	

\subsection{Proof of Lemma \ref{lem:fixpoint}}\label{app:fixpoint}

\begin{proof}
	By the previous considerations we have $\mop(\lb<\infty>,\ub<\infty>) = \mop(\val,\ub<\infty>) = (\bellman(\val),\ub') = (\val,\ub')$ for some $\ub'$. Thus it only remains to show that $\ub<\infty> = \ub'$. If $\ub<\infty> \neq \ub'$, then when executing $\mop(\val,\ub<\infty>)$, a \rSEC{} $T$ for some region $R$ is found and deflated on this region. The over-approximation is thereby decreased, i.e. $\ub'(s)[\dir] = \exit[\ub<\infty>](T)(s)[\dir] < \ub<\infty>(s)[\dir]$ for some $s \in T$ and $\dir \in R$. However, already during the iterations $\mop(\lb<0>,\ub<0>), \mop(\lb<1>,\ub<1>),\ldots$ the \rSEC{} $T$ is detected infinitely often because eventually, the under-approximations $\lb<i>$ are sufficiently close to the true sets $\val$ of achievable points.
	Thus for infinitely many $j\geq 0$, the algorithm sets $\ub<j>(s)[\dir] = \exit[\ub<j>](T)(s)[\dir]$. But this implies that $\ub<\infty>(s)[\dir] = \exit[\ub<\infty>](T)(s)[\dir]$, contradiction.
	In special cases, it may happen that only a sub-EC $T' \subseteq T$ is detected infinitely often, e.g. if there are multiple optimal Minimizer-actions whose lower approximations converge at different rates. However, this is not a problem because the states in $T \setminus T'$ will adjust their values correspondingly via the standard Bellman-update $\bellman$.
\end{proof}

\subsection{Proof of Step \ref{step:fact2} of Proposition \ref{prop:conv_above}}\label{app:fact2}

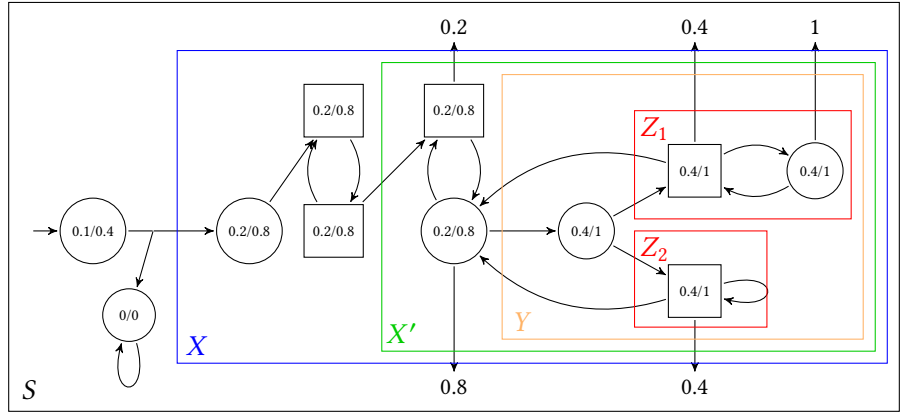
\begin{figure*}[t]

\centering
\begin{tikzpicture}[scale=1.6,font=\tiny]

\drawdummy (init) at (-0.5,0) {};
\drawdummy (prob) at (0.5,0) {};
\drawcirc (0) at (0,0) {0.1/0.4};
\drawcirc (1) at (1.3,0) {0.2/0.8};
\drawbox (2) at (2,1) {0.2/0.8};
\drawbox (3) at (2,0) {0.2/0.8};
\drawbox (4) at (3,1) {0.2/0.8};
\drawcirc (5) at (3,0) {0.2/0.8};
\drawcirc (6) at (4.1,0) {0.4/1};
\drawcirc (7u) at (6,0.5) {0.4/1};
\drawbox (8u) at (5,0.5) {0.4/1};
\drawbox (7d) at (5,-0.5) {0.4/1};

\drawcirc (E1) at (0.3,-0.7) {0/0};
\drawnum (E2) at (3,1.7) {\small 0.2};
\drawnum (E8) at (3,-1.3) {\small 0.8};
\drawnum (E1u) at (6,1.7) {\small 1};
\drawnum (E4u) at (5,1.7) {\small 0.4};
\drawnum (E4d) at (5,-1.3) {\small 0.4};

\draw[->] (init) to (0);
\draw[-] (0) to (prob);
\draw[->] (prob) to (E1);
\draw[->] (prob) to (1);
\draw[->] (1) to (2);
\draw[->] (2) to[bend left] (3);
\draw[->] (3) to[bend left] (2);
\draw[->] (3) to (4);
\draw[->] (4) to (E2);
\draw[->] (5) to (E8);
\draw[->] (4) to[bend left] (5);
\draw[->] (5) to[bend left] (4);
\draw[->] (5) to (6);
\draw[->] (6) to (8u);
\draw[->] (6) to (7d);
\draw[->] (7d) to[bend left] (5);
\draw[->] (7d) to[loop right] (7d);
\draw[->] (7d) to (E4d);
\draw[->] (7u) to[bend left] (8u);
\draw[->] (8u) to[bend left] (7u);
\draw[->] (7u) to (E1u);
\draw[->] (8u) to (E4u);
\draw[->] (8u) to[bend right] (5);
\draw[->] (E1) to[loop below] (E1);

\draw [black] (-0.7,-1.5) rectangle (6.7,1.9);
\node (S) at (-0.53,-1.35) {\large $\states$};
\draw [blue] (0.7,-1.1) rectangle (6.6,1.5);
\node (X) at (0.87,-0.95) {\large \textcolor{blue}{$X$}};
\draw [green!80!black] (2.4,-1) rectangle (6.5,1.4);
\node (X') at (2.57,-0.85) {\large \textcolor{green!80!black}{$X'$}};
\draw [orange!60] (3.4,-0.9) rectangle (6.4,1.3);
\node (Y) at (3.57,-0.75) {\large \textcolor{orange!60}{$Y$}};
\draw [red] (4.5, 0.1) rectangle (6.3,1);
\node (Z1) at (4.67,0.85) {\large \textcolor{red}{$Z_1$}};
\draw [red] (4.5, -0.8) rectangle (5.6,0);
\node (Z2) at (4.67,-0.15) {\large \textcolor{red}{$Z_2$}};

\end{tikzpicture}
\caption{An example of an SG where all the sets in the proof of Step \ref{step:fact2}.
	Numbers on exiting arrows denote $\val$$[\dir]$, similar to the example in Section \ref{sec:example}.
	Every state is inscribed with $\val$$[\dir] / \ub$$[\dir]$, where $\ub$ is an upper bound that has not yet converged, but is a fixpoint of $\bellman$.
}
\label{fig:XYZ}
\end{figure*}

\begin{proof}
	We have the context of the proof of Proposition \ref{prop:conv_above}, in particular we know that $X \subseteq \states$ contains an EC and that for all states $\state \in X: \Delta(\state) = \max_{\state \in \states} \Delta(\state)=:\dif$.
	
	We now need several case distinctions to finally find a regional SEC $Z$, because states can have a large difference just by depending on a SEC.
	See Figure \ref{fig:XYZ} for an example of a SG where all the sets we introduce in the following are different.
	For the sake of clarity we only consider a single-dimension in the SG, as we have fixed a direction.

	\begin{enumerate}
		\item Let $X' \subseteq X$ be a bottom MEC in $X$.
		\paragraph*{Reasoning}
		A bottom MEC is a MEC that has no exits. They are computed by computing the MEC decomposition of $X$, ordering them topologically and picking one at the end of a chain.
		Note that we only require that $X'$ is a bottom MEC in the game restricted to $X$, not a bottom MEC considering all states $\states$.
		$X'$ exists, since there is an EC in $X$, so there also is at least one MEC in $X$.
		\item Let $m = \max_{\state \in X'} \ub(\state)[\dir]$ be the maximal upper bound in $X'$.
		\item Let $Y := \set{\state \mid \state \in X' \wedge \ub<\infty>(\state)[\dir] =m}$ be the states with maximal upper bound in $X'$.
		\item $\forall \state \in Y, \exists \action \in \Av(\state): \neg (\state,\action) \leaves Y$, i.e. all states in $Y$ have actions that stay in $Y$. \label{step:stayY}
		\paragraph*{Reasoning}
		We prove this by a case distinction over where the actions of $Y$ can exit to. Let $\state \in Y$ be an arbitrary state. 
		There has to be some convex combinations of actions that it achieves $m$ in direction $\dir$.
		We say that an action $\action<m>$ exits $X$ towards a set of states $T$, if $(\state,\action<m>) \leaves~X$ and some successor of the action is in $T$.
		\begin{itemize}
			\item We cannot put weight on an action that exits towards $\states \setminus X$.
			Otherwise, $\state$ would ``depend on the outside'' and we get a contradiction by Step \ref{step:not_depend_outside} of the proof of Proposition \ref{prop:conv_above}.
			\item We cannot put weight on an action that exits towards $X \setminus X'$, because $X'$ is a bottom MEC in $X$. 
			If an action left towards some state $t \in X \setminus X'$, then from $t$ there would be no reachable EC in $X$.
			Thus from $t$ we eventually have to exit $X$, as the play cannot remain in a transient part. This is a contradiction, as then some state after $t$ has to ``depend on the outside''.
			\item We cannot put weight on an action that exits towards $X' \setminus Y$, as by definition of $Y$ all states $\state' \in X' \setminus Y$ have $\ub<\infty>(\state')[\dir] < m$, and thus would get a smaller number.
			\item The only remaining possibility is that we only put weight on actions that stay in $Y$. Thus, every state needs to have at least one action that stays in $Y$.
		\end{itemize}
		\item Let $Z$ be a bottom MEC in $Y$.
		\paragraph*{Reasoning}
		This works as when finding $X'$ in $X$. $Z$ exists, since by the previous step all states in $Y$ have actions staying in $Y$, and hence there has to be an EC.
		\item For all states $\state \in Z: \val(\state)[\dir] = m - \dif$.
		\paragraph*{Reasoning}
		Since $Z \subseteq Y$, $\ub<\infty>(\state)[\dir] = m$ and
		since $Y \subseteq X$, $\Delta(\state) = \dif$.
		We get the following chain of equations:
		$\dif = \Delta(\state) = \ub<\infty>(\state)[\dir] - \val(\state)[\dir] = m - \val(\state)[\dir]$.
		Reordering yields the statement.
		\item Thus, $Z$ is an $\SEC$ for region $\set{\dir}$.
		\paragraph*{Reasoning}
		All states in $Z$ have the same value in this direction, and there has to be an exit.
		So some state can take an exit. All states need to be able to take the same convex combination of all exits, because if Minimizer was able to restrict Maximizer from doing so, the states would have different values.
		\item When applying $\mop$ once more, $Z \in \mathcal{S}$ in Line \ref{line:find} of Algorithm \ref{alg:doSECs}. \label{step:isFound}
		\paragraph*{Reasoning}
		Some $X'' \in MEC(\G)$ with $X'' \supseteq X'$, by definition of $X'$.
		So $\mathcal{R} \gets \GETREGIONS(X'',\lb<\infty>)$ is executed. 
		Since $\biguplus_{R \in \mathcal{R}} R = \directions$ by Lemma \ref{lem:getRegionsConsistent}, there is some $R \in \mathcal{R}$ with $\dir \in R$.
		Also by that Lemma we have that the relative order of exits for all directions in $R$ is the same, and since it was called with $\lb<\infty>$, it is correct. 
		Thus, we can apply Lemma \ref{lem:findSECSsoundAndCorr}, which proves the statement.
		\item $\exit[\ub<\infty>](Z)[\dir] < m$ \label{step:smaller}
		\paragraph*{Reasoning}
		$\exit[\ub<\infty>](Z)[\dir]$ must put positive weight on some exit of $Z$. 
		If it puts weight on an action leaving $X$, it ``depends on the outside'' and by Step \ref{step:not_depend_outside} of the proof of Proposition \ref{prop:conv_above}, $\exit[\ub<\infty>](Z)[\dir] < m$.
		The only other possible exit is to $X' \setminus Y$ because $Z$ is a bottom MEC in $Y$.
		For all states $\state' \in X' \setminus Y$, it holds that $\ub(\state')[\dir] < m$.
		If $\exit[\ub<\infty>](Z)[\dir]$ is constructed from a convex combination of exits only to $X' \setminus Y$, then also $\exit[\ub<\infty>](Z)[\dir]< m$.
		\item $\forall \state \in Z: \mop(\ub<\infty>)(\state)[\dir] = \exit[\ub<\infty>](Z)[\dir]$
		\paragraph*{Reasoning}
		Let $\state \in Z$.
		The upper bound is modified by Line \ref{line:deflate}.
		Since $\dir \in R$ by how the algorithm found $Z$ (Step \ref{step:isFound}) and since $\exit[\ub<\infty>](Z)[\dir] < m = \ub(\state)[\dir]$, the new upper bound is exactly $\exit[\ub<\infty>](Z)[\dir]$ for each $\state \in Z$.
		\item Thus, by combining the previous two steps, we finally arrive at a contradiction (to the fixpoint property of $\ub<\infty>$, Lemma \ref{lem:fixpoint}), 
		since $\forall \state \in Z: \mop(\ub<\infty>)(\state)[\dir] = \exit[\ub<\infty>](Z)[\dir] < \ub<\infty>(\state)[\dir]$
		
	\end{enumerate}
\end{proof}
}
{}

\end{document}